\newif\iffull
\newif\ifnotfull
\newcommand{\algname}{\texttt{GD}\xspace}
\let\oldnl\nl
\newcommand{\nonl}{\renewcommand{\nl}{\let\nl\oldnl}}
\newcommand{\gtodo}[1]{\todo[inline]{\small GY: #1}}
\newcommand{\dtodo}[1]{\todo[inline,backgroundcolor=green!10!white]{\small DA: #1}}
\newcommand{\hl}[1]{{#1}}
\newtheorem{theorem}{Theorem}[section]
\newtheorem{lemma}[theorem]{Lemma}
\newtheorem{proposition}[theorem]{Proposition}
\newtheorem{corollary}[theorem]{Corollary}
\newtheorem{definition}{Definition}[section]
\DeclareMathOperator*{\argmin}{\arg\!\min}
\renewcommand{\qed}{\nobreak \ifvmode \relax \else
	\ifdim\lastskip<1.5em \hskip-\lastskip
	\hskip1.5em plus0em minus0.5em \fi \nobreak
	\vrule height0.75em width0.5em depth0.25em\fi}
\newcommand{\eps}{\ensuremath{\varepsilon}}
\newcommand{\vx}{\mathbf x}
\newcommand{\vy}{\mathbf y}
\newcommand{\ve}{\mathbf e}
\newcommand{\vz}{\mathbf z}
\newcommand{\vlam}{\boldsymbol {\mathbf \lambda}}
\newcommand{\Oh}{{\ensuremath{\mathcal{O}}}}
\newcommand{\prob}[1]{\text{\textsc{#1}}\xspace}
\newcommand{\w}[1]{w^{(#1)}}
\newcommand{\wj}{\w j}
\newcommand{\wk}{\w k}
\newcommand{\h}[1]{h^{(#1)}}
\newcommand{\hj}{\h j}
\newcommand{\muj}{\mu^{(j)}}
\newcommand{\epsdef}{\eps}
\DeclareMathOperator*{\avg}{avg}
\newcommand{\eqbox}[1] {
\begin{tcolorbox}[
    standard jigsaw,
    opacityback=0,  
]
#1
\end{tcolorbox}
}
\title{Multi-Dimensional Balanced Graph Partitioning via Projected Gradient Descent\ifnotfull\titlenote{Full version: \url{http://arxiv.org/abs/1902.03522}}\fi}
\author{
\alignauthor
  Dmitrii Avdiukhin\\
  \affaddr{Indiana University}\\
  \affaddr{Bloomington, IN}\\
  \email{davdyukh@iu.edu} \\
\alignauthor
  Sergey Pupyrev\\
  \affaddr{Facebook}\\
  \affaddr{Menlo Park, CA}\\
  \email{spupyrev@gmail.com}
\alignauthor
  Grigory Yaroslavtsev\\
  \affaddr{Indiana University}\\
  \affaddr{Bloomington, IN}\\
  \email{grigory@grigory.us}
}
\begin{document}

\iffull
	\newcommand{\llabel}[1]{}
	\newcommand{\lref}[1]{}
\else
	\newcommand{\llabel}[1]{\hypertarget{llineno:#1}{\linelabel{#1}}}
	\newcommand{\lref}[1]{\hyperlink{llineno:#1}{\ref*{#1}}}
\fi
\newcommand{\rlabel}[2]{\llabel{#1_range_begin}#2\llabel{#1_range_end}\xspace}
\newcommand{\rref}[1]{\lref{#1_range_begin}-\lref{#1_range_end}\xspace}

\ifnotfull

\section*{Response to Reviewer Comments}

We would like to thank all reviewers for their insightful comments, which were very helpful for improving the presentation and expanding our experimental results and their discussion. We've fixed all typos and made extensive changes in this revision. Please, find our response below. 

\subsection*{Reviewer 1}
1. \textbf{Comment:} ``I do not find the definition of the multi-dimensional graph partitioning problem clear enough''.

\textbf{Response:} Thanks, we made extensive changes to the definition, giving several examples of different weight functions which can be used. Please, see lines~\rref{line:R1.1} 

\vspace{3pt}

2.  \textbf{Comment:} ``<...>there are several references to an extended version of the paper that is not included anywhere''; ``W1. The paper references an extended version several times. However, no such reference is included. This makes the paper not self-contained. Is this a simple mistake or is it done on purpose?''
    
\textbf{Response:} Thanks, this was a mistake. Link to the full version is now included into the submission: \url{http://arxiv.org/abs/1902.03522}.

\vspace{3pt}
    
3. \textbf{Comment:} ``W2. I find the paper not well-written and hard to follow. I think that the theory can be explained much better.''
    
\textbf{Response:} Thanks, the presentation was indeed a bit dense. We made extensive changes to improve presentation and specifically address your comment D1.
\vspace{3pt}
    
4. \textbf{Comment:} ``W2. <...>  The paper talks about multi-dimensional partitioning, however, it seems to focus exclusively on 2-d partitioning. Specifically, vertex+edge is the only 2-d partitioning considered. This relationship is actually not that easy to spot.''
    
\textbf{Response:} This is a great point. We have updated the paper to include experiments for $d = 3$ and $d = 4$. These experiments include balance on PageRank of vertices (as a proxy for importance) and on sums of degrees over neighbors. Unfortunately, due to space limitations these new experiments don't fit into the submission so we included them: (a) into the full version: \url{http://arxiv.org/abs/1902.03522}; (b) in a separate Section~\ref{sec:multd-experiments} in our response below.

\vspace{3pt}

5. \textbf{Comment:} ``D1. The transition from the MDBGP problem to the optimization formulation is not clear and not explained. There is no sum in the definition whereas a sum appears in the constraint of the optimization problem. The subsequent transition to Algorithm 1 which solves the optimization with gradient descent is also unclear.''

\textbf{Response:} We improved the presentation to clarify the relationship between MDBGP and its optimization formulation, see lines~\rref{line:R1.5}. To clarify the transition to Algorithm 1 we've prefaced the algorithm with an overview of each of its key steps in Section~\ref{sec:overview}.

\vspace{3pt}    

6. \textbf{Comment:} ``D2. I do not quite understand why three projections are considered in the paper. They do not seem to be compared against each other either theoretically nor experimentally.''

\textbf{Response:} Thanks, to clarify the relationship between different projection methods:
\begin{enumerate}
    \item We included a new Table~\ref{table:proj-properties} which compares theoretical properties of the three projection methods used. 
    \item We did, in fact, report an experimental comparison between the three methods in Figure~\ref{fig:proj} (and using different parameters for the exact projection). It might have been unclear, though, why we didn't plot Dykstra's projection performance. We've updated the caption to this figure to make it more clear why we didn't plot Dykstra's projection performance (same as exact so would be hard to see on the plot).
\end{enumerate}

\subsection*{Reviewer 2}

    1. \textbf{Comment:} ``W1. The authors omit a discussion and comparison against METIS's multi-constraint partitioning (which can at least be used for several billion-edge graphs).''

    \textbf{Response:} This is a great point, we've added an experimental comparison with METIS. See more in our response to D1 and D2.
    \vspace{3pt}
    
    2. \textbf{Comment:} ``W2. In my interpretation, the paper confirms that for Pregel-like systems, graph partitioning is not critical (based on improvements over the baseline Hash partitioning).''

    \textbf{Response:} Thanks, while the $10-30\%$ improvement that we show in the paper might seem small, we believe that specifically in critical large-scale computations which have to be performed frequently even minor improvements can lead to substantial reductions in costs and resource usage. See our response to D3 for more details.
    \vspace{3pt}
    
    3. \textbf{Comment:} ``W3. Several discussions in experiments are a bit shallow and need more extended discussions.''

    \textbf{Response:} Thank you, some experimental results indeed lacked good explanations. We have substantially revised all discussions of the experiments. We also addressed specifically your questions D5, D6 and D8.
    \vspace{3pt}
    
    4. \textbf{Comment:} ``D1: 1.2: There is at least 2 decades of work on multi-constraint graph partitioning. The one work that I'm aware of is by the METIS group (Multilevel Algorithms for Multi-Constraint Graph Partitioning by Karypis et al.) and I was able to find several others with a quick web search (e.g. Parallel static and dynamic multi-constraint graph partitioning, Parallel Multilevel Algorithms for Multi-constraint Graph Partitioning, PULP: Scalable Multi-Objective Multi-Constraint Partitioning for Small-World Networks). Despite citing Karypis paper, the authors' still say "To the best of our knowledge, MDBGP has not been studied before". Either I'm missing some subtle difference between the problems or this claim seems wrong.''
    
    \textbf{Response:} Thanks, this is a great point. We were indeed aware of the previous work by the METIS group and cited them in the introduction. However, due to some unfortunate editing error this didn't make it into our ``Previous work'' section. We've updated the previous work section with lines~\rref{line:R2.5-1} and~\rref{line:R2.5-2}. Our corrected claims are that:
    \begin{enumerate}
        \item Previous work on balanced graph partitioning (even one-dimensional) hasn't used gradient descent on a continuous relaxation.
        \item The literature on the multi-dimensional version is rather sparse (we've extended the list of references as per your suggestion) and the main tool available at the moment is METIS.
    \end{enumerate}
    \vspace{3pt}
    5. \label{comment:metis} \textbf{Comment:} ``D2: The METIS software can allow one to express with multiple constraints on graphs with about 10B edges on a single machine with a lot of RAM (I expect a TB to be enough). I think it should be possible to implement a distributed version of METIS to partition graphs with up to 100B edges too. So the authors should compare their algorithm against multi-constraint METIS. Both on the 2 constraint case as well as the >2-constraint experiments (for which the current paper currently has no baselines).''
    
    \textbf{Response:} 
    Great point. Unfortunately, we couldn't make any of the publicly available parallel/distributed version of METIS (ParMETIS and MT-METIS, most recently released in 2013) run on our largest graphs due to its high memory utilization. In the paper the largest graph we report is FB-800B, which has 800B edges. We don't believe that any version of METIS can handle this in a reasonable time. It has also been observed previously in the literature that METIS doesn't scale well to large graphs even for $d = 1$ and hence every large company has developed its own set of large-scale graph partitioning tools (see e.g. the paper from Google~\cite{ABM16} and discussion in an earlier paper from Facebook on Spinner~\cite{MLLS17}). Note that the advantage we are getting in speed over tools like METIS is not only due to our own algorithm's efficient implementation but also due to the highly optimized Giraph platform (publicly available) that it is implemented on. Since METIS isn't implemented in Giraph it can't take advantage of it.
    
    
    We included experiments comparing our algorithm with multi-constraint METIS on graphs on which we could make it run in reasonable time. The experiments show that METIS achieves much worse balance than GD for partitioning under multiple constraints. Furthermore, in almost all cases cases our algorithm outperforms METIS with respect to other important performance metrics, such as cut size, memory usage and time. Most importantly, this effect becomes most pronounced as the size of data and the number of dimensions grows (e.g. our performance is best for all parameters, on the largest dataset orkut for $d = 3$ and $d = 4$), see Table~\ref{table:gd_vs_metis}.
    Unfortunately, due to space limitations these new experiments don't fit into the submission so we included them: (a) into the full version: \url{http://arxiv.org/abs/1902.03522}; (b) in a separate Section~\ref{sec:multd-experiments} in our response below.
     \vspace{3pt}
  
    6. \textbf{Comment:} ``D3: Importance of partitioning: I am under the impression and the modest runtime improvement results in this paper corroborate this impression that Hash is, despite being the simplest partitioning algorithms, quite efficient. That is because it at least gets close to perfect and vertex and edge imbalance despite not localizing the edges. A discussion on this could be good (and perhaps add this as a limitation of the current work (and in general any work on partitioning)). ''
    
    \textbf{Response:} This is a great question. We have two figures which illustrate the improvements:
    \begin{enumerate}
    \item Inf Figure~\ref{fig:hist} we show performance improvements for PageRank ($\approx 25\%$ compared to Hash).
    \item We also show performance improvements for other typical Giraph jobs (Mutual Friends, Connected Components, Hypergraph Clustering) in Figure~\ref{fig:digraph}, where vertex+edge balancing leads to a $10\%-30\%$ speedup.
    \end{enumerate}
    These numbers may seem relatively small, but they mean that we are able to handle the same workload with approximately $10\%-30\%$ less machines and electricity.
    For critical workloads in a large software company this can lead to very substantial savings.
\vspace{3pt}    
    
   7. \textbf{Comment:} ``D5: Any insights into why GD outperforms other techniques more on the FB graph and not on the other graphs. ''
    
    \textbf{Response:} This is a really great question. We assume that you are referring to our observation in lines~\rref{line:R2.7-reviewer} that GD achieves a bigger advantage over BLP on the FB graphs than on other graphs. The main difference between FB graphs and publicly available graphs is size. So the main reason for this advantage is poor performance of exsting local-search based methods (like BLP) on large graphs in the multi-dimensional case. This is most obvious in Figure~\ref{fig:quality_large} for $k = 128$ as one can see that GD is gaining a larger advantage over BLP as the size of the graph grows (3B $\rightarrow$ 80B $\rightarrow$ 400B).
    We also added this discussion in the revision in lines~\rref{line:R2.7}. 
        \vspace{3pt}

    8. \textbf{Comment:} ``D6: Section 4.1: Why do the authors report numbers on Orkut only for the d > 2 experiments and GD and not the others? Because of this limitation it raises the question: was GD not working well for d > 2 on datasets other than Orkut?''
    
    \textbf{Response:} Thanks, this is a great point. We addressed this together with D2 by including a new set of experiments which compares GD against METIS on three datasets from SNAP (LiveJournal, orkut, sx-stackoverflow). These experiments are shown in Table~\ref{table:gd_vs_metis}. Please, see our response to D2 for a discussion of these new results. 
        \vspace{3pt}

    9. \textbf{Comment:} ``D7: Overview of Dykstra's method is not given. To make the paper self-contained, I think the authors should give the necessary background on this method.''
    
    \textbf{Response:} Thanks, Dykstra's projection algorithm is rather standard. Given the space constraints in the revision we included a link to the corresponding Wikipedia article: \url{https://en.wikipedia.org/wiki/Dykstra%27s_projection_algorithm}.
        \vspace{3pt}

    10. \textbf{Comment:} ``D8: Interpretation of the results: I was not satisfied with the experiment on the different projection methods: The authors conclude with this: "The results show that the exact projection algorithm performs the best if sufficiently large imbalance is allowed, but the alternating projections algorithm can be often used to achieve similar performance." But this is not interesting and does not seem to teach the community much. The interesting question is why? Without any insights into the behavior of these algorithms, I find the experiments unsatisfying.''
    
    \textbf{Response:}  This is a good question. We've clarified the claim slightly: "The results show that the exact projection algorithm with sufficiently large allowed imbalance leads to the best performance. However, the alternating projections algorithm can be often used to achieve similar performance."
    
    Larger allowed imbalance leads to more flexibility in the search space of solutions, which allows our algorithm to better optimize edge locality. Despite not computing the exact projection alternating projection often computes a point which is close enough to it.
    For updated discussion in the paper please see lines~\rref{line:R2.10}.
        \vspace{3pt}

    11. \textbf{Comment:} ``However, edge-based graph partitioning can still result in performance regressions [2,37]. => Can you elaborate on why this happens, that is the shortcomings of edge partitioning?''
    
    \textbf{Response:} This is a good question. We had some discussion of why edge-based partitioning alone is not enough in the next paragraph (lines~\rref{line:R2.11} of the submission). The intuition is that some part of the cost of running a job on each worker depends on the number of its local vertices rather than edges. For example, for PageRank the vertex-based cost is associated with serialization. Please, let us know what other explanation might help.

\subsection*{Reviewer 3}

    1. \textbf{Comment:} ``W1. The motivation of the research problem can be well discussed in the introduction.''
    
    \textbf{Response:} Thanks, we've made multiple edits to address this. Please, see more detailed response to your comment D1. 
    
    \vspace{3pt}
    
    2. \textbf{Comment:} ``W2. Some theoretical analysis about the algorithm accuracy can be included in the paper.'' 
    
    \textbf{Response:} Good point. Unfortunately, this problem is NP-hard, and is hard to approximate, please see our response to D2 for more details. 
    
    \vspace{3pt}
    
    3. \textbf{Comment:} ``W3. Various type of datasets should be reported in the experiment.''
    
    \textbf{Response:} Thank you for the great suggestion. We added experiments on the additional dataset, please see our response to D3.
    
    \vspace{3pt}

    4. \textbf{Comment:} ``D1. The paper focuses on the multi-dimensional balanced graph partitioning. More discussion should be given in the introduction about the motivation of the term ``multi-dimensional''. Some examples or applications can be given here.''
    
    \textbf{Response:} Thanks, we have expanded to include more examples and applications of multi-dimensional partitioning in lines~\rref{line:R3.4} 
    
    \vspace{3pt}
    
    5. \textbf{Comment:} ``D2. Can the algorithm return the exact result? Some theoretical analysis about the accuracy of the algorithm can be given.''
    
    \textbf{Response:} Good question, but unfortunately the algorithm can't return an exact or even approximate result even in the 1-dimensional case due to NP-hardness of both exact and approximate balanced graph partitioning. This is why accuracy of all scalable practical algorithms for this problem so far have only been analyzed empirically and all such algorithms were based on various flavors of combinatorial local search. 
    
    \vspace{3pt}
    
    6. \label{comment:dataset} \textbf{Comment:} ``D3. All selected datasets are social networks in the experiment. Some other type of datasets should be included.''
    
    \textbf{Response:} This a great point. 
    There are two main reasons for testing on social networks: 1) they are some of the largest graphs available and used in applications, 2) multidimensional partitioning helps achieve substantial speedups when running typical social network processing jobs, as we illustrate in the paper.
	We also revised the paper to include results on \texttt{sx-stackoverflow}~-- largest SNAP graph which is not a social network.
    Unfortunately, due to space limitations these new experiments don't fit into the submission so we included them: (a) into the full version: \url{http://arxiv.org/abs/1902.03522}; (b) in a separate Section~\ref{sec:extra-data-experiments} in our response below.
    
    \vspace{3pt}
        
   7.  \textbf{Comment:} ``D4. It should be better to report the running time of the proposed algorithm on the real-world datasets compared with the existing methods.''
    
    \textbf{Response:} Thanks, this is a great suggestion. In our experiments all balanced graph partitioning methods which scale to graphs with 800B edges (GD, SHP, Spinner, BLP) had comparable running times (within the same order of magnitude). Given that, we are not sure how meaningful the exact running times are as each algorithm, including GD, allows some room for non-asymptotic optimization. This is why we only report the running time of our algorithm GD in Figure~\ref{fig:proj}.
    
    \vspace{3pt}
    
    8. \textbf{Comment:} ``D5. It should be better to give a running example for the algorithm for the clearance of the paper.''
    
    \textbf{Response:} Thank you, this is a good suggestion. We would have really liked to implement it ourselves, but, unfortunately, due to continuous nature of he algorithm it's hard to give simple examples which illustrate the performance of the algorithm well. Our algorithm performs non-convex optimization in $n$-dimensional space where $n$ is the number of vertices in the graph, which makes it difficult to illustrate.  This is unlike existing combinatorial methods, which can be illustrated easily using small examples. Since our algorithm is based on gradient descent, we believe that convergence behavior, shown in our experiments, is a good compromise.
\dtodo{Show a few iterations on a path graph.}

\ifnotfull
\setcounter{section}{-1}
\fi

\section{Additional experiments}\label{app:more_experiments}

In this section we show experiments for $d > 2$ and compare performance of \algname with METIS.
We also show experiments on dataset \texttt{sx-stackoverflow}~-- the largest SNAP graph which is not a social network.

\subsection{Multi-dimensional experiments}\label{sec:multd-experiments}

We performed experiments for $d = 3$ and $d = 4$ to illustrate the performance of our algorithms in the multi-dimensional case. We remark that our algorithm can handle higher dimensions as well, but public weight data for large enough graphs is hard to find. For these multidimensional experiments in addition to balancing on the number of vertices and edges we also balance based on the following additional vertex weights:
\begin{compactitem}
	\item \emph{Pagerank}. We use Pagerank to model activity level of a node. High Pagerank likely means that the vertex is accessed often, and therefore balancing on Pagerank can be beneficial for load balancing purposes.
	\item \emph{Sum of neighbor degrees}. We also use the sum of degrees over neighbours of a vertex as a weight function. We choose the sum of neighbor degrees as a proxy for the size of the 2-hop neighborhood of a vertex, which is computationally expensive to compute for very large graphs.
\end{compactitem}	
	
\dtodo{We should say that we compare with METIS}
The results are presented in Table~\ref{table:gd_vs_metis}.
They indicate that METIS achieves poor balance for multiple constraints and that \algname outperforms METIS by almost all parameters in most cases (better results shown in bold).
METIS was given allowed imbalance of $0.5\%$.

\begin{table*}[!htb]
	\centering
\begin{tabular}{|l|c|cc|cc|cc|}
	\cline{3-8}
	\multicolumn{1}{c}{} & \multicolumn{1}{c|}{} & \multicolumn{2}{c|}{LiveJournal} & \multicolumn{2}{c|}{orkut} & \multicolumn{2}{c|}{sx-stackoverflow}	\\ \cline{3-8}
	\multicolumn{1}{c}{} & \multicolumn{1}{c|}{} & \algname & METIS & \algname & METIS & \algname & METIS \\ \hline
	\multirow{4}{*}{\shortstack[l]{$d = 2$: balance on vertices \\ and degrees}}
	& Locality, $\%$ & $\mathbf{91.71}$ & $93.74$ & $\mathbf{88.36}$ & $86.52$ & $75.82$ & $\mathbf{80.41}$ \\ \cline{2-8}
	& $\max$ imbalance, $\%$ & $\mathbf{0.04}$ & $0.5$ & $\mathbf{0.02}$ & $0.7$	& $\mathbf{0.04}$ & $0.6$ \\ \cline{2-8}
	& Memory, MB & $\mathbf{2635}$ & $4085$ & $\mathbf{4673}$ & $10259$ & $\mathbf{1587}$ & $4113$ \\ \cline{2-8}
	& Time, s & $117$ & $\mathbf{44}$ & $203$ & $\mathbf{92}$ & $68$ & $\mathbf{55}$ \\ \hline
	\multirow{4}{*}{\shortstack[l]{$d = 3$: balance on vertices, \\ degrees and \\ sum of neighbor degrees}}
	& Locality, $\%$ & $\mathbf{88.74}$ & $73.36$ & $\mathbf{89.55}$ & $62.1$ & $\mathbf{76.8}$ & $60.09$ \\ \cline{2-8}
	& $\max$ imbalance, $\%$ & $\mathbf{0.05}$ & $30$ & $\mathbf{0.02}$ & $1.6$ & $\mathbf{0.1}$ & $6.5$ \\ \cline{2-8}
	& Memory, MB & $\mathbf{2711}$ & $4802$ & $\mathbf{4697}$ & $12271$ & $\mathbf{1627}$ & $4985$ \\ \cline{2-8}
	& Time, s & $140$ & $\mathbf{66}$ & $\mathbf{196}$ & $303$ & $\mathbf{76}$ & $131$ \\ \hline
	\multirow{4}{*}{\shortstack[l]{$d = 4$: balance on vertices, \\ degrees, \\ sum of neighbor degrees \\ and pagerank}}
	& Locality & $\mathbf{87.93}$ & $74.36$ & $\mathbf{75.58}$ & $65.08$ & $77.04$ & $\mathbf{78.54}$ \\ \cline{2-8}
	& $\max$ imbalance, $\%$ & $\mathbf{0.5}$ & $38$ & $\mathbf{2.7}$ & $20$ & $\mathbf{0.4}$ & $3.8$ \\ \cline{2-8}
	& Memory, MB & $\mathbf{2939}$ & $4839$ & $\mathbf{4896}$ & $12294$ & $\mathbf{1754}$ & $5013$ \\ \cline{2-8}
	& Time, s & $227$ & $\mathbf{66}$ & $\mathbf{240}$ & $297$ & $\mathbf{88}$ & $142$ \\ \hline
\end{tabular}
\caption{Comparison of \algname with METIS for multidimensional experiments. The results show that for high-dimensional balanced partitioning METIS can't guarantee balance.
Better results shown in bold. In most cases \algname outperforms METIS by almost in edge locality, imbalance, memory usage and/or time.}
\label{table:gd_vs_metis}
\end{table*}

\subsection{Experiments on Q\&A data}\label{sec:extra-data-experiments}

In this section we present experimental results on SNAP graph \texttt{sx-stackoverflow}, containing $2\,601\,977$ vertices and $28\,183\,518$ edges after removing duplicate edges.
Unlike other graphs presented in this paper, this one is not a social network.
The experiments show that performance of \algname on this graph is similar to other social network graphs included in the paper.

\begin{figure}[!htb]
	\centering
	\begin{subfigure}[b]{0.23\textwidth}
		\centering
		\includegraphics[width=\textwidth]{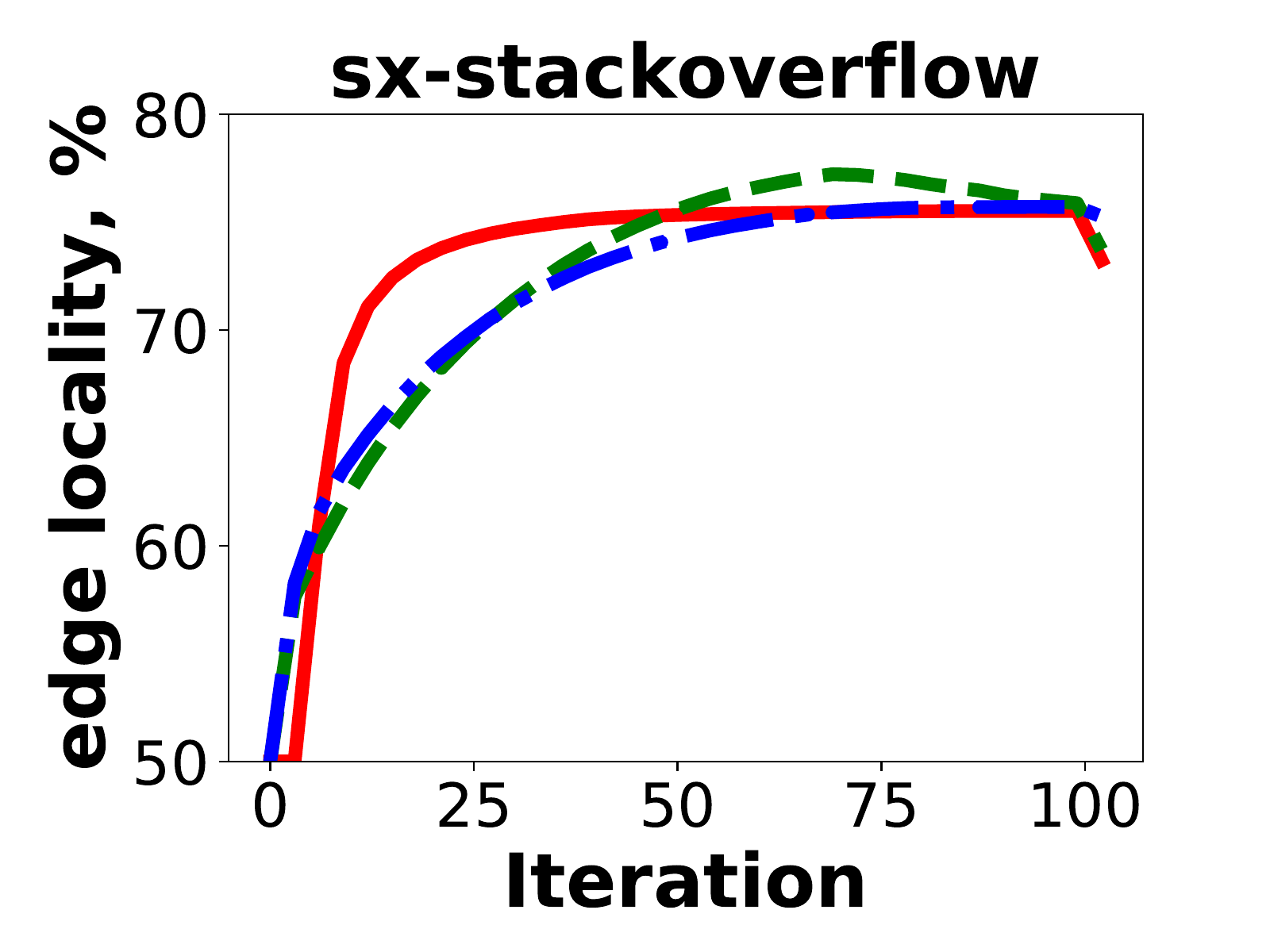}
	\end{subfigure}
	\begin{subfigure}[b]{0.23\textwidth}
		\centering
		\includegraphics[width=\textwidth]{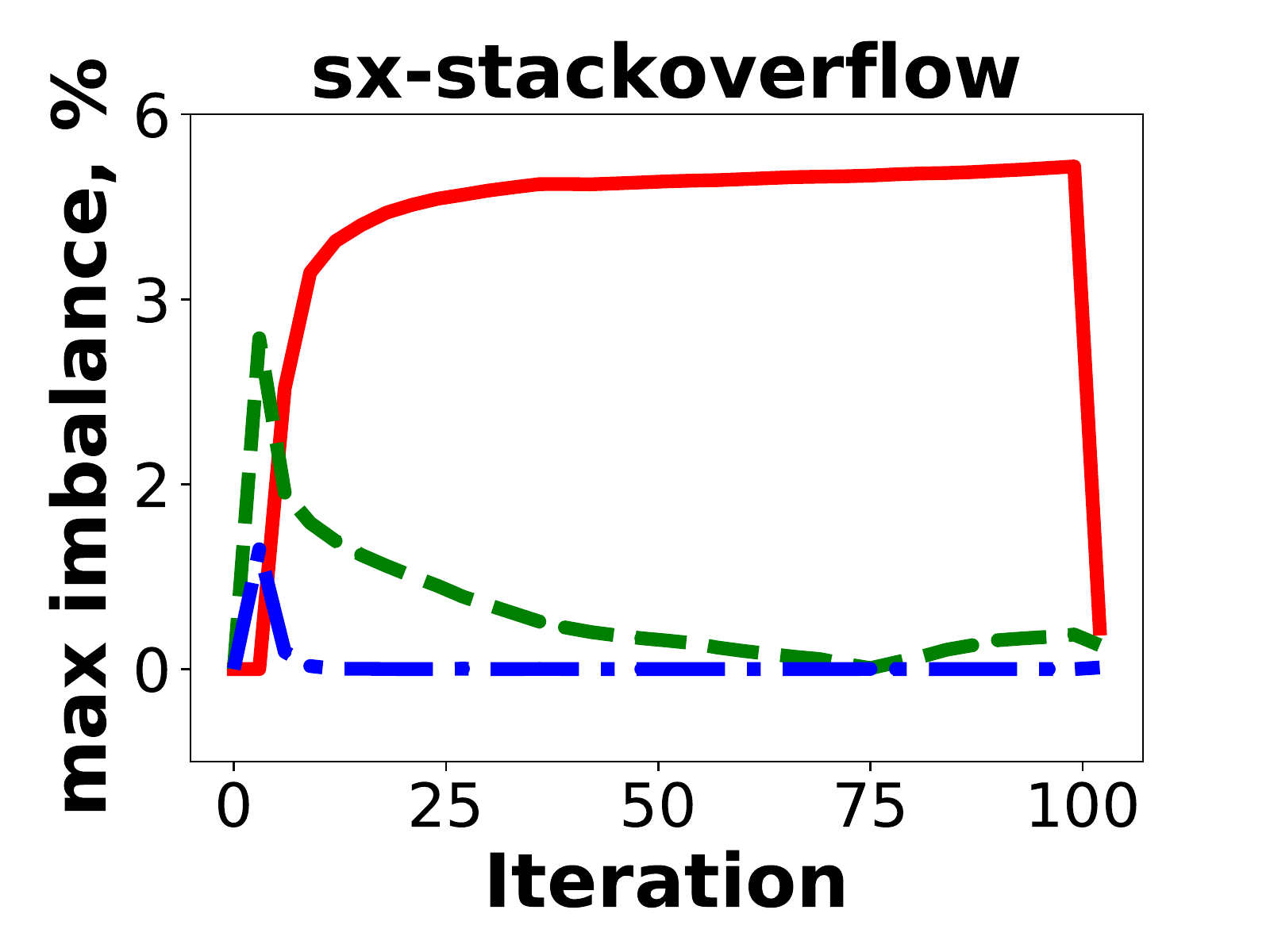}
	\end{subfigure}
	\begin{subfigure}[b]{0.23\textwidth}
		\centering
		\includegraphics[width=\textwidth]{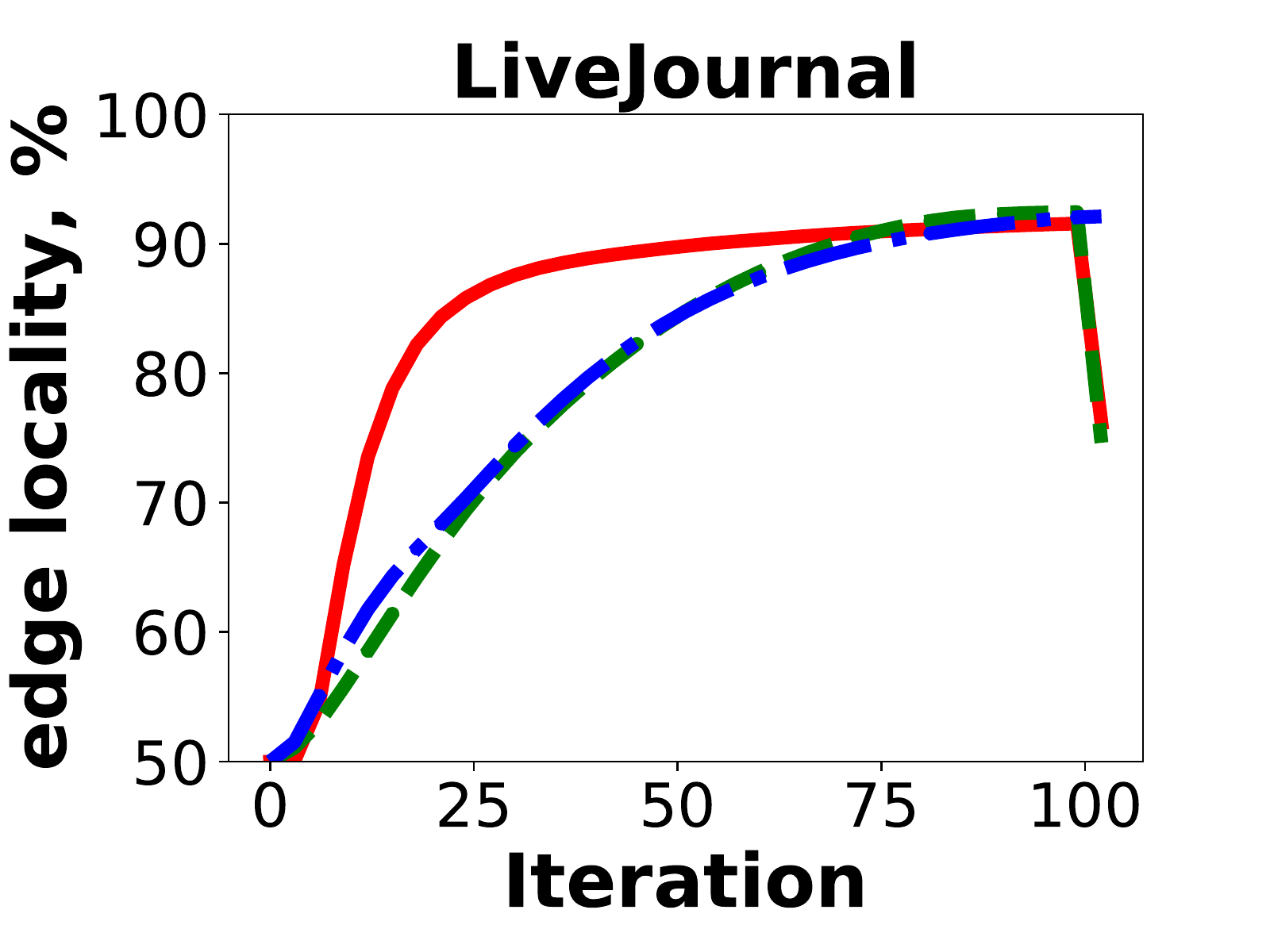}
	\end{subfigure}
	\begin{subfigure}[b]{0.23\textwidth}
		\centering
		\includegraphics[width=\textwidth]{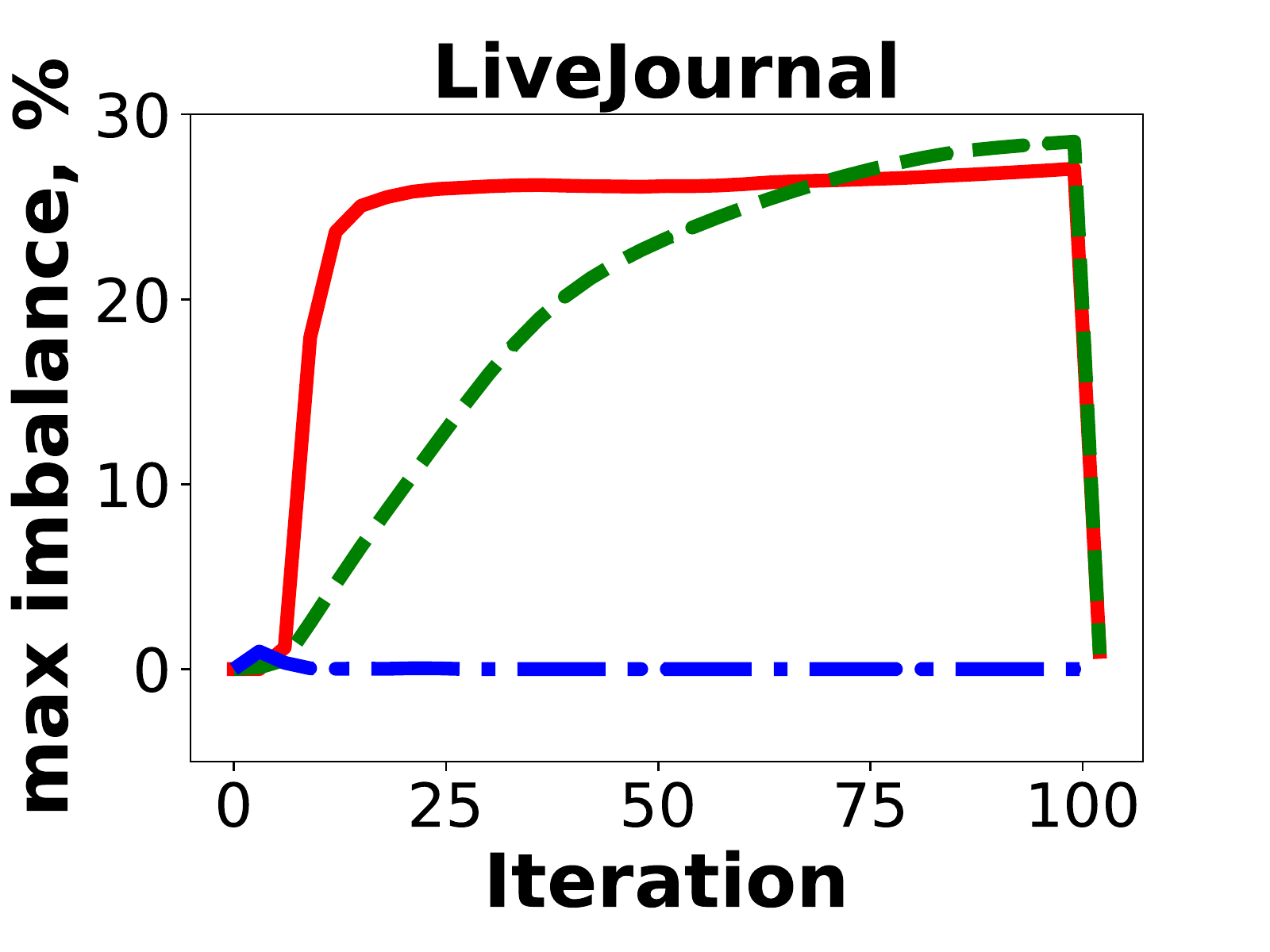}
	\end{subfigure}
	\begin{subfigure}[b]{0.46\textwidth}
	\centering
	\includegraphics[width=\textwidth]{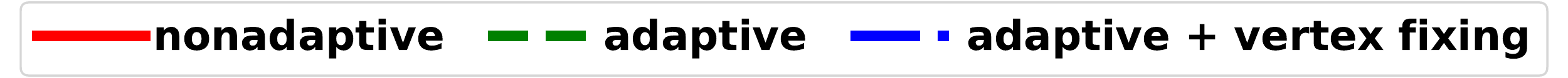}
\end{subfigure}
	\caption{Quality and imbalance comparison of \algname 1) without adaptive step size, 2) with adaptive step size and 3) with adaptive step size with vertex fixing as in Figure~\ref{fig:adapt}. The left side shows edge locality and the right side~-- maximum imbalance over all dimensions.
	While \algname with adaptive step size behaves better on \texttt{sx-stackoverflow} than on \texttt{LiveJournal}, \algname with adaptive step size and vertex fixing still results in slightly better locality and better balance during execution.}
	\label{fig:adapt_so}
\end{figure}
\dtodo{Fix plot captions}

\begin{figure}[!htb]
	\centering
	\begin{subfigure}[b]{0.23\textwidth}
		\centering
		\includegraphics[width=\textwidth]{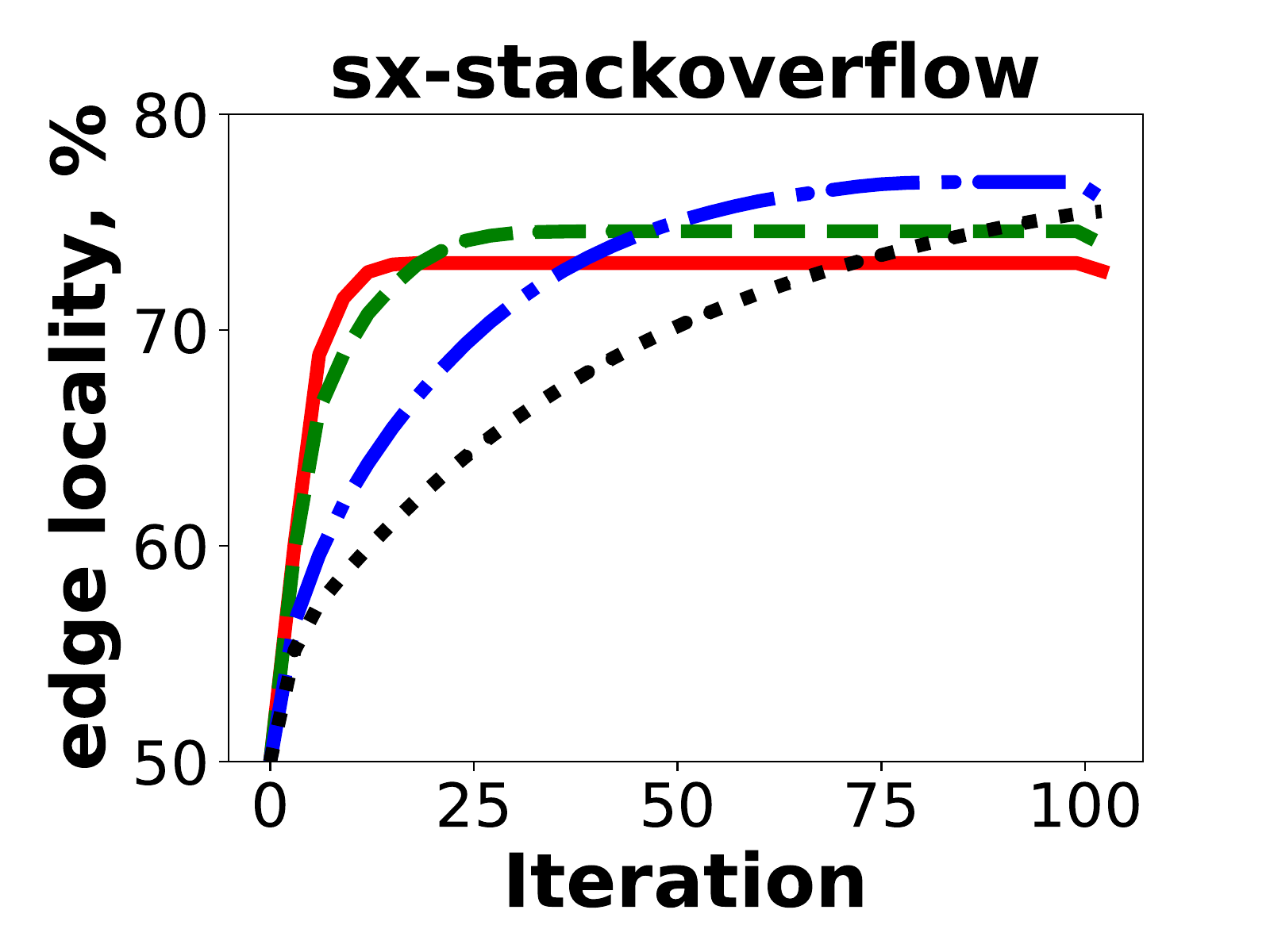}
	\end{subfigure}
	\begin{subfigure}[b]{0.23\textwidth}
		\centering
		\includegraphics[width=\textwidth]{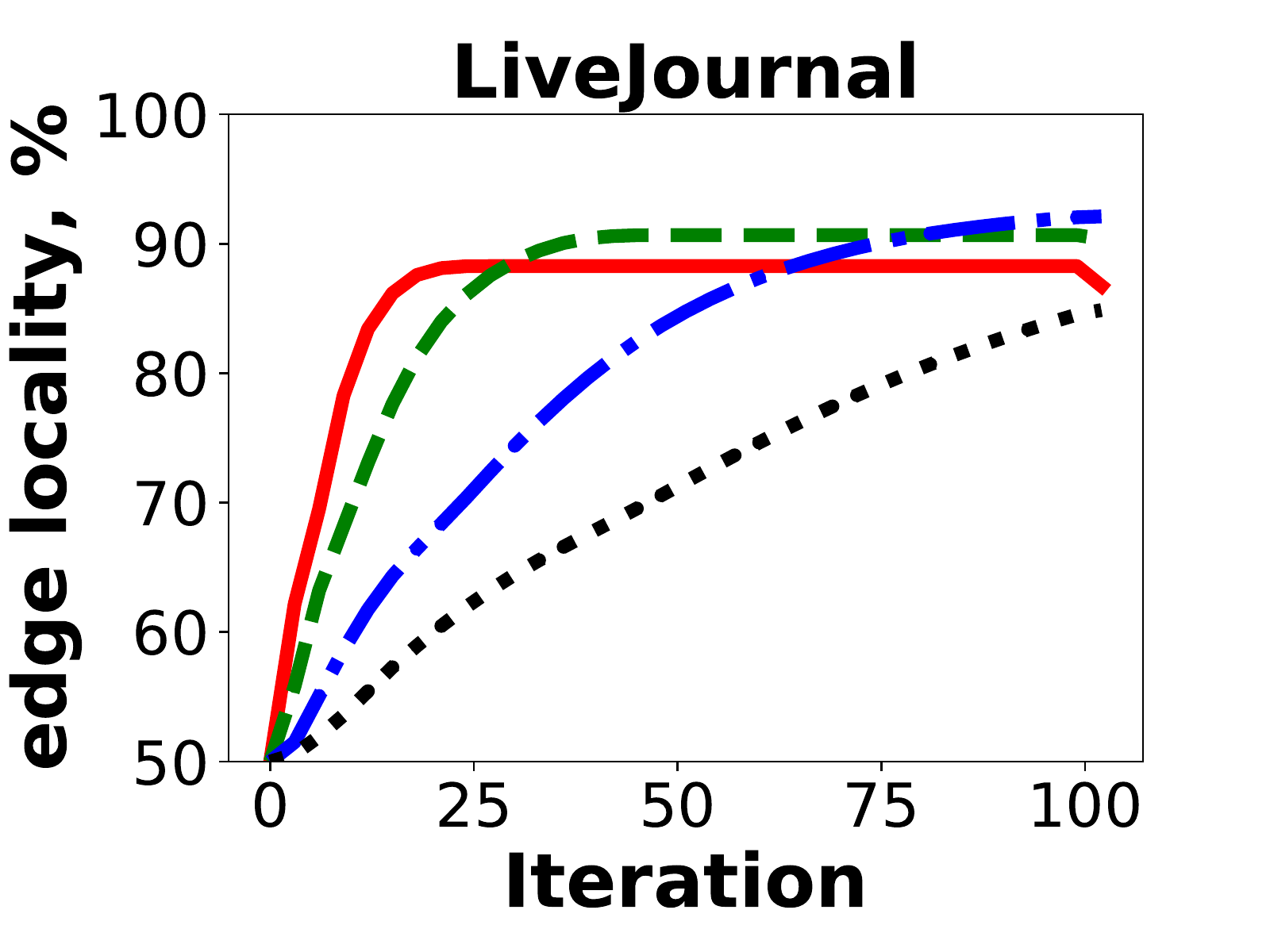}
	\end{subfigure}
	\begin{subfigure}[b]{0.46\textwidth}
	\centering
    	\includegraphics[width=\textwidth]{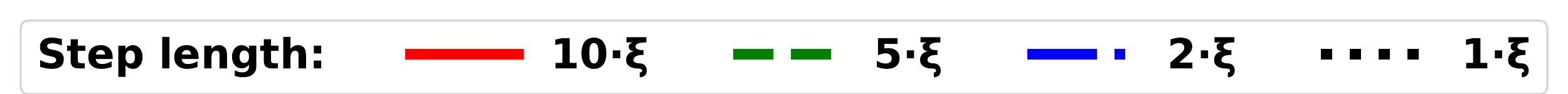}
\end{subfigure}
	\caption{Comparison of step choices for \algname with fixed step length, as in Figure~\ref{fig:step}. \hl{\algname runs for $100$ iterations and $\xi = \sqrt{n}/100$.} For \texttt{sx-stackoverflow} the rate of convergence is faster than for \texttt{LiveJournal}, and the advantage of step length choice of $2 \cdot \xi$ is more prominent}
	\label{fig:step_so}
\end{figure}
\dtodo{Say what $\xi$ is}
\begin{figure}[!htb]
	\centering
	\begin{subfigure}[b]{0.23\textwidth}
		\centering
		\includegraphics[width=\textwidth]{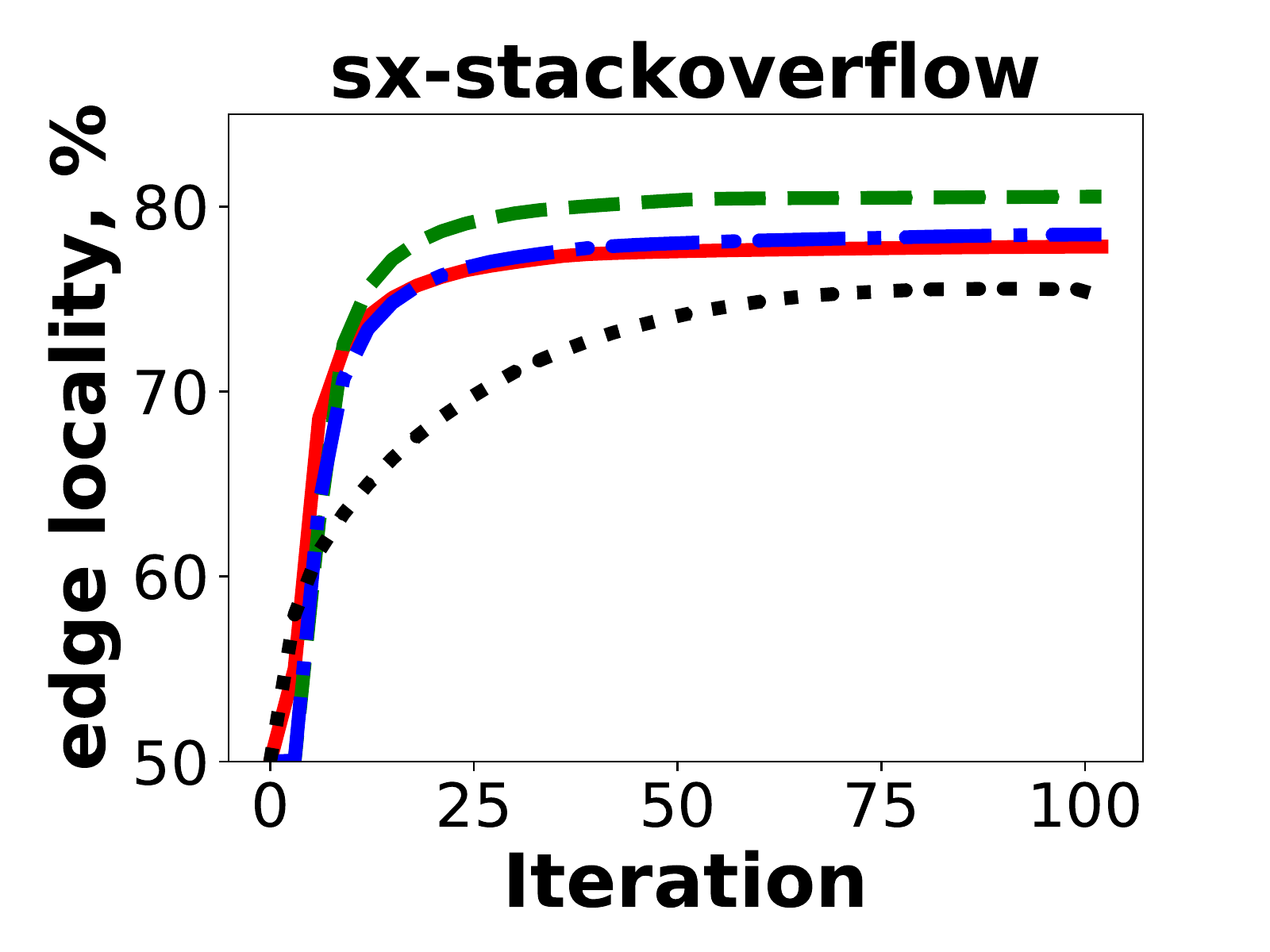}
	\end{subfigure}
    \begin{subfigure}[b]{0.23\textwidth}
      \centering
      \includegraphics[width=\textwidth]{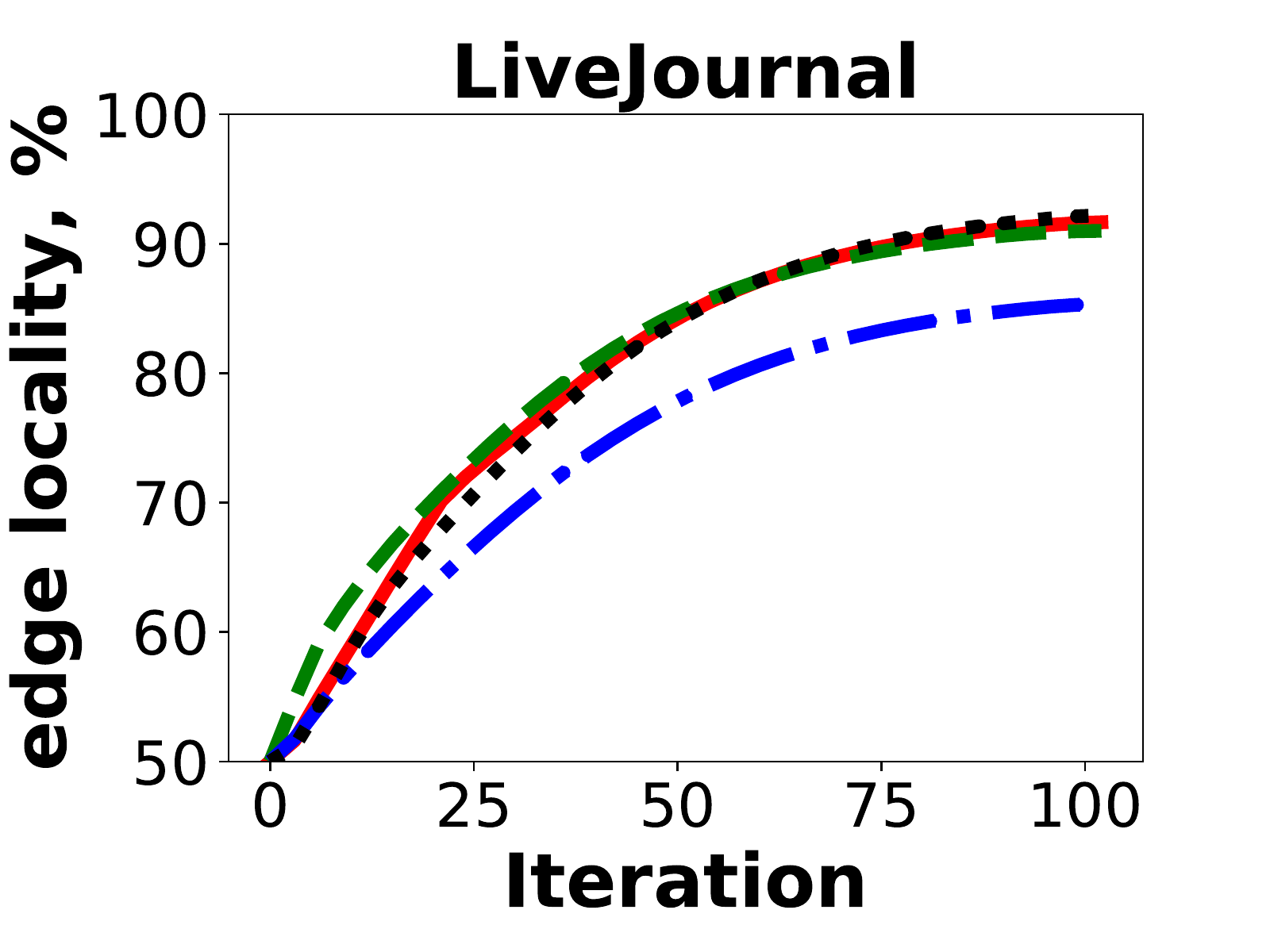}
    \end{subfigure}
	\begin{subfigure}[b]{0.46\textwidth}
		\centering
		\includegraphics[width=\textwidth]{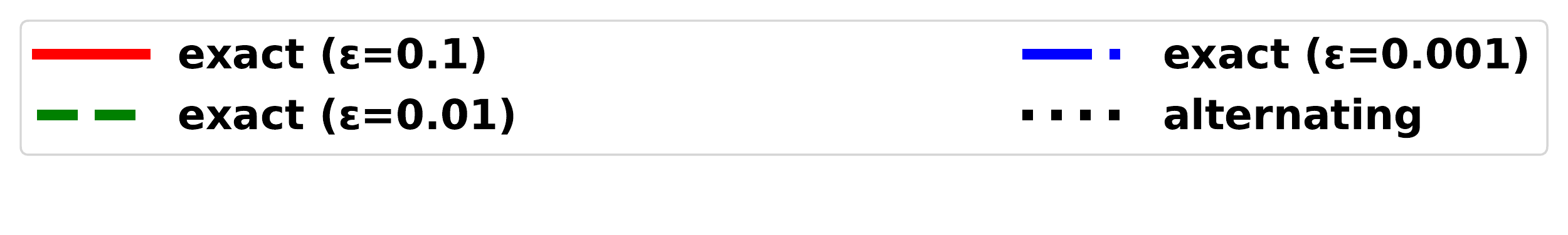}
	\end{subfigure}
    \caption{Quality comparison of \algname with various projection methods, as in Figure~\ref{fig:proj}. \hl{For \texttt{sx-stackoverflow} the rate of convergence of exact projection methods is faster, and the resulting locality is better.}}
	\label{fig:proj_so}
\end{figure}

	\newpage
    {\color{white} x}
	\newpage
    {\color{white} x}
	\newpage
\fi

\sloppy
\date{}

\ifnotfull
\linenumbers
\fi

\setcounter{page}{1}

\maketitle 

\begin{abstract}
Motivated by performance optimization of large-scale graph processing systems that distribute the graph across multiple machines, we consider the balanced graph partitioning problem. 
Compared to most of the previous work, we study the \emph{multi-dimensional} variant when balance according to multiple weight functions is required. As we demonstrate by experimental evaluation, such multi-dimensional balance is essential for achieving performance 
improvements for typical distributed graph processing workloads.

We propose a new scalable technique for the multi-dimensional balanced graph partitioning problem. The method is based on applying randomized 
projected gradient descent to a non-convex continuous relaxation of the objective. We show how to implement the new algorithm efficiently in both 
theory and practice utilizing various approaches for the projection step.
Experiments with large-scale graphs with up to 800B edges indicate that 
our algorithm has superior performance compared with the state-of-the-art approaches.
\end{abstract}

\section{Introduction}
\label{sec:intro}
Distributed graph processing systems have been widely adopted in recent years to enable analysis and knowledge extraction from large-scale graphs.
Systems such as Giraph~\cite{Chi11}, GraphX~\cite{GXDCFS14}, GraphLab~\cite{LBGGKH12}, and PowerGraph~\cite{GLGBG12}
  allow users to use a vertex-centric model for applications which can be executed on a cluster of worker nodes.
In this setting, each worker node operates on a subset of the input graph and communicates with other workers by sending messages.
The process of splitting the input graph into these subsets, also known as graph partitioning,
  is essential for optimizing performance of such systems~\cite{GLGBG12,VLSG17,GHCIE17,AKCV18}.

Created partitions have a significant impact on the communication between different workers and the resource usage of individual workers.
In order to maximize the processing speed, the partitions should largely be independent to minimize communication.
At the same time, computation executed on each partition should take approximately the same amount of processing time,
  as the overall performance depends on the slowest worker.
These constraints give rise to the \textsc{Balanced Graph Partitioning} problem
  whose goal is to divide the vertices of a graph into a given number of (approximately) equal size components while minimizing the resulting edge cut.
\textsc{Balanced Graph Partitioning}  is a classic and thoroughly studied problem from both theoretical and practical points of view~\cite{BS13,BMSSS16}.
In the context of distributed graph processing, the problem is typically studied in two variants.

In the \emph{vertex partitioning} model, each worker machine is assigned an equal number of vertices with the goal of minimizing
the number of cross-machine edges. Since messages are usually sent between adjacent vertices, storing tightly connected subgraphs on the 
same worker can reduce communication and hence running times of jobs. 
It has however been observed that this strategy does not lead to equally loaded partitions for real-world graphs with power law
degree distribution~\cite{GLGBG12}. Graph partitioning algorithms tend to colocate high-degree vertices and corresponding partitions take much longer to process, resulting in longer execution time overall.

\begin{figure*}[t]
	\centering
	\includegraphics[width=0.8\textwidth]{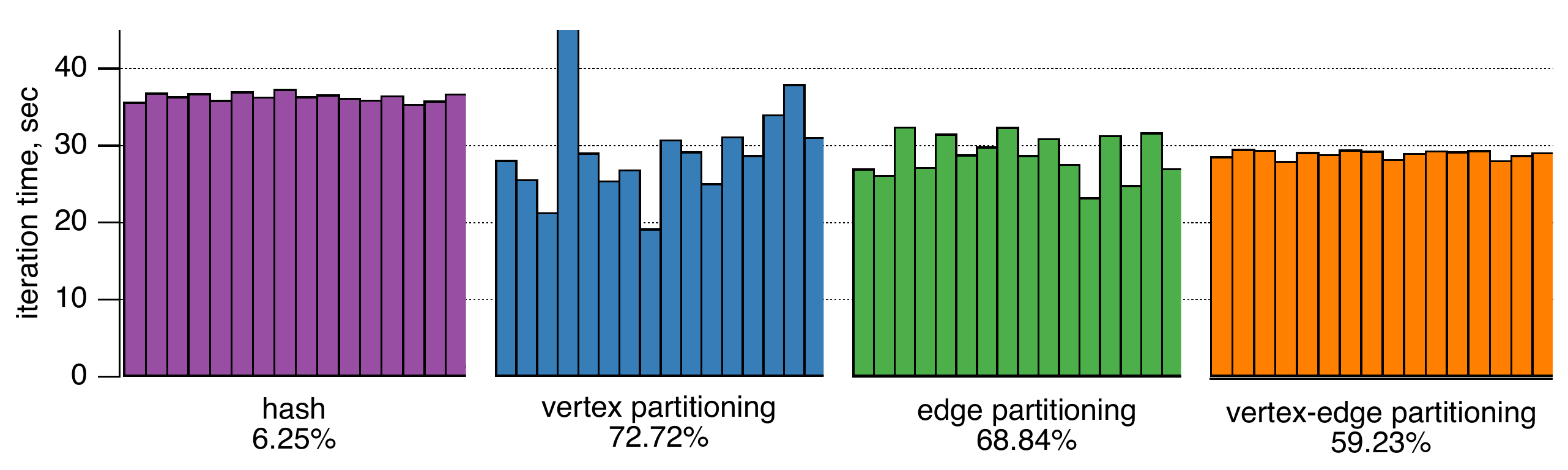}
	\caption{The running time of an iteration of \texttt{Page Rank} on a Giraph cluster of $16$ worker machines using
		various graph partitioning strategies. The numbers indicate the average percentage of local (uncut) edges per worker, which is 
		proportional to the number of local messages for the distributed graph processing workload. Vertex-edge partitioning achieves approximately $25\%$ iteration time improvement compared to hash.}
	\label{fig:hist}
\end{figure*}

The \emph{edge partitioning} model has been suggested to alleviate the above imbalance problem~\cite{GLGBG12,LGHC17}.
In this model the goal is to partition the graph so that the number of edges in every component is the same, while the number of incident edges across different components is minimized.
Good partitions according to this model typically result in better balance across workers and reduced computation time in comparison to the trivial hash-based assignment of vertices to worker machines.
However, edge-based graph partitioning can still result in performance regressions~\cite{AKCV18,Sun18}.

To analyze the source of regressions, we performed a simple experiment of running a \texttt{Page Rank} algorithm implemented on top of Giraph utilizing various graph partitioning methods.
Figure~\ref{fig:hist} illustrates the histograms of running times for individual workers processing
a graph with $800M$ vertices and $80B$ edges.
As discussed above, partitions according to the \emph{vertex partitioning} model suffer from unequal distribution of edges across workers. A single overloaded partition can contain $1.92$x more edges than an average one, which results in $1.5$x longer execution time.
We also observe a high correlation ($\rho = 0.79$) between the number of edges assigned to a partition and the corresponding processing time in this experiment. Partitioning according to the \emph{edge partitioning} model yields a $1.08$x
running time improvement over the baseline, though there is still a noticeable imbalance between the fastest and the slowest worker machines. This can be explained by uneven distribution of vertices among workers.
\rlabel{line:R2.11}{Machines with more vertices have higher operational overhead such as serialization of sent messages whose number is proportional to the number of vertices on a worker.
Here we observe an $1.33$x imbalance in the number of vertices and a moderate correlation ($\rho = 0.62$) between the running time and the vertex count on the workers.}

In order to mitigate the issues described above we introduce a new strategy, \emph{vertex-edge partitioning}, which is designed to balance the number of vertices and edges across workers \emph{simultaneously}. As shown in Figure~\ref{fig:hist}, this is done at a cost 
of a lower edge locality (percentage of edges with both endpoints on the same machine), and thus, higher communication volume.
The resulting assignment results in a $1.17$x speedup over the hash-based model.
Motivated by the above experiment and a number of earlier studies~\cite{VLSG17,GHCIE17,AKCV18,Sun18},  we formalize a new model for graph partitioning which is suitable for real-world distributed graph processing systems.

\rlabel{line:R1.1}{
We now formally describe the model in the most general setting which allows one to require balance according to $d$ different unrelated weight functions. Let $G(V, E)$ be a graph with $d$ vertex weight functions $\w{1}, \dots, \w{d}: V \rightarrow \mathbb R^+$, each assigning a positive weight to every vertex in the graph. 
Let $\wj(V) = \sum_{v \in V} \wj(v)$ be the sum of weights of all vertices in the graph according to the $j$-th weight function.
Given an integer $k$ and a parameter $\eps > 0$, the goal is to find a partition of the vertex set $V$ into $k$ sets $V_1, \dots, V_k$ such that for each weight function $\wj$ and each part of the partition $V_i$ the sum of weights in $V_i$ is approximately the same and close to the average, i.e.  $\sum_{v \in V_i} \wj(v) = (1 \pm \eps) \frac{\wj(V)}{k}$. We call such partitions $\eps$-balanced.
Finally, among all such $\epsilon$-balanced partitions the goal is to find one that maximizes the number of edges whose both endpoints are contained within some part of the partition and hence minimizes the size of the cut.
This problem is referred to as \prob{Multi-Dimensional Balanced Graph Partitioning} (\prob{MDBGP}).

\rlabel{line:R3.4}{The simplest example of \prob{MDBGP} is the classic \textit{balanced graph partitioning problem} which is equivalent ot the vertex partitioning strategy described above and can be expressed using a single weight function $\w{1}(v) = 1$. Since $\w{1}(V) = |V|$ this requires that we maximize edge locality while ensuring that $|V_i| \approx \frac{|V|}{k}$.
Using two weight functions $\w{1}(v) = 1$ and $\w{2}(v) = deg(v)$ corresponds to requiring balance on the number of vertices and edges in the parts of the partition and hence corresponds to the vertex-edge partitioning approach described above.
 Indeed, $\w{2}(V) = 2|E|$ and hence in addition to balance on the number of vertices this requires that $\sum_{v \in V_i} deg(v) \approx  \frac{2|E|}{k}$.
However, the model is not restricted to vertex- and edge-balance (as in the aforementioned \emph{vertex-edge partitioning}) but can take arbitrary user-specified weights. In particular, when partitioning the vertices of the graph between the workers for load balancing, various weights modeling expected vertex activity can be used (historical data on individual vertex load, proxy values for the load such as PageRank, etc).}}

While a large body of work exists offering practical solutions for the one-dimensional version of the problem~\cite{KK95, DGRW12, UB13, TGRV14, ABM16, DKKOPS16, MLLS17, KKPPSAP17, BMSSS16}, as well as on theoretical foundations of graph partitioning~\cite{KNS09,AFKRS14,MM14}, literature on principled and scalable approaches for the multi-dimensional case is quite sparse~\cite{KK98,SKK99,OYL06,NU13}.
In particular, if the weight functions are unrelated to each other, one can easily construct examples when no feasible solution exists that satisfies all balance constraints even for two weight functions. 
However, it is empirically observed that instances coming from applications often allow balanced solutions for several weight functions of interest simultaneously.
For classical local search based algorithms such as~\cite{KL70} handling of multiple unrelated weight functions is challenging since imposing one balance constraint might violate another and hence finding a good local move becomes computationally hard.
We overcome this difficulty by using a continuous relaxation of the problem, which allows more flexibility for achieving balance in the search space. In order to obtain an integral solution, in the end we apply randomized rounding which preserves balance with high probability.

%

\subsection{Our Contributions}

We present a scalable algorithmic framework for the problem of balanced partitioning of large graphs according to multiple user-specified weight functions 
while maximizing the number of edges inside the resulting components.
Our framework consists of applying the \emph{projected gradient descent} on a standard relaxation with a suitably chosen projection method.
The relaxation is to maximize a non-convex quadratic function $f(\vx) = \frac12 \vx^T A \vx$ for $\vx \in \mathbb R^n$, where $A$ is the adjacency matrix, subject to a constraint $\vx \in K$ for a certain convex body $K$ defined by the weight functions.
Section~\ref{sec:algorithm} provides the exact description of the relaxation.
Note that the gradient descent step only uses a matrix-vector multiplication since $\nabla f = A \vx$, and thus, the algorithm allows a straightforward distributed implementation.

While applying projected gradient descent to solve non-convex optimization problems subject to convex constraints is a well-studied approach in non-linear optimization (Section 2.3,~\cite{B99}) and machine learning (Section 6.6,~\cite{JK17}), one has to overcome two technical challenges to make it applicable to the multi-dimensional graph partitioning problem: 1) projection step is computationally expensive,
2) existence of points with small gradient (saddle points) slows down convergence.

We show how to address the first challenge by designing efficient projection step algorithms tailored to the standard relaxation of \prob{MDBGP}.
While convergence to the projection point can be achieved using various alternating projections methods~\cite{D83}, for $d \le 2$ we give one-shot exact solutions with almost linear running time.

\begin{theorem}\label{thm:runtime}
Running time of the projected gradient descent step is $\Oh(|E| + |V| \log^{d-1} |V|)$ for $d \le 2$ and scales as $\Oh(|E|/m +|V| \log^{d-1} |V|)$ when distributed between $m$ machines.
\end{theorem}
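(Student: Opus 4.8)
The plan is to decompose a single projected gradient step into its two constituent operations --- the gradient step $\vy = \vx + \eta\,\nabla f(\vx)$ (with step size $\eta$) and the projection $\Pi_K(\vy)$ onto the feasible body $K$ --- and bound each separately. Since $A$ is symmetric we have $\nabla f = A\vx$, so the gradient step is a single sparse matrix--vector product. The adjacency matrix has $2|E|$ nonzero entries, so forming $A\vx$ and adding it to $\vx$ costs $\Oh(|E| + |V|)$ sequentially; in the distributed setting the edges are split across the $m$ machines and each accumulates its partial contribution to the product, which divides the edge-dependent work and yields $\Oh(|E|/m + |V|)$. Thus the gradient step already accounts for the $|E|$ (resp.\ $|E|/m$) term of the claimed bounds, and the entire $|V|\log^{d-1}|V|$ term must come from the projection.

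First I would analyze the projection. Writing $K$ as the intersection of the box $[0,1]^{|V|}$ with the $d$ balance constraints determined by the weight functions $\wj$, the KKT conditions for $\min_{\vx \in K} \tfrac12 \|\vx - \vy\|^2$ show that the optimal point has the closed form
\begin{equation}
x_v^* = \Pi_{[0,1]}\Bigl(y_v - \sum_{j=1}^d \lambda_j\, \wj(v)\Bigr),
\end{equation}
where $\Pi_{[0,1]}$ clips a scalar to $[0,1]$ and the $\lambda_j$ are the (at most $d$) Lagrange multipliers of the balance constraints, with the inactive ones set to zero. Hence the projection reduces to finding the correct multiplier vector $(\lambda_1,\dots,\lambda_d)$, after which all coordinates are recovered in $\Oh(|V|)$ time.

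For $d = 1$ there is a single multiplier $\lambda$, and the induced constraint value $g(\lambda) = \sum_v \wj(v)\,\Pi_{[0,1]}(y_v - \lambda\,\wj(v))$ is continuous, piecewise linear, and monotone in $\lambda$. I would locate the correct linear piece using a median-of-breakpoints selection scheme, exactly as in linear-time algorithms for projection onto the simplex; this avoids a full sort and gives $\Oh(|V|)$, matching $\log^{d-1}|V| = 1$. For $d = 2$ I would parametrize by the first multiplier $\lambda_1$: for each fixed $\lambda_1$, the value of $\lambda_2$ solving the second balance constraint is pinned down by a monotone one-dimensional problem, and the resulting value of the first constraint is itself piecewise linear and monotone in $\lambda_1$. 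Sorting the $\Oh(|V|)$ critical values of $\lambda_1$ at which some coordinate changes clipping status, then sweeping (or binary searching, using monotonicity) to the correct piece and solving the resulting linear equations exactly, gives $\Oh(|V|\log|V|)$, matching $\log^{d-1}|V| = \log|V|$. Summing the gradient and projection costs yields $\Oh(|E| + |V|\log^{d-1}|V|)$ sequentially and $\Oh(|E|/m + |V|\log^{d-1}|V|)$ when distributed, since only the edge work parallelizes and the projection's vertex-dependent term is left intact.

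The hard part will be the $d = 2$ projection: establishing that the outer function in $\lambda_1$ is monotone and piecewise linear with only $\Oh(|V|)$ breakpoints, so that a \emph{single} sort suffices, and that the exact multiplier pair can be recovered within the identified piece rather than via a nested search that would incur a second logarithmic factor (and break the claimed $\log^{d-1}$ exponent). The $d = 1$ case and the gradient bound are routine; I expect essentially all the care to go into the structural argument for the two-constraint projection and into verifying the boundary cases where the box constraints, rather than the balance hyperplanes, are the binding ones.
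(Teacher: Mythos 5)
Your decomposition (sparse matrix--vector product for the gradient, KKT closed form $x_i = \Pi(y_i - \sum_j \lambda_j \w j_i)$ for the projection, multiplier search for $\lambda$) matches the paper's, your $d=1$ argument via median-of-breakpoints selection is fine (the paper gets $\Oh(n\log n)$ by binary search and cites~\cite{MSMJ03} for the linear-time refinement you describe), and your handling of the distributed bound is the same as the paper's. The genuine gap is exactly at the place you flag as ``the hard part'': your claim that there are only $\Oh(|V|)$ critical values of $\lambda_1$ at which some coordinate changes clipping status, and that these can be \emph{sorted up front}. Along the solution path, coordinate $i$ changes status when the curve $\bigl(\lambda_1, \lambda_2(\lambda_1)\bigr)$ crosses one of the boundary lines $y_i - \lambda_1 \w 1_i - \lambda_2 \w 2_i = \pm 1$; since $\lambda_2(\lambda_1)$ is itself an unknown piecewise-linear function whose breakpoints are determined by all the other coordinates, these critical values (i) cannot be enumerated before the curve is known --- which is precisely the object being computed --- and (ii) are not obviously $\Oh(|V|)$ in number: both the curve and every boundary line are decreasing, so the curve can cross a single line several times, and no linear bound on the total number of crossings is argued. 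Your proposal thus assumes away the chicken-and-egg problem that constitutes the actual content of the $d=2$ case.

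The paper resolves it differently. It runs the nested (outer-on-$\lambda_1$, inner-on-$\lambda_2$) search you were trying to avoid --- this costs no extra logarithmic factor because each probe $\Delta(\lambda_1')$ is evaluated by the \emph{linear-time} one-dimensional projection, so $\Oh(\log n)$ probes cost $\Oh(n \log n)$ --- but it binary-searches over a set that \emph{is} computable a priori: the $\lambda_1$-coordinates of the $\Oh(n^2)$ pairwise intersections of the $2n$ boundary lines. Since that set is too large to list, the pivot is chosen by sampling a uniformly random intersection point (a separate lemma shows all sampling steps together cost $\Oh(n\log n)$, with three regimes depending on $|\Lambda|$). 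After $\Oh(\log n)$ iterations the remaining vertical strip contains no line intersections, so inside it the boundary lines are totally ordered and cut the strip into $\Oh(n)$ linearly-ordered regions; the algorithm then sweeps these bottom to top, solving a $2\times 2$ linear system per region and updating its coefficients in $\Oh(1)$ per boundary crossing (Theorem~\ref{thm:find_region}). To repair your write-up you would need either a proof that the solution path crosses only $\Oh(|V|)$ cells together with a way of generating those crossings on the fly, or you should adopt the paper's two-phase scheme of searching over line--line intersections first and only then sweeping regions.
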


In order to address the second challenge, we use small perturbations to get out of saddle points, where the perturbation vectors are sampled from a scaled $n$-dimensional Gaussian distribution.
We refer to the resulting algorithm as \algname, see Algorithm~\ref{alg:br}.
Convergence analysis of \algname remains an open problem. While noisy gradient descent is known to have fast convergence to a local optimum for non-convex optimization subject to equality constraints, if inequality constraints are allowed convergence analysis is unknown~\cite{GHJY15}. 
 
Our experimental results show that \algname scales to graphs with up to several billions of vertices and up to $10^{12}$ edges. 
We conducted an experimental evaluation of various graph partitioning strategies for
optimizing several real-world Giraph workloads. The results demonstrate that multi-dimensional balancing is 
a suitable objective for achieving performance improvements, providing speedups in the order of $10\%-30\%$ over the state-of-the-art one-dimensional partitioning strategies.
Compared to existing scalable graph partitioners, such as Social Hash Partitioner~\cite{KKPPSAP17}, Spinner~\cite{MLLS17}, and Balanced Label Propagation~\cite{UB13,MSS14},
the algorithm is conceptually simple and obtains close-to-perfect balanced partitions across multiple dimensions.


\subsection{Previous Work}

\rlabel{line:R2.5-1}{While one-dimensional balanced graph partitioning has been studied extensively and a number of tools exist~\cite{KK95, DGRW12, UB13, TGRV14, ABM16, DKKOPS16, MLLS17, KKPPSAP17} (see also surveys  by Bichot and Siarry~\cite{BS13} and by Bulu{\c{c}}~et~al.~\cite{BMSSS16}), to the best of our knowledge none of the practical algorithms for this problem have been previously based on running gradient descent on a continuous relaxation.}
Existing approaches are inherently discrete and are based on combinations of various discrete algorithms: greedy heuristics (METIS~\cite{KK95}, Fennel~\cite{TGRV14}), branch-and-bound~\cite{DGRW12}, label propagation and local search (balanced label propagation~\cite{UB13}, Social Hash Partitioner~\cite{KKPPSAP17}, Spinner~\cite{MLLS17}), as well as hybrid approaches (linear embedding method combined with various optimizations~\cite{ABM16}). Due to the combinatorial nature of these algorithms, their generalizations to the multi-dimensional case appear to be non-straightforward without substantial losses in performance, while our continuous relaxation handles multiple balance constraints uniformly. 
\rlabel{line:R2.5-2}{Compared to the one-dimensional version, existing literature on the multi-dimensional version is rather sparse~\cite{KK98,SKK99,OYL06,NU13} and the main publicly available tool for the problem is currently METIS~\cite{KK98,SKK99}.}

Vast literature exists on optimization of non-convex functions and the interest in this topic lately has been particularly high.
However, in the constrained case when the optimization has to be performed over a convex body, fairly little is known; see classic optimization literature~\cite{B99,WN99,BV04}. 
Recent results on the non-convex optimization problem subject to convex constraints and its special cases include~\cite{GHJY15,SQW15,AG16,GLM16,JK17}.
Closest to our work in terms of techniques is~\cite{LRSST10} who use projected gradient method to solve convex programs involving the max-norm and show how to solve large semidefinite programming relaxations of \prob{Max-Cut}. Their results are quite different from ours as we consider a balanced version of graph partitioning and expect our algorithms to be scalable; the largest instances handled by~\cite{LRSST10} have $|V| = 20K$ and $|E| = 40K$. Since we require that our algorithms scale to graphs with billions of edges, using existing general purpose software for constrained quadratic programming is also infeasible.




\section{Projected Gradient Descent}
\label{sec:algorithm}

For an integer $t$ we use notation $[t]$ to denote the set $\{1, \dots, t\}$.
The weighted $d$-dimensional balanced graph partitioning problem is defined by a collection of $d$ weight functions $\w 1, \dots, \w d$, where $\wj \colon V \to \mathbb R^+$.  
For a set $S \subseteq V$ we use notation $\wj(S) \equiv \sum_{v \in S} \wj_v$.



\begin{definition}[\prob{MDBGP}] 
Given a graph $G(V, E)$, an integer $k$ and a parameter $\eps > 0$, the \prob{Multi-Dimensional $\epsdef$-Balanced Graph $k$-Partitioning} problem is to find a partition of the vertex set $V$ into $k$ sets $V_1, \dots, V_k$ such that for each $j \in [d]$, it holds that  $\wj(V_i) = (1 \pm \eps)\frac{\wj(V)}{k}$ for all $i \in [k]$. Among all such partitions the goal is to find one that maximizes the number of edges whose both endpoints are contained within some part of the partition. 
\end{definition}

In this paper we focus on the $2$-partitioning problem; for the general variant of $k$-partitioning, we apply the algorithm recursively.
For $k = 2$ \prob{MDBGP} is equivalent to the following integer quadratic program:
\begin{tcolorbox}[
    standard jigsaw,
    opacityback=0,  
]
\begin{align*}
&\text{Maximize:  }  && \frac12  \sum_{(i_1, i_2) \in E} (x_{i_1} x_{i_2} + 1) && \\
&\text{Subject to:} && \left|\sum_{i = 1}^n \wj_i x_i\right| \le \eps \sum_{i = 1}^n \wj_i  && \forall j \in [d]\\
&&& x_i \in \{-1,1\} && \forall i \in V
\end{align*}
\end{tcolorbox}

\rlabel{line:R1.5}{
The interpretation of $x_i$ variables is that if $x_i = 1$ then $i \in V_1$ and if $x_i = -1$ then $i \in V_2$. The objective is then the same as in \prob{MDBGP} and counts the number of edges whose both endpoints are contained in some part of the partition. Indeed, an edge $(i_1,i_2)$ makes a contribution of $1$ to the objective when $x_{i_1} = x_{i_2}$ (and hence $x_{i_1} x_{i_2} = 1$) and $0$, otherwise (since $x_{i_1} x_{i_2} = -1$).
The constraints are equivalent to $-\epsilon \wj(V) \le \wj(V_1) -  \wj(V_2) \le \epsilon \wj(V)$. Adding or subtracting $\wj(V)$ to both sides and dividing by $2$ we have $\wj(V_i) = (1 \pm \eps) \frac{\wj(V)}{2}$ as required in \prob{MDBGP}.

After dropping the additive term the objective can be expressed as $f(\vx) =\frac12\vx^T A \vx$ and has gradient $\nabla f(\vx) = A \vx$ and Hessian $\nabla^2(f) = A$. 
Finally, we use a continuous relaxation of the above problem where we replace the integrality constraints with $x_i \in [-1,1]$ for all $i \in V$. A solution to this continuous relaxation can be converted into an integral solution using randomized rounding. Using independent random variables $X_i$ for each vertex such that $\Pr[X_i = 1] = \frac {1 + x_i} 2$ and $\Pr[X_i = -1] = \frac {1 - x_i} 2$ the expected value of the objective on the rounded solution $(X_1, \dots, X_{|V|})$ is the same as on the initial fractional solution $(x_1, \dots, x_{|V|})$ while all balance constraints are still approximately preserved with high probability by concentration bounds.
}


\subsection{Overview}\label{sec:overview}

\newcommand{\xt}[1]{\vx^{(#1)}}
\newcommand{\yt}[1]{\vy^{(#1)}}
\newcommand{\zt}[1]{\vz^{(#1)}}
\newcommand{\xti}[2]{x^{(#1)}_{#2}}
\newcommand{\yti}[2]{y^{(#1)}_{#2}}
\newcommand{\zti}[2]{z^{(#1)}_{#2}}
We propose the following algorithm for the multi-dimensional balanced graph partitioning problem based on the continuous relaxation described above. 
The algorithm is referred to as \texttt{Gradient Descent} (\algname), see Algorithm~\ref{alg:br}.
It computes a sequence of vectors $\set{\xt t}$, where $\xti ti \in [-1; 1]$ for all $i \in V$ and $t$.
Here $\xt 0$ is initialized with zero vector, and $\xt {t+1}$ is computed by applying projected gradient descent iteration to $\xt t$.
Each iteration consists of three steps.

\emph{Step 1: Adding noise.} We add Gaussian noise to $\xt t$ obtaining a noisy vector $\zt t$.
The noise is drawn from the $n$-dimensional Gaussian distribution $N_n(0,\eta_t)$ with zero mean and variance $\eta_t$ in each coordinate.
The addition of noise to $\xt t$ allows to escape from saddle points, e.g. $\xt 0 = 0$.

\emph{Step 2: Gradient descent.} We obtain $\yt t$ from the noisy vector $\zt t$ via a gradient descent step with step size $\gamma_t$. Note that the gradient at $\zt t$ is given as $A \zt t$ hence this step can be expressed as $\yt t = (I + \gamma_t A) \zt t$.

\emph{Step 3: Projection.} The resulting vector $\yt t$ is then projected on the feasible space $\mathcal B_\infty \cap \bigcap_{j = 1}^d \mathcal S^j_\eps$, where:
\begin{align*}
&\mathcal B_\infty = \set{\vx \in \mathbb R^n | \forall i \colon x_i \in [-1,1]} \\
&\mathcal S^j_\eps = \set{\vx \in \mathbb R^n  |\ |\sum_{i = 1}^n \wj_i x_i |\ \le\ \eps \sum_{i = 1}^n \wj_i} \text{ for } j \in [d],
\end{align*}
that is, $\mathcal B_\infty$ satisfies that $\|\mathbf x\|_\infty \le 1$ and $\mathcal S^j_\epsilon$ corresponds to the constraints imposed by the balance of weights according to the $j$-th weight function.

The final solution is obtained by rounding last $\xt t$: each vertex $i$ is assigned to part $V_1$ with probability $\frac {\xti ti + 1} 2$. Note that this ensures that the expected number of edges whose endpoints belong to the same part after this rounding is given as
$\frac12  \sum_{(i_1, i_2) \in E} (\xti t{i_1} \xti t{i_2} + 1)$.



\begin{algorithm}
	\SetKwInOut{Input}{input}
	\SetKwInOut{Output}{output}
	\SetKwRepeat{Do}{do}{while}
	\Input{Graph $G(V,E)$, $\eps \in [0,1]$, weight functions $w_1, \dots, w_d \colon V \to \mathbb R^+$}
	\nonl \textbf{parameters:} $I, \{\eta_t\}_{t = 0}^{I-1}, \{\gamma_t\}_{t = 0}^{I-1}$\\
	\Output{$\epsdef$-balanced partition w.r.t $\w 1, \dots, \w d$ of $V$ into $(V_1, V_2)$}
	\caption{\algname ($d$-Dimensional Balanced Graph $2$-Partitioning via Randomized Projected Gradient Descent)}
    \label{alg:br}
    $K = \mathcal B_\infty \cap \bigcap_{j = 1}^d \mathcal S^j_\eps$ \\
	$\xt{0} = \mathbf 0$; \\	
	\For{$t = 0$ \KwTo $I-1$}{
		$\zt{t} = \xt{t} + N_n(0,\eta_t)$;  \texttt{\hfill// Noise addition step\label{ln:gaussian-noise}}\\
		$\yt{t + 1} = (I + \gamma_t A) \zt{t} $; \texttt{\hfill// Gradient descent step\label{ln:grad-step}}\\
		$\xt{t + 1} = \argmin\limits_{\vx \in K} \|\yt{t + 1} - \vx\|_2$, 
		\texttt{\hfill// Projection step\label{ln:projection}}\\
	}
    $V_1 = V_2 = \emptyset$; \texttt{\hfill// Randomized rounding\label{ln:randomized-rounding}} \\
    \For{each $i \in V$} {
    	\texttt{~With probability $\frac {\xti ti + 1} 2$, let $V_1 = V_1 \cup \{i\};$ otherwise, $V_2 = V_2 \cup \{i\};$} \\
    }
\end{algorithm}

The algorithm uses parameters $\eta_t, \gamma_t$, and $I$, where $t$ is the iteration index. 
Here $\eta_t$ controls the magnitude of noise, $\gamma_t$ is the step size, and $I$ is the number of iterations.
We discuss the selection of parameters in the experimental Section~\ref{sec:experiments}.

\subsection{Projection}
\label{SEC:PROJECTION}

In the projection step of \algname (Line~\ref{ln:projection}) we need to find $\argmin_{\vx \in K} \|\vy^{(t + 1)} - \vx\|_2$, where $K = \mathcal B_\infty \cap \bigcap_{j = 1}^d \mathcal S_\eps^j$. Denoting $\vy^{(t + 1)}$ as $\vy$ we formulate this step as an optimization problem:
\begin{tcolorbox}[
    standard jigsaw,
    opacityback=0,  
]
\begin{align*}
& \text{Minimize:} && f(\vx) = \|\vx - \vy\|_2^2&& \\ 
& \text{Subject to:} && g_i = x_i^2 - 1 \le 0 && \forall i \in [n]\\
&&& \hj_+ = \sum_{i = 1}^n \wj_i x_i - \eps \le 0 && \forall j \in [d]\\
&&& \hj_- = -\sum_{i = 1}^n \wj_i x_i - \eps \le 0 && \forall j \in [d]
\end{align*}
\end{tcolorbox}

The optimum solution to the optimization problem has to satisfy KKT conditions:

\begin{tcolorbox}[
    standard jigsaw,
    opacityback=0,  
]
Stationarity:
\begin{align*}
\vy - \vx = \sum_{i = 1}^n \mu_i x_i \ve_i + \sum_{j = 1}^d (\muj_+ - \muj_-)\sum_{i = 1}^n \wj_i \ve_i
\end{align*}
Complementary slackness 1:
\begin{align*}
&\mu_i (x_i^2 - 1) = 0, && \forall i \in [n] 
\end{align*}
Complementary slackness 2:
\begin{align*}
&\muj_+ \left(\sum_{i = 1}^n \wj_i x_i - \eps\right) = 0,   &&\forall j \in [d] \\
&\muj_- \left(\sum_{i = 1}^n \wj_i x_i + \eps\right) = 0,  && \forall j \in [d] 
\end{align*}
\end{tcolorbox}

Here $\mu_i, \muj_+, \muj_i \ge 0$ are the dual variables and $\ve_i$ is the $i$-th standard unit vector.
It is a standard fact (see~\cite{BV04}, Chapter 5.5.3) that for convex optimization subject to linear constraints Stationarity, Complementary slackness and Primal/Dual feasibility are necessary and sufficient conditions for the optimum solution. Thus we just focus on satisfying these conditions below.

Let $\gamma_i = \sum_{j =1}^d (\muj_+ - \muj_-) \wj_i$. Then by Stationarity for each $i$ we have $y_i - x_i  = \mu_i x_i + \gamma_i$. Consider the following three cases:

\textbf{Case 1.} $(y_i > 1 + \gamma_i)$. If $\mu_i = 0$ then by Stationarity $x_i = y_i - \gamma_i > 1$ which violates primal feasibility conditions. Therefore $\mu_i > 0$ and $x_i^2 = 1$ by Complementary slackness 1. Among the two roots $x_i = 1$ and $x_i = -1$ the second root can be ruled out and hence $x_i = 1$.
Indeed, if $x_i = -1$ then by Stationarity $y_i + 1 = -\mu_i + \gamma_i$ which contradicts $\mu_i > 0$ and $y_i > 1 + \gamma_i$.

\textbf{Case 2.} $(y_i < - 1 + \gamma_i)$. This case is symmetric to the previous one and thus $x_i = -1$ in this case.

\textbf{Case 3.} $y_i \in [-1 + \gamma_i, 1 + \gamma_i]$. First we show that $\mu_i = 0$. Indeed, assume that $\mu_i > 0$. Then $x_i = \pm 1$ by Complementary slackness 1.
Both cases lead to contradiction:
\begin{enumerate}
\item $(x_i = 1)$. By Stationarity $y_i - 1 = \mu_i + \gamma_i$ which contradicts with $y_i \le 1 + \gamma_i$ and $\mu_i > 0$.
\item $(x_i = -1)$. Similarly to the above by Stationarity we have $y_i + 1 = - \mu_i + \gamma_i$ which is a contradiction with $y_i \ge -1 + \gamma_i$ and $\mu_i > 0$.
\end{enumerate}
Therefore in this case we have $\mu_i = 0$ and hence by Stationarity $x_i = y_i - \gamma_i$.

Let $\lambda_j = \muj_+ - \muj_-$ and assume that these values are known to the algorithm. For $z \in \mathbb R$ we use notation $[z] = \min(1, \max(-1, z))$ for the truncated linear function.
Using the analysis above the projection step is simply $x_i = [y_i - \sum_{j = 1}^d \lambda_j \wj_i]$. It remains to show how to find $\set{\lambda_j}$.

Note that from Complementary slackness 2 it follows that either $\muj_+ = 0$ or $\muj_- = 0$ since both of these values being positive leads to a contradiction. 
This leads to three cases: 1) $\muj_+ = 0, \muj_- > 0$, 2) $\muj_- = 0, \muj_+ > 0$ and 3) $\muj_+ = \muj_- = 0$ which correspond to the three possibilities for $sign(\lambda_j)$.
For each of the $d$ dimensions we can try all three choices.
For a fixed guess of the signs let $S_+ = \{j \colon \lambda_j > 0\}$, $S_0 = \{j \colon \lambda_j = 0\}$ and $S_- = \{j \colon \lambda_j < 0\}$.
Assuming a correct guess of $sign(\lambda_j)$ for each of the dimensions the optimization problem above reduces to the following:

%

\begin{proposition}\label{prop:approx-to-exact-reduction}
For the correct guess of $sign(\lambda_j)$ for all $j \in [d]$ it suffices to find the optimum of the above optimization problem without the constraints for $j \in S_0$. This optimum is unique.
\end{proposition}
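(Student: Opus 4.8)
The plan is to prove the proposition in two parts: first, that for the correct sign guess deleting the inequality constraints indexed by $S_0$ leaves the optimizer unchanged; second, that the resulting problem (box constraints plus the equality constraints from $S_+$ and $S_-$) has a unique minimizer. Throughout I would lean on the fact already recorded above that, since the projection objective is convex and all constraints are linear, the KKT conditions (Stationarity, Complementary slackness, Primal/Dual feasibility) are simultaneously necessary and sufficient for global optimality; this is what makes it legitimate to reason about the true optimum purely through its KKT certificate.

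For the first part, let $\vx^*$ denote the optimum of the full projection problem and fix the sign pattern of $\{\lambda_j\}$ that it induces, so that $S_+, S_0, S_-$ are defined relative to $\vx^*$. The key observation is that for every $j \in S_0$ we have $\lambda_j = \muj_+ - \muj_- = 0$; combined with the earlier remark that at least one of $\muj_+, \muj_-$ must vanish (both weight constraints cannot be active at once when $\eps > 0$), this forces $\muj_+ = \muj_- = 0$ for all $j \in S_0$. Consequently the corresponding summands drop out of the Stationarity equation, and the remaining KKT data $(\vx^*, \{\mu_i\}, \{\muj_\pm\}_{j \in S_+ \cup S_-})$ satisfies verbatim the KKT system of the reduced problem, in which the $S_0$ inequalities are removed and the $S_+, S_-$ inequalities are promoted to the equalities $\sum_i \wj_i x_i = \eps$ (for $j \in S_+$) and $\sum_i \wj_i x_i = -\eps$ (for $j \in S_-$). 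Since this reduced problem is again convex (a convex quadratic over the intersection of the box $\mathcal B_\infty$ with an affine subspace), KKT sufficiency certifies $\vx^*$ as a global optimum of it; this is exactly the assertion that it suffices to solve the reduced problem.

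For the second part, uniqueness is immediate from strict convexity: the objective $\|\vx - \vy\|_2^2$ has Hessian $2I \succ 0$, while the feasible region of the reduced problem is the intersection of the compact box $\mathcal B_\infty$ with affine equality constraints, hence a nonempty compact convex set (it contains $\vx^*$). A strictly convex function attains its minimum at a unique point of such a set, so $\vx^*$ is the unique optimum, and it can then be recovered through the truncation formula $x_i = [y_i - \sum_{j} \lambda_j \wj_i]$ once the multipliers $\{\lambda_j\}$ are pinned down by the active equality constraints.

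The main obstacle is the first part, specifically the justification that discarding the $S_0$ constraints is harmless. The crux is the complementary-slackness argument that the multipliers attached to the $S_0$ constraints vanish for the correct guess, so that the optimality certificate of the full problem restricts unchanged to one for the reduced problem; convexity (KKT sufficiency) then upgrades this certificate to global optimality. By contrast, the uniqueness claim is a routine consequence of strict convexity over a compact convex set and requires essentially no additional work.
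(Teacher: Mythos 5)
Your proof is correct, but it follows a genuinely different route from the paper's. The paper stays entirely on the primal side: it takes the correct guess to be the one whose signs match the constraints that are tight at the optimum (so every $j \in S_0$ has a strictly slack constraint), lets $\vx^*$ and $\vx^*_0$ be the optima without and with the $S_0$ constraints, and, assuming they differ, moves from $\vx^*_0$ slightly toward $\vx^*$ along $\vz = (1-\alpha)\vx^*_0 + \alpha \vx^*$: for small $\alpha > 0$ the slack $S_0$ constraints still hold, every other constraint holds by convexity, and strict convexity of the objective gives $\|\vz - \vy\| < \|\vx^*_0 - \vy\|$, contradicting optimality of $\vx^*_0$. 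You instead work on the dual side: defining the guess by $sign(\lambda_j)$ at an optimal KKT certificate, you note that $\lambda_j = 0$ together with $\min(\muj_+, \muj_-) = 0$ forces $\muj_+ = \muj_- = 0$ for $j \in S_0$, so the certificate restricts verbatim to one for the reduced problem, and KKT sufficiency for convex programs certifies global optimality there. Your version meshes better with how the section actually defines the guesses and automatically handles the degenerate case of a constraint that is tight but carries a zero multiplier (such a $j$ simply stays in $S_0$ and nothing breaks), a case the paper must patch by switching to a different guess; the paper's version, in exchange, never invokes the existence of multipliers for the full problem (the necessity direction of KKT, which formally rests on a constraint qualification since the box constraints are written as quadratics $x_i^2 - 1 \le 0$), only convexity of the feasible sets. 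One step you should spell out: primal feasibility of the full optimum for the reduced problem, i.e. that $\sum_{i} \wj_i x_i^* = \eps$ for $j \in S_+$; this follows because $\lambda_j > 0$ and $\min(\muj_+, \muj_-) = 0$ give $\muj_+ > 0$, so Complementary slackness 2 makes that constraint tight (symmetrically for $S_-$). The uniqueness claims are essentially identical in both proofs: the paper cites uniqueness of projection onto a convex body, while you derive it from strict convexity of the objective over a convex feasible set.
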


The proof is given \iffull Appendix~\ref{app:proofs}\else in the full version\fi.
Using Proposition~\ref{prop:approx-to-exact-reduction} and trying all guesses for $sign(\lambda_j)$ we can reduce the projection step to $3^d$ instances of the following optimization problem:
\begin{tcolorbox}[
    standard jigsaw,
    opacityback=0,  
]
\begin{align*}
&\text{Minimize:} && f(\vx) = \|\vy - \vx\|_2^2&& \\ 
&\text{Subject to:} && g_i = x_i^2 - 1 \le 0 && \forall i \in [n]\\
&\sum_{i = 1}^n \wj_i x_i = \eps, & \forall j \in S_+; & \sum_{i = 1}^n \wj_i x_i = -\eps, && \forall j \in S_-
\end{align*}
\end{tcolorbox}
which can be done by finding numbers $\lambda_j > 0$ for $j \in S_+$ and $\lambda_j < 0$ for $j \in S_-$ and setting $x_i = [y_i - \sum_{j \in S_+ \cup S_-} \lambda_j w_{ij}]$. The choice of $\lambda_j$'s has to satisfy the constraints $\sum_{i = 1}^n w_{ij} x_i = \eps$ for all $j \in S_+$ and $\sum_{i = 1}^n w_{ij} x_i = -\eps$ for all $j \in  S_-$. In the analysis below we assume that $d = |S_+ \cup S_-|$ corresponds to the ``effective dimension'' of the problem.

\subsection{Exact Projection Algorithms}
\label{sec:exact-projection}

\paragraph*{Projection for $d = 1$}
As a warm up, we first show how to perform exact projection for $d = 1$ in $\Oh(n \log n)$ time, proving Theorem~\ref{thm:runtime} for $d=1$.
This can be further improved to $\Oh(n)$ using a more careful approach~\cite{MSMJ03}. However, to the best of our knowledge, no fast algorithm is known for $d > 1$ which is the main focus of our work. 
%
Dropping the second index to simplify presentation (that is, $w_i = \w 1_i$) and using the fact that $x_i = [y_i - \lambda w_i]$ we have:

\begin{align*}
\sum_i w_i x_i = &\sum_{i \colon y_i \ge 1 + \lambda w_i} w_i \ + \sum_{i \colon y_i \le -1 + \lambda w_i} -w_i \ +\\
&+ \sum_{i \colon y_i \in (-1+\lambda w_i, 1+\lambda w_i)} w_i (y_i - \lambda w_i).
\end{align*}

We introduce notation $h_i(\lambda)$ where each $h_i$ is the following piecewise linear function:
\begin{align*}
h_i(\lambda) = \begin{cases}
w_i  & \text{ if } \lambda < (y_i - 1) / w_i \\ 
w_i (y_i - \lambda w_i) & \text{ if } \lambda \in [(y_i - 1) / w_i, (y_i + 1) / w_i]\\
-w_i & \text{ if } \lambda > (y_i + 1) / w_i \\
\end{cases}
\end{align*}
Thus $\sum_{i = 1}^n w_i x_i = \sum_{i = 1}^n h_i(\lambda)$ and the problem reduces to finding $\lambda^*$ such that $\sum_i h_i(\lambda^*) = \pm \eps$ where the sign depends on whether our dimension is in $S_+$ or $S_-$.
Since $w_i \ge 0$ for all $i$ each $h_i$ is monotone in $\lambda$ and so the function $\sum_i h_i$ is a monotone piecewise linear function. The value of $\lambda^*$ can be found in $\Oh(\log n)$ iterations of binary search where each iteration requires $\Oh(n)$ time to evaluate the sum. This gives the overall running time of $\Oh(n \log n)$.
See Figure~\ref{fig:1D-projection} for an illustration.

\begin{figure}[!tb]
	\centering
\begin{tikzpicture}[scale=1.5]
	\pgfmathsetmacro{\left}{-2}
	\pgfmathsetmacro{\right}{2}
	
	\draw (-1,-1) -- (1, -1) -- (1, 1) -- (-1, 1) -- cycle;	
	\draw (\left, 0.95) -- (\right, -1.15) node [below] {$\langle w, \vx \rangle = -\eps$};
	\coordinate (wend) at (0.225, 0.5);
	\coordinate (uend) at (\right, -0.95);
	\draw[name path = UL] (\left, 1.15) -- (uend) node [above=0.3] {$\langle w, \vx \rangle = \eps$};
	\draw[->,name path = W] (0,0) -- (wend) node[right] {\small $w$};
	
	\coordinate (y) at (-1.5,1.5);
	\path[name path=left border] (-1,-1) -- (-1, 1);
	\draw[name intersections={of=left border and UL, by={x}}];
	\draw[fill=white] (y) circle (0.3mm) node[right] {$\vy$};
	\draw[fill=white] (x) circle (0.3mm) node[above right] {$\vx$};
	
	\path[name path = xcoord] (x) -- ($ (x) - (1,0)$);
	\path[name path = yparallel] (y) -- ($ (y) - 3*(wend) $);
	\draw[name intersections={of=xcoord and yparallel, by={x'}}];
	\draw[fill=white] (x') circle (0.3mm) node[below] {$\vy - \lambda^* w$};
	\draw[dotted,->,thick] (y) -- (x');
	\draw[dotted,->,thick] (x') -- (x);
	
	\draw[name intersections={of=UL and W, by={near0}}];
	\tkzMarkRightAngle[size=.12](uend,near0,wend);
	\draw[name intersections={of=UL and yparallel, by={yproj}}];
	\tkzMarkRightAngle[size=.12](y,yproj,uend);
	\coordinate (leftbottom) at (-1,-1);
	\tkzMarkRightAngle[size=.1](x',x,leftbottom);

\end{tikzpicture}
	\caption{One-dimensional projection. First, the initial point $\vy$ is moved by vector $-\lambda^* w$, which is an orthogonal vector to planes, corresponding to balance constraints. Then the resulting point is projected on the cube.}
	\label{fig:1D-projection}
\end{figure}

\paragraph*{Projection for $d = 2$}
For $d = 2$ we need to find $(\lambda_1, \lambda_2)$ such that $\sum_{i = 1}^n \hj_i(\lambda_1, \lambda_2) = \pm \eps$ for $j = 1,2$, where $\hj_i(\lambda_1, \lambda_2)$ is defined below. 
\begin{align*}
\hj_i(\lambda_1, \lambda_2) = \begin{cases}
\wj_i & \text{ if }  \sigma_i < y_i - 1 \\ 
\wj_i (y_i - \sigma_i) & \text{ if } \sigma_i \in [y_i - 1, y_i + 1]\\
-\wj_i & \text{ if } \sigma_i > y_i + 1 \\
\end{cases}
\end{align*}
where $\sigma_i =\lambda_1 \w 1_i + \lambda_2 \w 2_i$.
The projection process is shown in Figure~\ref{fig:2D-projection}.
In Appendix~\ref{app:projection2d}, we prove Theorem~\ref{thm:runtime} for $d=2$ showing that nested binary search can be used to solve this problem in $\Oh(n \log n)$ time.

\begin{figure}[!tb]
	\centering
\begin{tikzpicture}[scale=1.2]
	\pgfmathsetmacro{\left}{-2}
	\pgfmathsetmacro{\right}{2}
	
	\draw (-1,-1) -- (0.6, -1) -- (0.6, 0.6) -- (-1, 0.6) -- cycle;	
	
	\coordinate (y) at (-3,1.5);
	\coordinate (x) at (-1, -0.33);
	\draw[fill=white] (y) circle (0.3mm) node[right] {$\vy$};
	\draw[fill=white] (x) circle (0.3mm) node[above right] {$\vx$};
	
	\coordinate (x') at ($ (x) - (2.5,0)$);
	\draw[fill=white] (x') circle (0.3mm) node[below] {$\vy - \lambda_1^* \w 1 - \lambda_2^* \w 2$};
	\draw[dotted,->,thick] (y) -- (x');
	\draw[dotted,->,thick] (x') -- (x);
	
	\coordinate (y1) at ($ (y) - (0.6,0.8) $);
	\draw[fill=white] (y1) circle (0.3mm) node[left] {$\vy - \lambda_1 \w 1$};
	\coordinate (y2) at ($ (y) + (x') - (y1) $);
	\draw[fill=white] (y2) circle (0.3mm) node[right] {$\vy - \lambda_2 \w 2$};
	\draw[dotted,->] (y) -- (y1);
	\draw[dotted,->] (y) -- (y2);
	\draw[dotted,->] (y1) -- (x');
	\draw[dotted,->] (y2) -- (x');
\end{tikzpicture}
	\caption{Two-dimensional projection. Initial point $y$ is moved by vector $\lambda_1 \w 1 + \lambda_2 \w 2$ and then projected on the cube.}
	\label{fig:2D-projection}
\end{figure}

\newpage

\section{Implementation}
\label{sec:implementation}

\subsection{Projection algorithms}
\label{sec:impl_projection}

We considered the following three methods for the projection step (Algorithm~\ref{alg:br}, Line~\ref{ln:projection}). Their theoretical properties are summarized in Table~\ref{table:proj-properties}.

\begin{compactitem}
  \item \emph{Alternating projections:} A standard approach for projection on the intersection of convex sets is the alternating projections method (see~\cite{BD03}). 
  It is easy to implement projections on $\mathcal B_\infty$ and $\cap_{j = 1}^d S^j_\eps$ separately.
	Since both are convex bodies by alternating projections on each of them one can guarantee convergence to a point in the intersection, but there is no guarantee that this point will be the actual projection. In practice, we are able to achieve slightly better balance by modifying this approach slightly and projecting on $S^j_0$ instead of $S^j_\eps$. This still ensures that we get a point in the intersection in the end.
  \item \emph{Dykstra's projection:~\cite{Dykstra}} We also considered Dykstra's projection algorithm~\cite{D83}.
        This is a modification of the alternating projections method which is guaranteed to converge to the projection.
  \item \emph{Exact projection for $d \le 2$:} This is the algorithm presented in Section~\ref{SEC:PROJECTION}. In our experiments Dykstra's algorithm and exact projection give similar results, since they find approximately the same projection point. 
\end{compactitem}

\begin{table}

\begin{center}
\begin{tabular}{|c|c|c|c|}
     \hline
     & $d$& Output & Time required  \\
     \hline
     Alternating & any & $\vx \in K$ & Until convergence  \\
     \hline
     Dykstra's & any & projection& Until convergence \\
     \hline
     Exact (ours) & $d \le 2$ & projection& $\Oh(n \log^{d - 1} n)$\\
     \hline
\end{tabular}
\end{center}
\caption{Theoretical properties of projection methods.} \label{table:proj-properties}
\end{table}

In Section~\ref{sec:params} we study how quality of partitions produced by \algname depends on choice of one of the projection methods above.
Since the exact projection algorithm is computationally the most expensive, in our experiments we mostly use the alternating projections method.
Moreover, since in practice each iteration of alternating projection is computationally expensive, in the intermediate iterations we project on each plane and the cube only once, while in the last iterations we run the alternating projections method until convergence.
We refer to this choice as ``one-shot'' alternating projection below.


\subsection{Adaptive Step Size}
\label{sec:adapt}

Recall that Algorithm~\ref{alg:br} has the following parameters: Gaussian noise variances for each step $\set{\eta_t}$ and step size parameters $\set{\gamma_t}$.
Due to the spectral properties of the adjacency matrix in our experiments the algorithm doesn't encounter any saddle points other than the initial point $\vx = 0$. Therefore it suffices to add Gaussian noise only at first iteration (that is, $\eta_t = 0$ for $t \neq 0$).

The simplest choice of the step size parameters $\set{\gamma_t}$ is constant, but it gives suboptimal results in our experiments.
Carefully chosen step size parameters for different iterations not only gives better performance but can also be used to ensure that convergence can be reached in a fixed number of steps.
In section~\ref{sec:params} we discuss how to choose the step size to achieve good performance on a wide range of graphs.

The choice of step size parameters is complicated by the projection step. The change in the objective function and the progress towards an integral solution can both be related to the progress in Euclidean distance $\|\vx_t - \vx_{t + 1}\|$ between the iterations.
While consistent progress in Euclidean distance can be ensured by multiplying the gradient by an appropriate amount after the projection the actual progress can be much smaller. 

Another important implementation detail is our handling of vertices which are close to integral.
When the number of such vertices becomes large the progress of the algorithm can slow down.
This is due to the fact that while the gradient vector is still large all of its large components correspond to already integral vertices and point to the outside of the feasible region.
These large components can then dominate in the computation of the projection step which leads to slow convergence. In order to avoid this issue we ``fix'' such vertices so that they become integral and no longer participate in the gradient update and the projection step. As we show in Section~\ref{sec:params} this results in noticeable improvements in the quality of the resulting partitions.



\subsection{Partitioning Into k Buckets}
\label{sec:k-partitioning}

For partitioning into more than two buckets two main approaches are typically considered. We use the second approach due to its higher efficiency.

\emph{Problem relaxation for $k$ buckets:} For each vertex $i$ and bucket $j$ we can introduce a variable $p_{ij}$ corresponding to whether $i$ belongs to bucket $j$ and then adjust the relaxation accordingly.
    Our algorithm \algname can then be modified to handle such relaxation.
    The main drawback of this approach is that it requires $\Oh(k \cdot |E|)$ communication per iteration, which makes it infeasible for partitioning large graphs into many buckets.
    
 \emph{Recursive partitioning:} The graph is partitioned recursively $\lceil \log_2 k \rceil$ times into two parts.
    While there are cases when recursive partitioning can result in a suboptimal partition regardless of the underlying algorithm, this approach requires $\Oh(|E|)$ memory, $\Oh(|E|)$ operations per iteration and $\Oh(\log k)$ runs of \algname, which makes it applicable to very large graphs. For simplicity we only show results for $k$ being powers of two but the algorithm can be modified to handle any $k$ by changing the coefficients in the balance constraints.

\section{Experiments}\label{sec:experiments}

We design our experiments to understand how well the new partitioning algorithm, \algname, behaves on real-world datasets 
and how it affects the performance of distributed graph processing. 
As pointed out in Section~\ref{sec:intro}, we are not
aware of an alternative scalable approach for solving the {\it multi-dimensional} balanced partitioning.
However, some of the existing techniques for one-dimensional partitioning
can be adapted for the multi-dimensional case. Next we discuss several such techniques, which
are evaluated together with the newly proposed algorithm.

\texttt{Hash} is the simplest partitioning strategy that assigns vertices to worker machines
  by hashing the vertex identifiers. Hashing is stateless, extremely fast in practice, and requires no 
  preprocessing of the graph, which made it the default strategy in Giraph. The main disadvantage is
  that the majority of sent messages are
  non-local and may results in significant communication.
  
\texttt{Spinner} is a graph partitioning algorithm that can be applied to process large-scale 
  graphs in a distributed environment~\cite{MLLS17}.
  The algorithm is based on the label propagation technique in which vertices exchange their
  labels trying to pick the most frequent label among its neighbors. This process guarantees a high
  number of adjacent vertices having the same label, which are then assigned to the same worker.
  \texttt{Spinner} does not enforce a strict balance across partitions but integrates score
  functions that penalize imbalanced solutions.
  
\texttt{BLP} is another approach based on the balanced label propagation based on combining the ideas of Ugander and Backstrom~\cite{UB13} and Meyerhenke et al.~\cite{MSS14}.
  On the first step, the method creates a size-constrained clustering of the input graph using significantly more
  clusters than the number of available machines, $k$. In our implementation, we construct $c \times k$ clusters
  for $c = 1024$ and forbid a cluster to contain more than $\frac{|V|}{c \times k}$ vertices and
  $\frac{|E|}{c \times k}$ edges. On the second step, we randomly merge the clusters into $k$ partitions, 
  which results in the multi-dimensional balance even if the original clusters have different sizes.
  
\texttt{SHP} is a distributed graph partitioner~\cite{SH16,KKPPSAP17} that is based on a classical local
  search heuristic~\cite{KL70}. Although \texttt{SHP} does not provide balancing on multiple dimensions,
  it supports a mode with several dimensions whose final balance is not guaranteed.
  The algorithm works by balancing on a new dimension, which is a combination of the specified dimensions.
  We configure \texttt{SHP}
  to find solutions having the same number of edges (with a higher coefficient in the combination) and the same
  number of vertices (with a lower coefficient) in every partition.

We implemented the algorithms and
extensively experimented with the Giraph framework, which is used as the primary tool for
large-scale graph analytics at Facebook~\cite{Chi11,giraph}. Although the evaluation is
performed with the single distributed graph processing system, we believe that our main
conclusions are valid for other frameworks relying on the vertex-centric programming model.
For our experiments, we use four large social networks that are publicly available~\cite{snapnets}.
\texttt{LiveJournal}, \texttt{Orkut}, \texttt{Twitter}, and \texttt{Friendster} are undirected graphs
containing $4.8$, $3.1$, $41$, and $65$ million of vertices and $0.04$, $0.12$, $1.2$, and $1.8$ billion of edges, respectively.
In addition, we experiment with several large subgraphs of the Facebook friendship graph
that serve to demonstrate scalability of our approach and its performance on real-world data.
We denote the graphs by \texttt{FB-X}, where \texttt{X} indicates the (approximate) number of billions of edges; 
this data is anonymized before processing.

Next we analyze the quality of the solutions produced by the algorithms on our
dataset (Section~\ref{sec:multi}) and evaluate various 
graph partitioning strategies for speeding up distributed graph processing for real-world workloads (Section~\ref{sec:digraph}).
Section~\ref{sec:params} investigates various parameters of \algname.

\subsection{Multi-Dimensional Partitioning}
\label{sec:multi}

Our initial experiments (see Figure~\ref{fig:hist}) and earlier works~\cite{GLGBG12,LGHC17,MLLS17} indicate that two important
dimensions for the performance of Giraph jobs are the number of vertices and the number of edges. For this reason,
we specify two weights for the vertices, $\w 1_v = 1$ and $\w 2_v = \deg(v)$ for all $v \in V$.
Recall that our primary goal is to guarantee almost perfect balance for the two dimensions, 
as even a single overloaded partition affects the job performance. Figure~\ref{fig:imbalance_small} illustrates the resulting
vertex and edge imbalance of the solutions on the public networks for three algorithms, \texttt{Spinner}, \texttt{BLP}, and
\texttt{SHP}, using $k=2$ and $k=8$ partitions. The imbalance is defined as $\left(\frac{\max_i w(V_i)}{\avg_i w(V_i)} - 1\right)$, where
the maximum and the average are taken over the total weight of all $k$ constructed partitions.
We do not include the results for \texttt{Hash} and \algname, as the corresponding values are below
$0.01$ for the instances.

\begin{figure}[t]
  \centering
  \includegraphics[width=0.9\columnwidth]{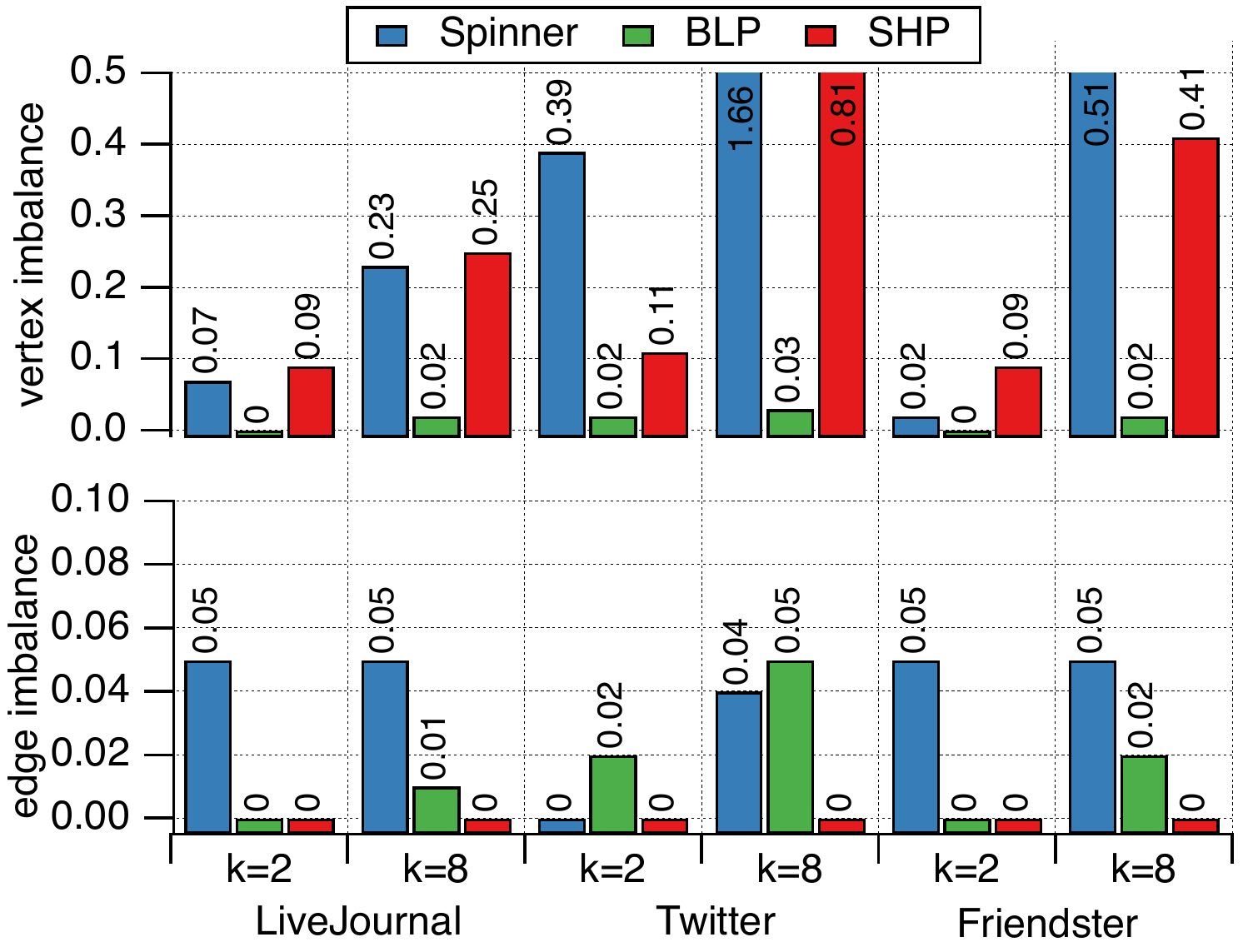}
  \caption{Vertex and edge imbalance ($\frac{\max_i w(V_i)}{\avg_i w(V_i)} - 1$) of the solutions created by
    different algorithms on the three public networks with $k \in \{2, 8\}$. 
    Lower values correspond to more balanced partitions.
    \texttt{Hash} and \algname yield near-balanced solutions for the instances.} 
  \label{fig:imbalance_small}
\end{figure}

We observe that two algorithms, \texttt{Spinner} and \texttt{SHP}, are not suitable for the multi-dimensional
variant of the problem. For dense graphs with a highly skewed degree distribution (as in \texttt{Twitter}), 
the algorithms cannot simultaneously provide balance on the two dimensions. With the default setting, these
two algorithms generate solutions in which some of the partitions contain $1.5-2$x more vertices than the
average one. We tried to modify the techniques by adjusting relative weights of their penalty functions
for vertex and degree counts in resulting partitions. However, we were not able to design universal
penalty weights that work for all instances. A similar behavior regarding the resulting balance is observed for our internal graphs,
\texttt{FB-3B}, \texttt{FB-80B}, and \texttt{FB-400B}.
In contrast, \texttt{Hash}, \algname, and \texttt{BLP}
produced nearly-balanced (that is, having $\eps \le 0.05$ both for vertex and edge counts) solutions
for all the instances. 
With this in mind, we exclude \texttt{Spinner} and \texttt{SHP} from further experiments.

Next we compare the quality of our algorithm as measured by the resulting edge locality, that is,
the percentage of uncut edges with both endpoints in the same partition. The metric represents the
fraction of local messages in Giraph jobs and corresponds to a possible reduction in communication 
between the worker machines. Figure~\ref{fig:quality_small} reports
the results of \texttt{Hash}, \algname, and \texttt{BLP} on the public dataset.
Unsurprisingly, \algname and \texttt{BLP} outperform the \texttt{Hash} algorithm in the experiment, 
as the latter keeps only $\frac{1}{k}$ of all the edges in the same partition.
The resulting edge locality of \algname and \texttt{BLP} are close for the three graphs, though
\algname typically achieves a higher locality by $2\%-5\%$.

\begin{figure}[t]
  \centering
  \includegraphics[width=0.9\columnwidth]{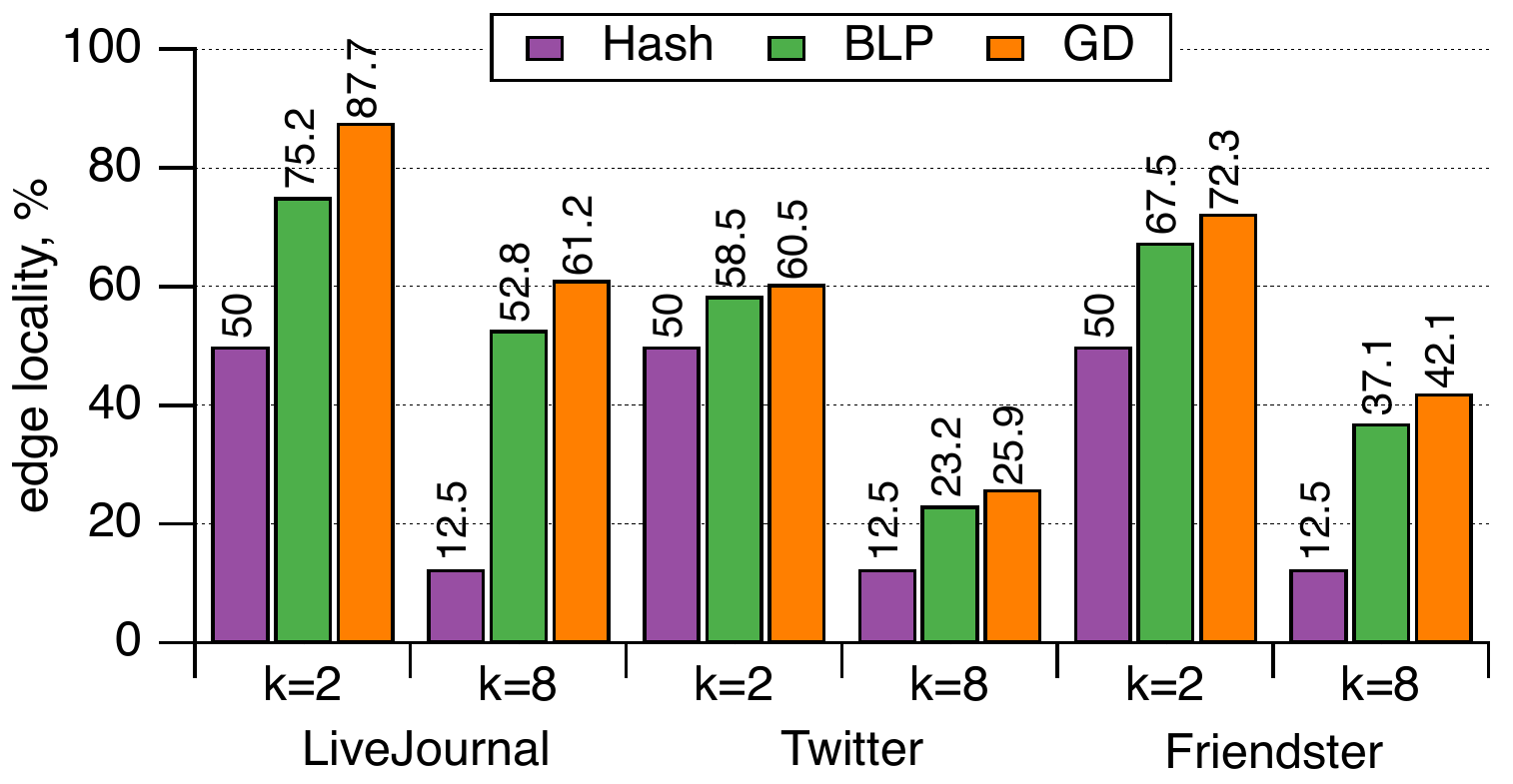}
  \caption{The percentage of local (uncut) edges produced by the three algorithms
    for the public graphs with $k \in \{2, 8\}$.
    Higher values indicate better solutions.
    \algname achieves higher locality in all cases.} 
  \label{fig:quality_small}
\end{figure}

Figure~\ref{fig:quality_large} shows the experiments on the Facebook friendship graphs.
Here we use a larger number of partitions, $k$, which more accurately represent the real-world
Giraph use case. Again, \texttt{Hash} produces solutions having the lowest edge localities.
In fact, over $99\%$ of the edges are cut using the partitioning strategy for an instance with
a hundred partitions. This is in agreement with our measurements of the typical percentage of cross-worker
Giraph messages in the production environment.
\rlabel{line:R2.7-reviewer}{
On the other hand, we observe a bigger advantage of \algname over \texttt{BLP}; the locality difference
is around $10\%-20\%$ for $k=16$ and $5\%-10\%$ for $k=128$.}
The balanced label propagation algorithm, \texttt{BLP}, could be configured to produce better results by
decreasing its cluster size threshold, $c$. However, this results in an imbalanced solution with $\eps > 0.05$
for the largest instance with $k=128$. Hence, we keep the value of $c=1024$ for all the experiments.
\begin{figure}[t]
  \centering
  \includegraphics[width=0.9\columnwidth]{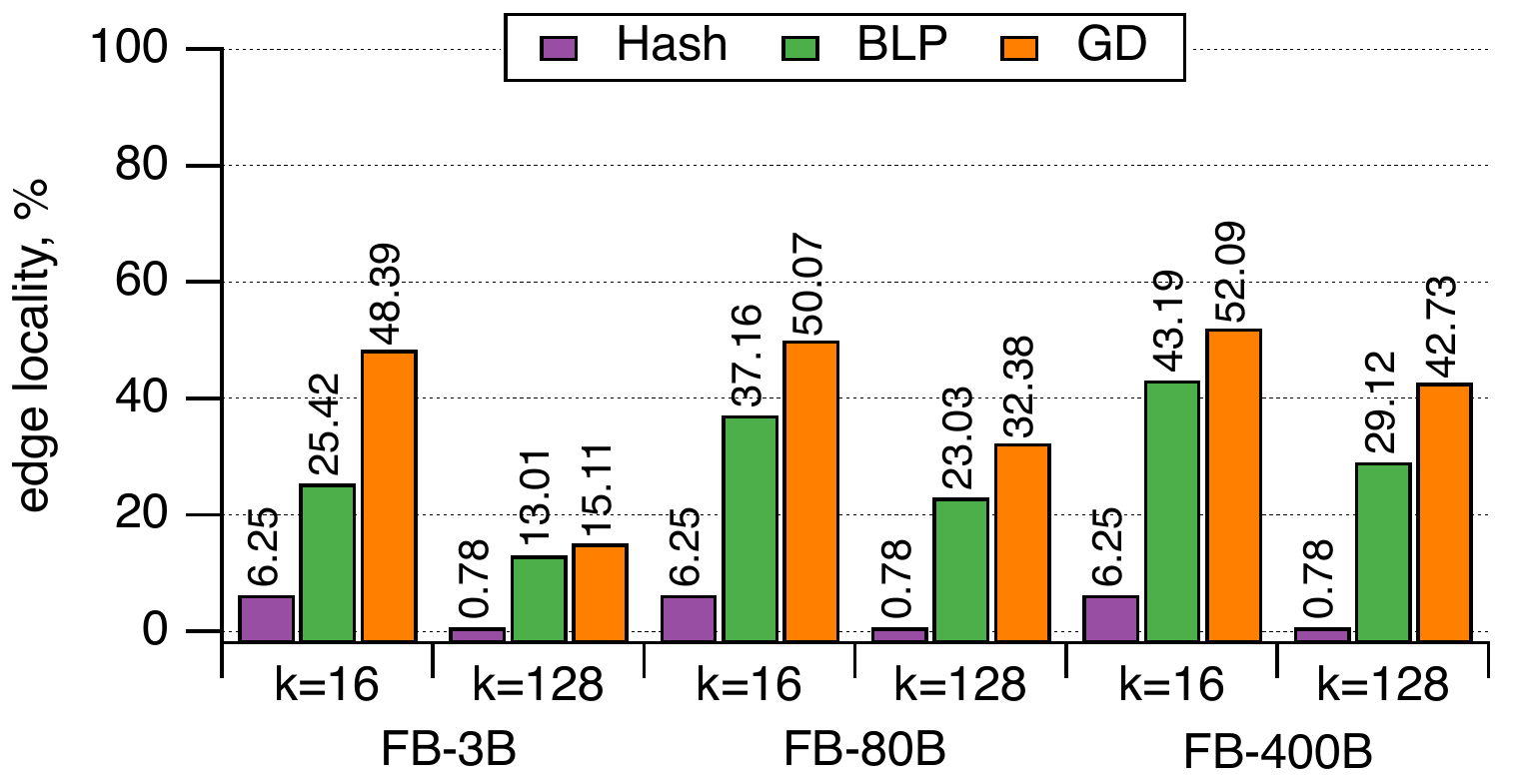}
  \caption{The percentage of local (uncut) edges produced by the three algorithms
    for various subgraphs of the Facebook friendship graph with $k \in \{16, 128\}$.
    Higher values indicate better solutions.
    \algname achieves higher locality in all cases.}
  \label{fig:quality_large}
\end{figure}
\rlabel{line:R2.7}{The main difference between FB graphs and publicly available graphs is the number of edges. The main reason why on FB graphs \algname performs better compared to other algorithms is poor performance of existing local-search based methods on large graphs in the multi-dimensional case. This is most obvious in Figure~\ref{fig:quality_large} for $k = 128$ as one can see that GD is gaining a larger advantage over BLP as the size of the graph grows (3B $\rightarrow$ 80B $\rightarrow$ 400B).}

Overall we conclude that \algname generates solutions of higher quality than \texttt{BLP} and \texttt{Hash}
on all examined instances. Therefore, we utilize the algorithm to experiment with distributed graph processing in the next section.
We present results for $3$- and $4$-dimensional experiments in \iffull Appendix~\ref{app:more_experiments}\else the full version\fi.

\subsection{Distributed Graph Processing}
\label{sec:digraph}

In this section we conduct an experimental evaluation of various graph partitioning strategies for
speeding up distributed graph processing. Here we argue and experimentally demonstrate that multi-dimensional balancing is 
a suitable objective for the application.
We experiment with four graph algorithms implemented in Giraph.
\texttt{Page Rank} and \texttt{Connected Components}, are popular benchmarks for verifying the performance
of distributed systems. \texttt{Page Rank} iteratively propagates vertex ranks through adjacent edges;
our implementation performs $30$ iterations for the algorithm. For the \texttt{Connected Components} algorithm,
we use a simple label propagation technique in which vertices iteratively update their labels based
on the minimum label of their neighbors; for our graphs, the process converges after at most $50$ rounds.
The other two algorithms, \texttt{Hypergraph Clustering} and \texttt{Mutual Friends}, 
are production applications for large-scale graph analytics at Facebook. The former is used to 
find a certain clustering of the input graph by converting it to a hypergraph. The latter builds a
set of features for friend recommendation on Facebook. Both applications extensively exchange messages
between adjacent vertices, which adds a significant communication overhead.

\begin{figure}[t]
  \centering
  \includegraphics[width=0.99\columnwidth]{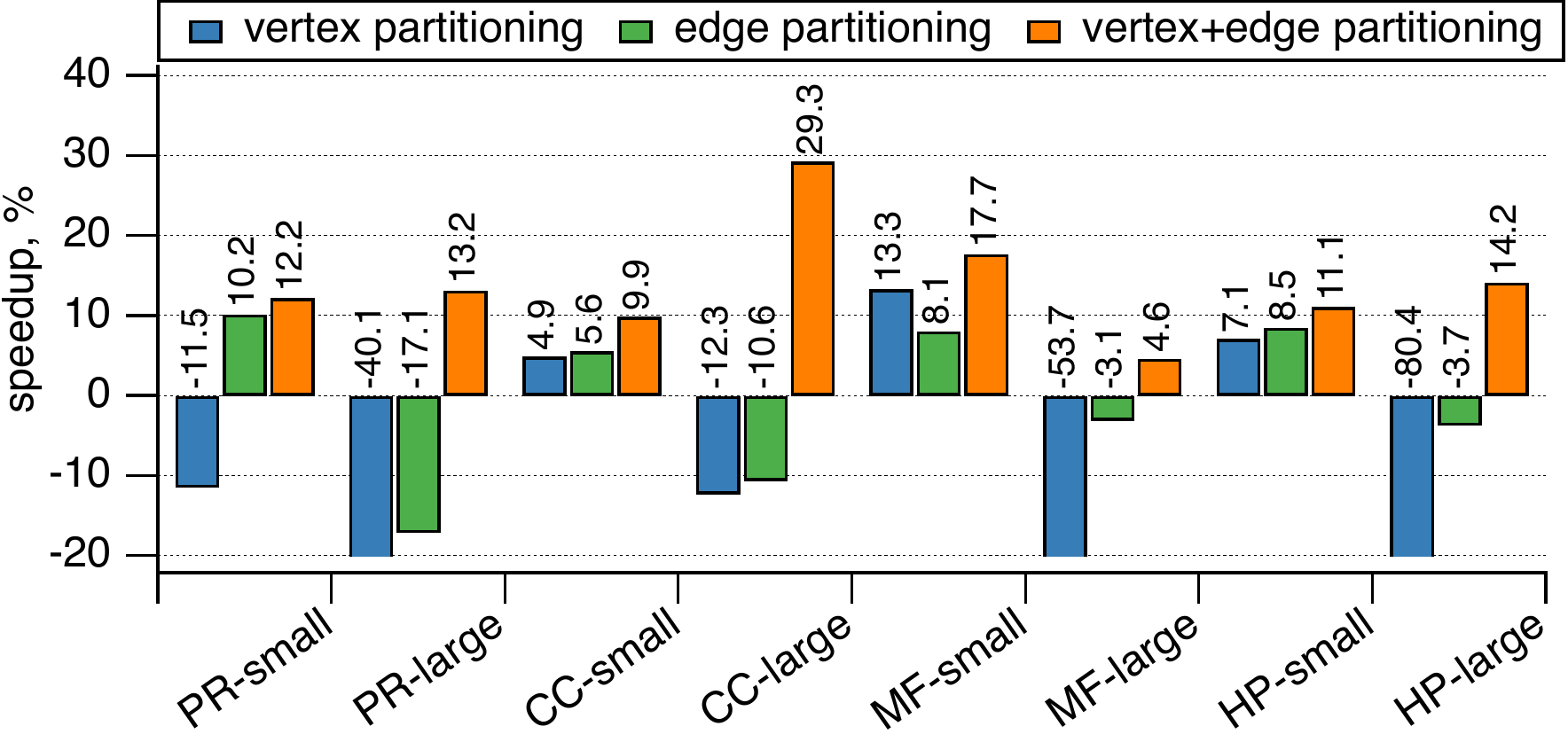}
  \caption{Speedup of Giraph jobs using various partitioning strategies relative to \texttt{Hash} measured for 
    \texttt{Page Rank} (\texttt{PR}), \texttt{Connected Components} (\texttt{CC}), 
    \texttt{Hypergraph Clustering} (\texttt{HC}), and \texttt{Mutual Friends} (\texttt{MF}), which are applied on
    \texttt{FB-80B} (\emph{small}) and \texttt{FB-400B} (\emph{large}) graphs.
    Positive values indicate improvements, negative ones indicate regressions.
    Vertex+edge partitioning always results in performance improvement.} 
  \label{fig:digraph}
\end{figure}

Figure~\ref{fig:digraph} depicts the results of our experiment.
Since we are interested in the impact of various partitioning policies on the performance of Giraph, we report the relative
differences to the baseline policy, \texttt{Hash}. Here we measure the total runtime of an application
using \algname as the partitioning strategy in three modes, \emph{vertex partitioning} (one-dimensional balance
on vertex count), \emph{edge partitioning} (balance on edge count), and \emph{vertex-edge partitioning} (two-dimensional balance
both on vertex and edge counts). Every algorithm is applied in two configurations, \emph{small} and \emph{large}.
The first one uses the \texttt{FB-80B} graph and a cluster with $16$ worker machines, while the second one
process \texttt{FB-400B} using $128$ workers.

The key finding is that one-dimensional partitioning cannot provide consistent benefits across all the Giraph applications.
In fact, we observe performance regression for some instances, in particular, when the number of utilized worker machines
is large, that is, $k=128$. In this scenario, we notice a few workers whose running time is significantly larger than the average; see 
Figure~\ref{fig:hist}. Since in Giraph (and other vertex-centric systems) the computation is split into a number of supersteps that end with a 
global synchronization barrier, the performance is determined by the slowest worker.
Notice that a similar phenomena regarding the \emph{vertex partitioning} has been observed in earlier works~\cite{GLGBG12,AKCV18,Sun18,GHCIE17}.
In contrast, the two-dimensional partitioning always results in a speedup over the default \texttt{Hash}
strategy. The improvement is in the order of $10\%-30\%$ for the examined applications.

\begin{table}[tb]
  \small
  \centering
  \caption{Impact of partitioning policy on the running time and the amount of sent messages
    across $128$ Giraph workers for the \texttt{Page Rank} application applied on the \texttt{FB-400B}
    graph. The numbers are average values over $30$ iterations.}
  \begin{tabular}{lcrcccc}
    \toprule
    \multicolumn{1}{c}{Partitioning}
    & \multicolumn{3}{c}{Runtime, sec}
    & \multicolumn{3}{c}{Communication, GB} \\
    & \multicolumn{1}{c}{mean} & \multicolumn{1}{c}{max} & \multicolumn{1}{c}{stdev} & 
    \multicolumn{1}{c}{mean} & \multicolumn{1}{c}{max} & \multicolumn{1}{c}{stdev} \\
    \midrule
    \texttt{Hash}
    & $95$ & $102$ & $27$ & $69.5$ & $69.6$ & $2.4$ \\
    \texttt{vertex}
    & $93$ & $143$ & $25$ & $18.6$ & $47.6$ & $6.8$ \\
    \texttt{edge}
    & $82$ & $120$ & $22$ & $25.7$ & $38.2$ & $5.9$ \\
    \texttt{vertex-edge}
    & $84$ & $88$ & $21$ & $29.1$ & $30.6$ & $2.8$ \\
    \bottomrule
  \end{tabular}
  \label{table:pr_128}
\end{table}

To get a deeper understanding of the source of performance differences, we analyze the detailed logs
for the \texttt{Page Rank} application using a cluster with $128$ worker machines. Table~\ref{table:pr_128} shows
the measurements of the mean, maximum, and standard deviation of the time to compute a superstep by all the workers.
The results indicate that the with hash partitioning the workers are idling on average for $7$ seconds per
superstep waiting for the slowest one to complete the work. With one-dimensional partitioning the idling time is much longer,
$50$ seconds for vertex-based partitioning and $38$ seconds for edge-based one, which is the primary
reason for the performance regression. The two-dimensional partitioning results in a more even load across the
workers delivering a $13.2\%$ speedup. 
Table~\ref{table:pr_128} also indicates a significant communication reduction over the baseline partitioning, as
measured by the total size of messages sent between the workers via network. For the \texttt{Page Rank} application,
the average reduction is correlated with the edge locality of the corresponding partitioning.
However, an unbalanced partitioning causes some workers to use more memory resources and become a bottleneck for graph processing.

Finally, we emphasize that the timings analyzed in the section exclude the running times of the partitioner itself.
This is realistic for our use case in which the same friendship graph is expected to be processed multiple times for various analytics tasks.
Thus, the extra overhead incurred by a partitioning strategy is amortized among several runs. 

\subsection{Parameters of GD}
\label{sec:params}
In this section we perform an experimental comparison of various choices of the projection step algorithm in \algname and study its convergence properties.
Unless specified otherwise, we use two-dimensional \algname in the following setting:
1) balance is required with respect to the number of vertices and their degrees,
2) in the projection step we use ``one-shot'' alternating projection (see Section~\ref{sec:impl_projection}),
3) we use adaptive step size and vertex fixing as described in Section~\ref{sec:adapt}.


\begin{figure}[!tb]
  \centering
  \begin{subfigure}[b]{0.23\textwidth}
    \centering
    \includegraphics[width=\textwidth]{experiments/steps/LiveJournal_steps.pdf}
  \end{subfigure}
  \begin{subfigure}[b]{0.23\textwidth}
    \centering
    \includegraphics[width=\textwidth]{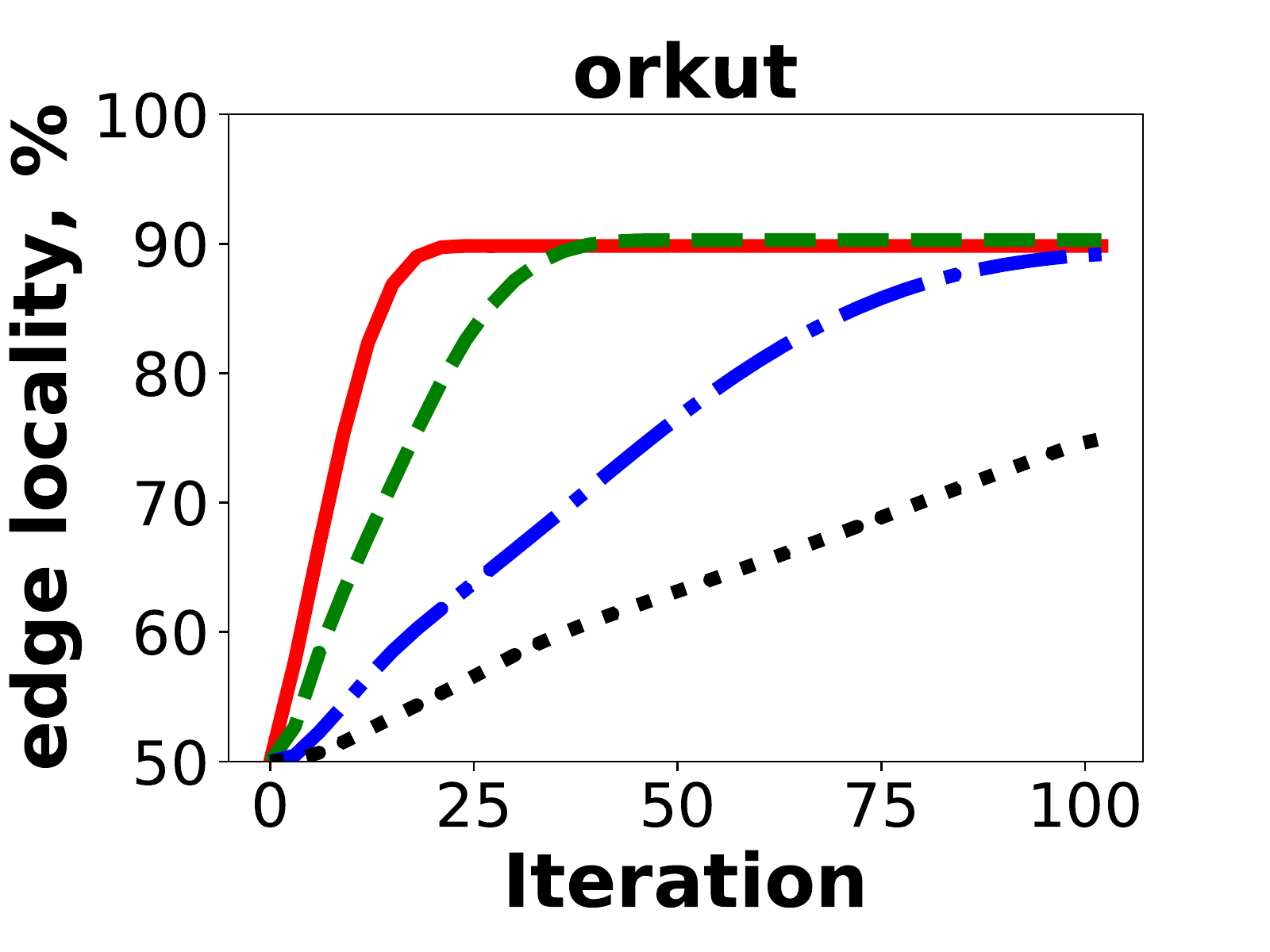}
  \end{subfigure}
  \begin{subfigure}[b]{0.46\textwidth}
    \centering
    \includegraphics[width=\textwidth]{experiments/steps/legend_steps.pdf}
  \end{subfigure}
  \caption{Comparison of step choices for \algname with fixed step length, that is, $\|\xt{t} - \xt{t+1}\|_2= const$, for $100$ iterations and $\xi = \sqrt{n}/100$. Step length $2 \cdot \xi$ results in good performance. }
  \label{fig:step}
\end{figure}

Since behavior of gradient descent algorithms can depend on selection of the step size parameters, we used experiments to establish convergence of \algname with different choices of these parameters.
In particular, our implementation aims to ensure that the step length $\|\xt{t} - \xt{t+1}\|_2$ remains close to constant between iterations.
A natural scaling parameter for the step length is $\sqrt{n}$ as it corresponds to the distance between the initial solution $\vx_0 = 0$ and any integral solution of the form $\{-1,1\}^n$.
As we show in Figure~\ref{fig:step} for various graphs a good choice of step size turns out to be $2 \frac {\sqrt n}{100}$, where $100$ is the limit we set on the number of iterations due to the constraints on the runtime during the execution.

\begin{figure}[!tb]
  \centering
  \begin{subfigure}[b]{0.23\textwidth}
    \centering
    \includegraphics[width=\textwidth]{experiments/adapt/LiveJournal_adapt.pdf}
  \end{subfigure}
  \begin{subfigure}[b]{0.23\textwidth}
    \centering
    \includegraphics[width=\textwidth]{experiments/adapt/LiveJournal_imb.pdf}
  \end{subfigure}
  \begin{subfigure}[b]{0.23\textwidth}
    \centering
    \includegraphics[width=\textwidth]{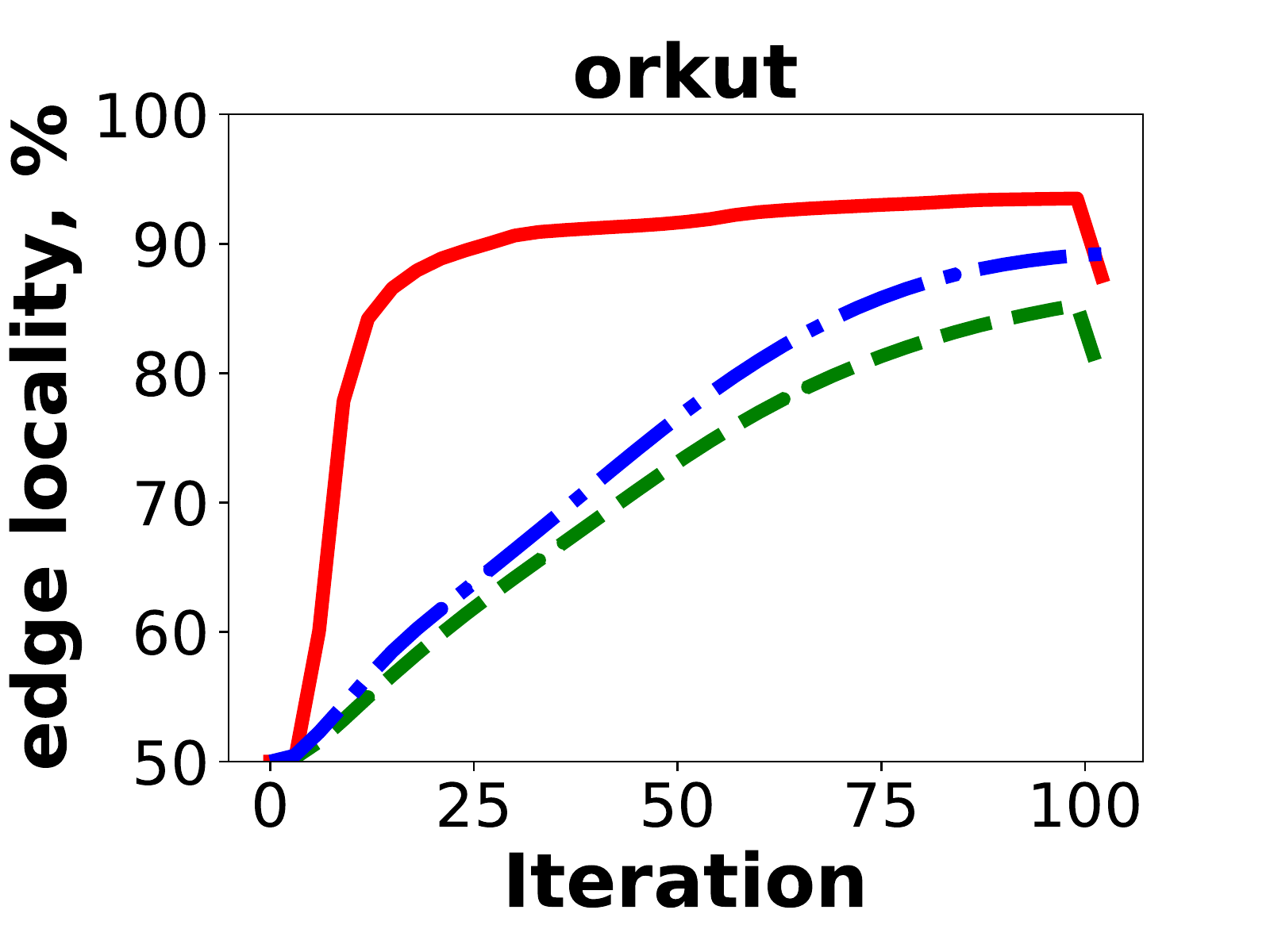}
  \end{subfigure}
  \begin{subfigure}[b]{0.23\textwidth}
    \centering
    \includegraphics[width=\textwidth]{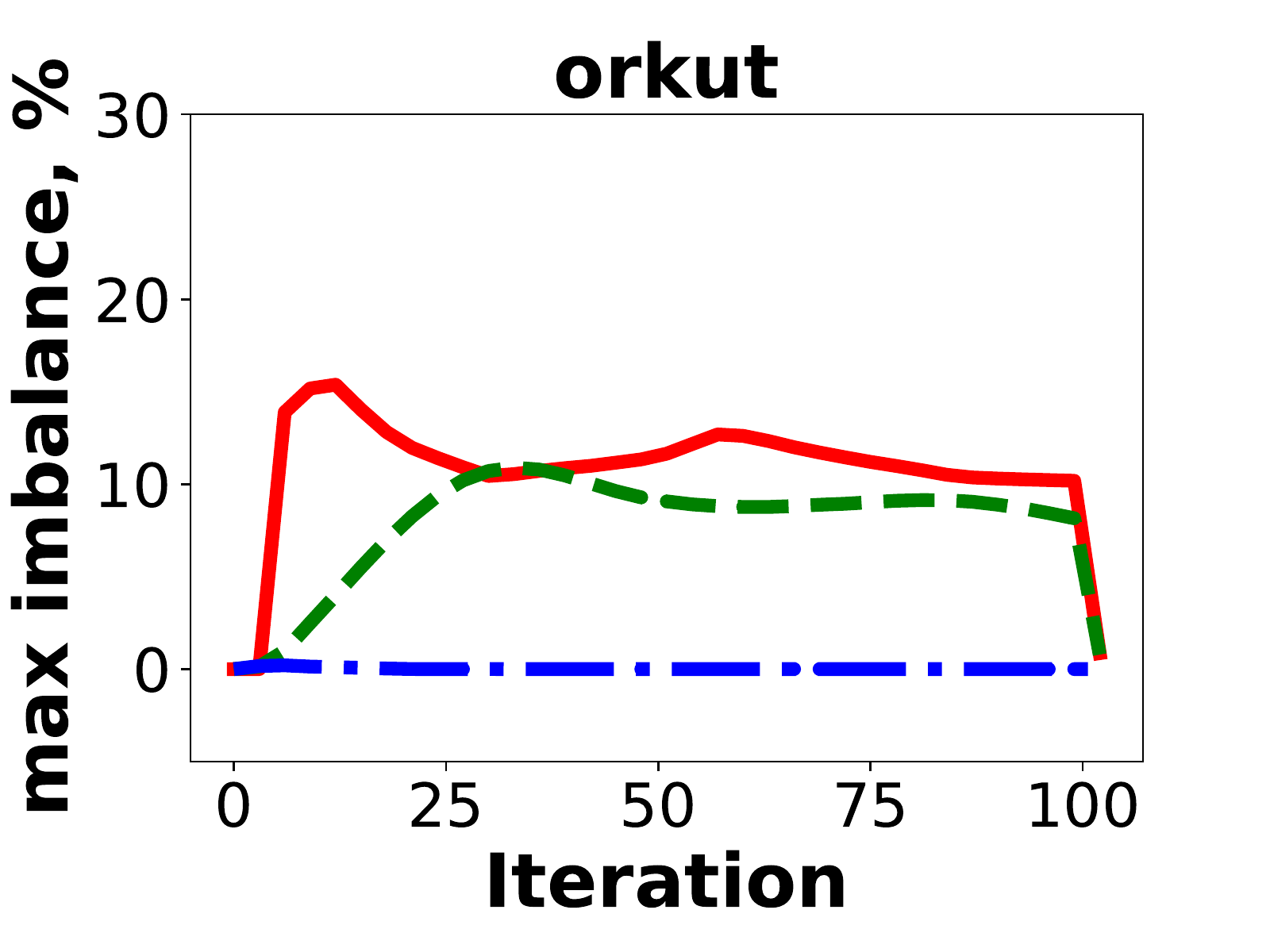}
  \end{subfigure}
  \begin{subfigure}[b]{0.46\textwidth}
    \vspace{2mm}
    \centering
    \includegraphics[width=\textwidth]{experiments/adapt/legend_adapt.pdf}
  \end{subfigure}
  \caption{Quality and imbalance comparison of \algname 1) without adaptive step size, 2) with adaptive step size and 3) with adaptive step size with vertex fixing.
  The left side shows edge locality and the right side~-- maximum imbalance over all dimensions.
    For nonadaptive and adaptive strategies the changes in the number of cut edges and imbalance in the last iteration are due to fixing in the end of the algorithm the accumulated imbalance resulting from ``one-shot'' alternating projection.
    Using \algname with adaptive step size and vertex fixing results in better locality and preserves almost perfect balance during algorithm execution.
  }
  \label{fig:adapt}
\end{figure}

In Figure~\ref{fig:adapt} we show how adaptive step size and vertex fixing described in  Section~\ref{sec:implementation} affect the performance of the algorithm. Note that compared with other methods vertex fixing not only improves quality but also preserves almost perfect balance even when simple ``one-shot'' alternating projection is used.
Finally, in Figure~\ref{fig:proj} we show analysis of performance of the algorithm under different choices of the projection method. 
\rlabel{line:R2.10}{The results show that the exact projection algorithm with sufficiently large allowed imbalance leads to the best performance. Larger imbalance permits more partitions, possibly including ones with better locality, allowing the overall algorithm to find partitions with better locality. However, the alternating projections algorithm can often be used to achieve similar performance. This is most likely due to the fact that the alternating projections algorithm despite not computing the projection outputs a point close enough to it.

}
\begin{figure}[!tb]
  \centering
  \begin{subfigure}[b]{0.23\textwidth}
    \centering
    \includegraphics[width=\textwidth]{experiments/proj/LiveJournal_proj.pdf}
  \end{subfigure}
  \begin{subfigure}[b]{0.23\textwidth}
    \centering
    \includegraphics[width=\textwidth]{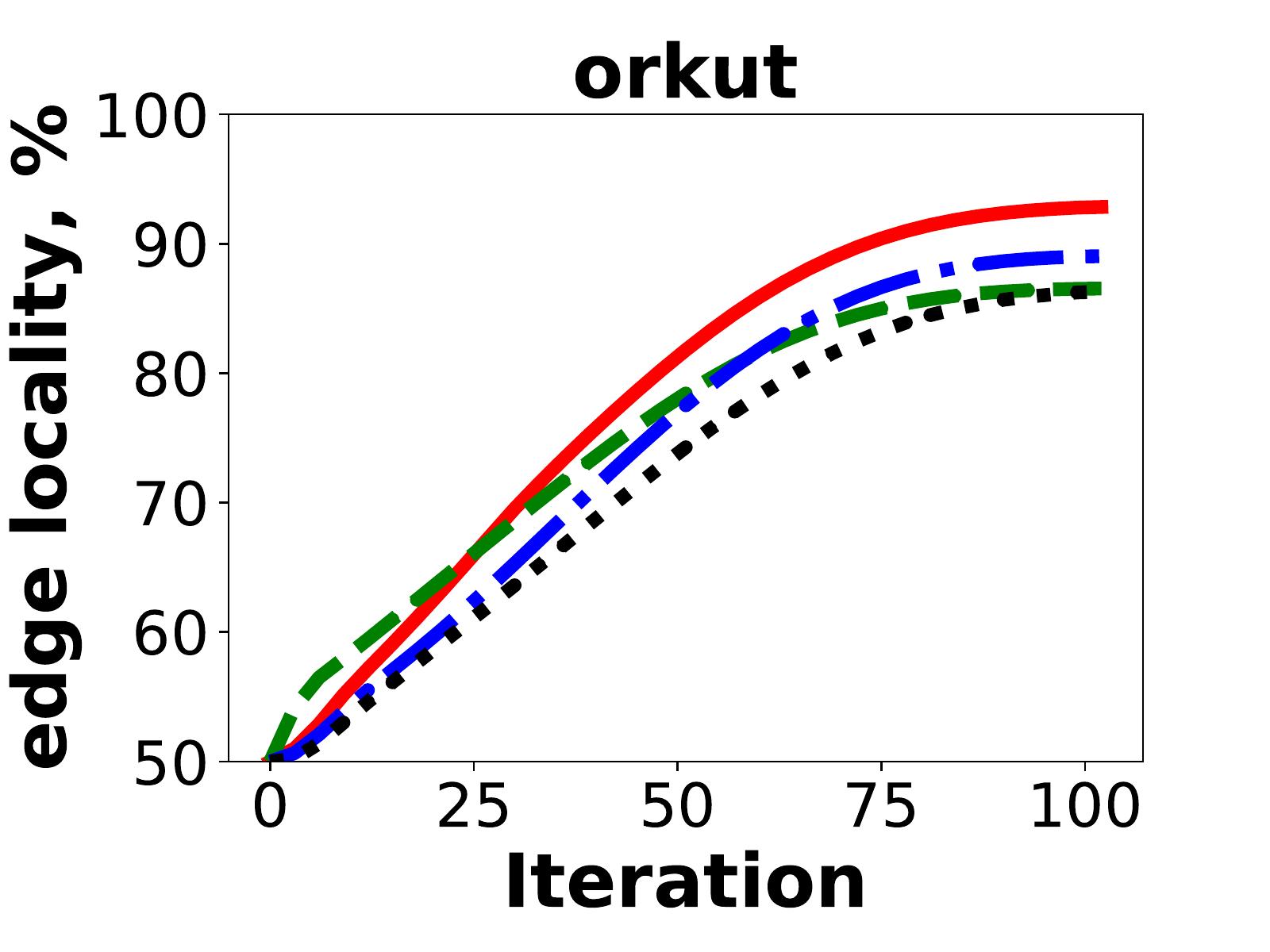}
  \end{subfigure}
  \begin{subfigure}[b]{0.46\textwidth}
    \centering
    \includegraphics[width=\textwidth]{experiments/proj/legend_proj.pdf}
  \end{subfigure}
    \caption{Quality comparison of \algname with various projection methods.
      We compare exact projection with various allowed imbalance parameters and ``one-shot'' alternating projection. Allowing more imbalance typically results in partitions with better quality.
      ``One-shot'' alternating projection, which we choose as our default implementation option due to its efficiency on larger datasets, produces partitions comparable with exact projection.
      Dykstra's projection produces the same results as the exact projection, and therefore is not shown.
    }
    \label{fig:proj}
\end{figure}


\subsection{Performance Analysis}

\begin{table}[ht]
  \begin{minipage}[b]{\linewidth}
    \centering
    \includegraphics[width=0.7\textwidth]{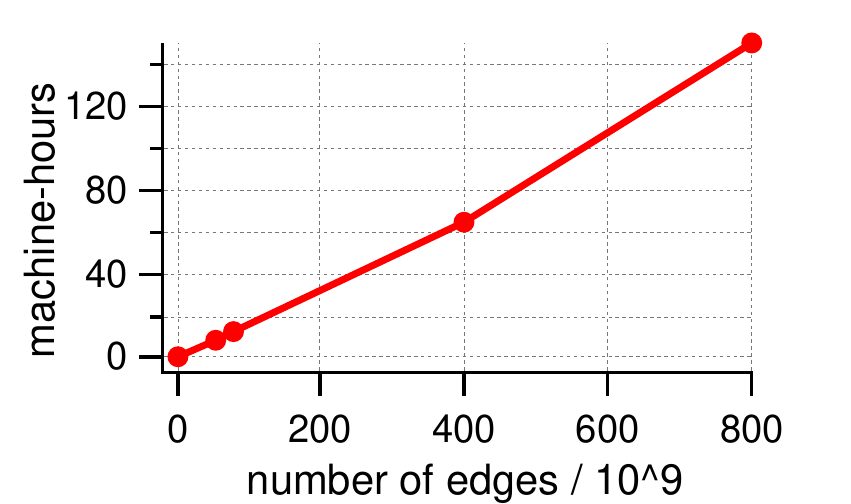}
    \vspace{0.05cm}
    \captionof{figure}{Scalability of the distributed implementation of \algname on 
      \texttt{FB-X} graphs of various size.
      The results indicate linear dependence of time on the number of edges.}
    \label{fig:mh}
  \end{minipage}
\end{table}

Finally, we analyze scalability of our algorithm. Our results are obtained on 
a Hadoop cluster of $128$ workers; each of the machines is a dual-node 2.4 GHz
Intel Xeon E5-2680 with 256GB RAM.
Figure~\ref{fig:mh} reports the running time of \algname in machine-hours
on \texttt{FB-X} graphs of various size with balance on two dimensions.
We observe a near-linear growth of the running time with the size of the input graph.
In comparison, the running time of the \texttt{SHP} algorithm exceed the values by a factor
of $1.5-2$ on the same cluster configuration. Despite the fact that our implementation is not 
specifically optimized for performance, \algname processes huge graphs within a few hours
in the distributed setting.

\section{Conclusion}\label{sec:conclusions}

We introduced a new \textsc{Multi-Dimensional Balanced Graph Partitioning} algorithm which produces balanced partitions according to multiple user-specified weight functions while maintaining high edge locality.
Our results show that this algorithm is scalable and for large graphs with small allowed vertex and edge imbalance outperforms existing solutions. Resulting partitions allow one to achieve substantial speedups in computational time for various computational tasks. This is in contrast with balancing on just one dimension (for example, vertex or edge count, separately), which can sometimes result in worse performance.
We state several open problems below.

One of the most interesting directions for future work is incorporating a wider range of balancing requirements, for example, those that can depend on the resulting partitioning itself such as the number of local edges and the maximum number of edges going between any pair of parts in the resulting partition. For example, the latter quantity can substantially affect performance of distributed computation tasks in Giraph-like systems as communication between different machines depends on the number of edges between them. Note that our proposed algorithm can't directly handle such solution-dependent weight functions as they can't be specified through an a priori fixed collection of weight functions.

A scalable algorithm for solving multi-dimensional balanced partitioning into $k$ parts without using recursive partitioning. As discussed in Section~\ref{sec:k-partitioning}, applying similar algorithm to straightforward problem relaxation results into $\Oh(k \cdot |E|)$ communication, which comes from inherently continuous nature of the algorithm compared to discrete ones.
In discrete algorithms a vertex can occupy only one bucket, but in our algorithm it can occupy all buckets with some probabilities. Since all these probabilities may change, $\Theta(k)$ information can be sent to neighbors.

An interesting theoretical question is finding a fast algorithm for exact projection for $d > 2$.
As we will show in
\iffull Appendix~\ref{app:bin_search}, \else in the full version, \fi it is possible to use nested binary search to find $\set{\lambda_j}$ (and therefore the projection) with arbitrary precision.
Unfortunately, the running time of the suggested algorithm is unknown, because it is unclear how to estimate left and right bounds for binary search.
Determining these bounds gives an algorithm with running time $\Oh(n \cdot \prod_{i=1}^d \log \frac{r_j - l_j}{\delta})$, where $l_j$ and $r_j$ are bounds for $\lambda_j$ and $\delta$ is the required precision.

Another interesting theoretical question is understanding the convergence properties of our algorithm (or a similar gradient descent based method) under some assumption about the spectral properties of the graph. We see this as a challenging open problem -- while noisy gradient descent is known to have fast convergence for non-convex optimization subject to equality constraints, if inequality constraints are allowed convergence analysis is unknown~\cite{GHJY15}.

\appendix
\iffull
\section{Multidimensional projection}

In this section we consider projection problem in multidimensional case.
In section~\ref{SEC:PROJECTION} we reduced projection to the following optimization problem.
\eqbox{\begin{align*}
& \text{Minimize:}   && f(\vx) = \|\vy - \vx\|_2^2 && \\ 
& \text{Subject to:} && g_i = x_i^2 - 1 \le 0      && \forall i \in [n]\\
&                    && \sum_{i = 1}^n \wj_i x_i = \epsilon && \forall j \in S_+; \\
&                    && \sum_{i = 1}^n \wj_i x_i = -\epsilon && \forall j \in S_-
\end{align*}}

Then KKT conditions for this problem were further reduced to the following problem. Given $\vy \in \mathbb R^n$ we need to find its projection $\vx$ whose coordinates are given as $x_i(\lambda_1, \ldots, \lambda_d) = [y_i - \sum_j \wj_i \lambda_j]$ by selecting the values $(\lambda_1, \ldots, \lambda_d)$ in order to satisfy the balance constraints, i.e. $\sum_{i = 1}^n \wj_i x_i = \epsilon$ for $j \in S_+$ and $\sum_{i = 1}^n \wj_i x_i = -\epsilon$ for $j \in S_-$.
We consider more general constraints: $\sum_{i = 1}^n \wj_i x_i = c_j$ for $j \in [d]$, where $\set{c_j}$ are some constants.
Let $\vlam = (\lambda_1, \ldots, \lambda_d)$.
Since $\vx$ can be computed based on $\vlam$, it remains to show how to find $\vlam$ satisfying these constraints.

The contents of this section are the following:
\begin{itemize}
	\item We show that it's possible to find $\vlam$ (and therefore $\vx$) with arbitrary precision using nested binary search.
	\dtodo{But we don't know bounds}\gtodo{Is this even in the paper?}
	\item We describe an $\Oh(n \log n)$-time algorithm finding the exact values of $\vlam$ in $2$-dimensional case.
\end{itemize}

\subsection{Nested binary search}
\label{app:bin_search}

Recall that $\hj(\vlam) = \sum_i w_{ij} x_i$ .
As shown in Section~\ref{SEC:PROJECTION}, $\hj(\vlam) = \sum\limits_{i=1}^n \hj_i(\vlam)$, where:
\begin{align*}
\hj_i(\vlam) = \begin{cases}
	\wj_i  & \text{if } \sum\limits_k \lambda_k \wk_i < y_i - 1 \\ 
	\wj_i (y_i - \sum\limits_k \lambda_k \wk_i) & \text{if } \sum\limits_k\lambda_k \wk_i \in [y_i - 1, y_i + 1]\\
	-\wj_i & \text{if } \sum\limits_k \lambda_k \wk_i > y_i + 1 \\
\end{cases}
\end{align*}
We want to find $\vlam^*$ such that $\hj(\vlam^*) = c_j$ for all $j \in [d]$.
\begin{lemma}[Uniqueness]
	There is at most one point $\vx$ for which there exists $\vlam^*$ such that $(\vx, \vlam^*)$ satisfy KKT conditions.
\end{lemma}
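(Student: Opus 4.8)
The plan is to observe that the optimization problem solved in the projection step is a \emph{strictly} convex program over a convex feasible region, so that its optimum---and hence any point satisfying the KKT conditions---is unique. First I would record that the objective $f(\vx) = \|\vy - \vx\|_2^2$ is strictly convex in $\vx$, and that the feasible region
\[
K' = \{\vx \in \mathbb R^n : x_i^2 \le 1 \text{ for all } i \in [n],\ \sum_{i=1}^n \wj_i x_i = c_j \text{ for all } j \in [d]\}
\]
is convex, being the intersection of the box $\mathcal B_\infty$ with the affine subspace cut out by the linear equality constraints. Minimizing $f$ over $K'$ is therefore exactly the problem of projecting $\vy$ onto a closed convex set.

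Next I would invoke the fact already used in Section~\ref{SEC:PROJECTION} (see~\cite{BV04}, Chapter 5.5.3): because the objective is convex and all constraints are affine, the KKT conditions are necessary and sufficient for global optimality. Consequently, any pair $(\vx, \vlam^*)$ satisfying the KKT conditions certifies that $\vx$ is a global minimizer of $f$ over $K'$. It then remains to show that this global minimizer is unique. Suppose for contradiction that two distinct feasible points $\vx_1 \neq \vx_2$ both attain the optimal value $f^* = \min_{\vx \in K'} f(\vx)$. By convexity of $K'$ their midpoint $\bar{\vx} = \tfrac12(\vx_1 + \vx_2)$ is feasible, and by strict convexity of $f$,
\[
f(\bar{\vx}) < \tfrac12 f(\vx_1) + \tfrac12 f(\vx_2) = f^*,
\]
contradicting the optimality of $f^*$. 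Hence the minimizer is unique, and therefore at most one $\vx$ can satisfy the KKT conditions. If $K'$ is empty the statement holds vacuously, since then no KKT point exists.

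The argument is short because essentially all the real work has already been done in reducing the projection to this convex program, so I do not expect a genuine technical obstacle here. The one point that requires care is the precise phrasing of the lemma: while the \emph{primal} optimum $\vx$ is unique, the dual multipliers $\vlam^*$ need not be, since if the weight vectors $\{\wj\}$ are linearly dependent on the coordinates where $x_i \in (-1,1)$, several choices of $\vlam^*$ can realize the same $\vx$. The lemma asserts uniqueness only of $\vx$, which is exactly what the downstream algorithm needs, as $\vx$ is recovered from any valid $\vlam^*$ via $x_i = [y_i - \sum_j \lambda_j \wj_i]$ with the truncation $[z] = \min(1,\max(-1,z))$.
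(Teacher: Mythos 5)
Your proof is correct and follows essentially the same route as the paper's: both invoke convexity of the feasible set, KKT sufficiency for convex programs (citing the same reference), and strict convexity of the squared $L_2$ objective to conclude the projection point is unique. You merely spell out the midpoint argument that the paper leaves implicit, and your closing remark that the multipliers $\vlam^*$ need not be unique matches the paper's own note immediately following its proof.
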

\begin{proof}
	Our optimization problem is convex, since $L_2$-norm is a convex function, cube and planes are convex sets and their intersection is also a convex set. As follows from~\cite{BV04}, any pair $(\vx, \vlam^*)$ satisfying KKT conditions is a solution (i.e. $\vx$ is the projection). By strict convexity of $L_2$-norm the projection is unique, and therefore there is at most one $\vx$ satisfying KKT conditions.
\end{proof}
Note that there can be several $\vlam^*$ corresponding to the same $\vx$.
In the rest of the section we show that it is possible to find $\vlam^*$ using nested binary search.
For that purpose we define auxiliary functions $\Delta_1, \ldots, \Delta_d$ in the following way.

For any value of $\lambda_1$ we would like to find $\lambda_2, \ldots, \lambda_d$ such that constraints $\h 2(\vlam)=c_2, \ldots, \h d(\vlam)=c_d$ are satisfied.
We define $\Delta_1(\lambda_1)$ as $\h 1(\vlam)$.
We will show that $\Delta_1$ is well-defined (when the feasible space is not empty) and monotone.
Therefore, we can use binary search to find $\lambda_{1}$ for which $\h 1(\vlam)=c_1$ is satisfied.

Consider the nested problem. Assume that $\lambda_1$ is fixed.
Then for any value of $\lambda_2$ we would like to find $\lambda_3, \ldots, \lambda_d$ such that constraints $\h 3(\vlam)=c_3, \ldots, \h d(\vlam)=c_d$ are satisfied.
Similar to $\Delta_1$ we define $\Delta_2(\lambda_1, \lambda_2)$ as $\h 2(\vlam)$ and we will show that $\Delta_2$ is well-defined and monotone on $\lambda_2$.
Therefore, again, we can use binary search to find $\lambda_2$.
We define $\Delta_t(\lambda_1, \ldots, \lambda_t)$ for all $t$ and show that $\Delta_t$ is monotone on $\lambda_t$.

\begin{definition}
	Consider $t \in [d]$. Let $\vlam = (\lambda_1, \dots, \lambda_d)$ and assume that constraints $\hj(\vlam) = c_j$ are satisfied for all $j > t$. Then we define $\Delta_t(\lambda_1, \ldots, \lambda_t) \triangleq \h t(\vlam)$ and call $\lambda_{t+1}, \ldots, \lambda_d$ \emph{suitable} for $\lambda_1, \ldots, \lambda_t$.
\end{definition}
Note that $\Delta_t$ is a function of the first $t$ coordinates.
\begin{lemma}[$\Delta$ is well-defined]\label{app:lem:well_def}
	For fixed $\lambda_1, \ldots, \lambda_t$ different suitable $\lambda_{t+1}, \ldots, \lambda_d$ produce the same $\vx(\vlam)$.
	Therefore, $\Delta_t(\lambda_1, \ldots, \lambda_t)$ is the same for different suitable $\lambda_{t+1}, \ldots, \lambda_d$.
	If the feasible space is not empty, then for fixed $\lambda_1, \ldots, \lambda_t$ there exist suitable $\lambda_{t+1}, \ldots, \lambda_d$.
\end{lemma}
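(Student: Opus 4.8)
The plan is to recognize that, for a fixed prefix $\lambda_1, \ldots, \lambda_t$, the task of finding suitable $\lambda_{t+1}, \ldots, \lambda_d$ is itself an instance of the very same projection problem in fewer dimensions, and then to invoke the already-proved uniqueness lemma together with standard facts about Euclidean projection onto a closed convex set. First I would fix $\lambda_1, \ldots, \lambda_t$ and introduce the shifted target $\tilde y_i = y_i - \sum_{k=1}^t \lambda_k \wk{}_i$ (writing $\wk{}_i$ for $w_{ik}$). Since $x_i(\vlam) = [y_i - \sum_{k=1}^d \lambda_k w_{ik}] = [\tilde y_i - \sum_{k=t+1}^d \lambda_k w_{ik}]$, the requirement that $\lambda_{t+1}, \ldots, \lambda_d$ be suitable, namely $\h j(\vlam) = c_j$ for all $j > t$, is exactly the statement that $(x(\vlam), (\lambda_{t+1}, \ldots, \lambda_d))$ satisfies the KKT conditions for projecting $\tilde y$ onto $\mathcal B_\infty \cap \bigcap_{j > t}\{x : \sum_i w_{ij} x_i = c_j\}$. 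This is the same projection problem we started with, only with $y$ replaced by $\tilde y$ and with the $t$ leading constraints dropped, so the reduction $x_i = [\tilde y_i - \sum_{k>t} \lambda_k w_{ik}]$ from Section~\ref{SEC:PROJECTION} and the uniqueness lemma both apply to it verbatim.

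For well-definedness, I would suppose that $\lambda_{t+1}, \ldots, \lambda_d$ and $\lambda'_{t+1}, \ldots, \lambda'_d$ are both suitable for the same prefix. Then both $(x(\vlam), \lambda_{>t})$ and $(x(\vlam'), \lambda'_{>t})$ satisfy the KKT conditions of the reduced problem, so the uniqueness lemma forces $x(\vlam) = x(\vlam')$. Consequently $\Delta_t = \h t(\vlam) = \sum_i w_{it} x_i(\vlam)$ depends only on this common point, and is therefore the same for every suitable choice of the trailing coordinates.

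For existence, I would note that the reduced feasible set $\mathcal B_\infty \cap \bigcap_{j > t}\{x : \sum_i w_{ij} x_i = c_j\}$ is obtained from the full feasible set by discarding constraints and hence contains it; thus if the full feasible space is nonempty, so is the reduced one. The Euclidean projection of $\tilde y$ onto this nonempty closed convex set exists and is unique by strict convexity of the $L_2$ norm, and because all constraints are affine no constraint qualification is needed, so the KKT conditions hold at the optimum and yield multipliers $\lambda_{t+1}, \ldots, \lambda_d$; these are suitable by construction.

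The main obstacle will be the careful verification that the case analysis of Section~\ref{SEC:PROJECTION} and the uniqueness lemma genuinely transfer to the shifted, lower-dimensional instance, i.e. that the dual multipliers of the reduced projection problem coincide with the quantities we call $\lambda_{t+1}, \ldots, \lambda_d$. This is where one must confirm that replacing $y$ by $\tilde y$ leaves the stationarity derivation unchanged and that the equality constraints (as opposed to the sign-restricted tight constraints of $S_+, S_-$) cause no difficulty, which they do not, since the multipliers associated with equality constraints are unrestricted in sign and the reduction only ever used the form $x_i = [\tilde y_i - \sum_{k>t}\lambda_k w_{ik}]$.
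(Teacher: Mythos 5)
Your proof takes essentially the same route as the paper's: fix the prefix, shift the target to $y_i' = y_i - \sum_{j \le t} \lambda_j w^{(j)}_i$, recognize the result as a $(d-t)$-dimensional instance of the same projection problem, and apply the uniqueness lemma to conclude that all suitable completions yield the same point $\vx$. If anything, you are slightly more careful than the paper on the existence half: the paper's one-line proof invokes only the uniqueness lemma (which as stated gives ``at most one''), whereas you explicitly observe that the reduced feasible set contains the original one, hence is nonempty, and that KKT multipliers exist at the projection since the constraints are affine.
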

\begin{proof}
		Fix $\lambda_1, \ldots, \lambda_t$. Denote $y_i' = y_i - \sum\limits_{j \le t} \lambda_j w_j$. Then we obtain the following problem: find $\lambda_{t+1}, \ldots, \lambda_d$, such that $\vx = [\vy' - \sum_{j > t} \lambda_j \wj]$ and $\sum_{i = 1}^n \wj_i x_i = c_j$ for all $j > t$. Therefore, we reduced the problem to $(d-t)$-dimensional problem of the same form, and by Uniqueness Lemma there exists exactly one $\vx$, satisfying all constraints.
\end{proof}

\begin{lemma}[Solution convexity]
	The set of $\vlam$ such that $(\vx, \vlam)$ is KKT solution is convex.
\end{lemma}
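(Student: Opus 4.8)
The plan is to leverage the Uniqueness Lemma together with the explicit piecewise-linear form $x_i(\vlam) = [y_i - \sum_{j=1}^d \lambda_j \wj_i]$ derived from the KKT conditions. The first step is to observe that the formula $x_i(\vlam) = [y_i - \sum_{j=1}^d \lambda_j \wj_i]$ already encodes stationarity, complementary slackness~1, box feasibility ($x_i^2 \le 1$), and dual feasibility ($\mu_i \ge 0$) for \emph{every} $\vlam$ — this is exactly what the three-case analysis of Section~\ref{SEC:PROJECTION} establishes. Consequently, the only additional requirement for $(\vx(\vlam), \vlam)$ to be a KKT solution is that the balance equalities $\sum_i \wj_i x_i = c_j$ hold. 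By the Uniqueness Lemma, whenever they hold the primal point must equal the unique projection $\vx^*$; conversely, since $\vx^*$ itself satisfies the balance equalities, $\vx(\vlam) = \vx^*$ implies balance. Hence the set of multipliers yielding a KKT solution is precisely
\[
\Lambda = \set{\vlam \in \mathbb R^d \;|\; x_i(\vlam) = x_i^* \text{ for all } i \in [n]},
\]
and it suffices to prove that $\Lambda$ is convex.

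The second step is to decompose $\Lambda = \bigcap_{i=1}^n \Lambda_i$ with $\Lambda_i = \set{\vlam \;|\; x_i(\vlam) = x_i^*}$ and to show that each $\Lambda_i$ is convex; convexity of $\Lambda$ then follows because an intersection of convex sets is convex. Writing the affine map $\ell_i(\vlam) = y_i - \sum_{j=1}^d \lambda_j \wj_i$, I would split into three cases according to the fixed target value $x_i^*$:
\begin{compactitem}
\item if $x_i^* \in (-1,1)$, the clamp is inactive, so $x_i(\vlam) = x_i^*$ is equivalent to $\ell_i(\vlam) = x_i^*$, i.e.\ $\sum_j \lambda_j \wj_i = y_i - x_i^*$, an affine hyperplane in $\vlam$;
\item if $x_i^* = 1$, then $x_i(\vlam) = 1$ holds iff $\ell_i(\vlam) \ge 1$, i.e.\ $\sum_j \lambda_j \wj_i \le y_i - 1$, a closed half-space;
\item if $x_i^* = -1$, symmetrically $x_i(\vlam) = -1$ holds iff $\sum_j \lambda_j \wj_i \ge y_i + 1$, again a closed half-space.
\end{compactitem}
In each case $\Lambda_i$ is a hyperplane or a half-space, hence convex, which completes the argument.

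I do not anticipate a serious technical obstacle: the entire content lies in recognizing that, once the target value $x_i^*$ is fixed, the truncation $[\cdot]$ turns the coordinate condition $x_i(\vlam) = x_i^*$ into a convex (level, sub-level, or super-level) constraint that is affine in $\vlam$. The one point that genuinely requires care is the reduction step — invoking uniqueness to pin down $\vx^*$ so that the balance equalities become independent of $\vlam$ and drop out; without this observation one might be tempted to reason directly about the nonlinear map $\vlam \mapsto \vx(\vlam)$, whose image is \emph{not} convex in general. Finally, if one wishes to retain the sign restrictions $\lambda_j > 0$ for $j \in S_+$ and $\lambda_j < 0$ for $j \in S_-$ coming from the original reduction, these are themselves half-space constraints and therefore preserve convexity of the intersection.
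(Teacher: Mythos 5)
Your proposal is correct and follows essentially the same route as the paper: both invoke the Uniqueness Lemma to fix the primal point $\vx^*$, and then rest on the identical three-case analysis (coordinate clamped at $1$, clamped at $-1$, or interior) showing that the conditions on $\vlam$ become affine inequalities and equalities. The paper concludes by directly verifying that a convex combination of two multiplier vectors preserves these (in)equalities, while you package the same facts as an intersection of half-spaces and hyperplanes; the two phrasings are interchangeable.
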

\begin{proof}
	By Uniqueness Lemma there is at most one $\vx$ satisfying KKT. Consider two KKT solutions $(\vx, \vlam)$ and $(\vx, \vlam')$.
    Therefore \[\vx = [\vy - \sum_j \wj \lambda_j] =  [y - \sum_j \wj \lambda_j']\]
    We will show that $(\vx,\ \alpha \vlam + (1 - \alpha) \vlam')$ is also a solution for any $\alpha \in [0; 1]$.
	For each $x_i$ consider $3$ cases depending on rounding of $x_i$:
	\begin{enumerate}
		\item $x_i = 1$. Then $\sum_j \wj_i \lambda_j \le y_i - 1$ and $\sum_j \wj_i \lambda_j' \le y_i - 1$.
		By multiplying the first inequality by $\alpha$ and the second one by $(1 - \alpha)$ and then summing them up we obtain
		\[\sum_j \wj_i (\alpha \lambda_j + (1 - \alpha)\lambda_j') \le y_i - 1\]
		\item $x_i = -1$. Similar to the first case.
		\item $x_i \in (-1; 1)$. $\sum_j \wj_i \lambda_j = y_i - x_i$ and $\sum_j \wj_i \lambda_j' = y_i - x_i$. Therefore, 
		\[\sum_j \wj_i (\alpha \lambda_j + (1 - \alpha)\lambda_j') = y_i - x_i\]
	\end{enumerate}
	\mbox{} 
\end{proof}

\begin{lemma}
	$\Delta_t$ is continuous
\end{lemma}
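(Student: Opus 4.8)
The plan is to exhibit $\Delta_t$ as a composition of three continuous maps, with the middle one being Euclidean projection onto a \emph{fixed} convex body. The key observation, already implicit in the proof of Lemma~\ref{app:lem:well_def}, is that once $\lambda_1, \ldots, \lambda_t$ are fixed, the problem of finding suitable $\lambda_{t+1}, \ldots, \lambda_d$ is a projection onto a convex set that does not depend on $\lambda_1, \ldots, \lambda_t$; these parameters enter only by shifting the point being projected.

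Concretely, first I would set $y_i' = y_i - \sum_{j \le t} \lambda_j \wj_i$, so that the vector $\vy'$ depends linearly (hence continuously) on $(\lambda_1, \ldots, \lambda_t)$. As in the proof of Lemma~\ref{app:lem:well_def}, for any suitable tail the resulting point satisfies $\vx = [\vy' - \sum_{j > t} \lambda_j \wj]$ together with $\langle \wj, \vx \rangle = c_j$ for $j > t$, and the KKT analysis of Section~\ref{SEC:PROJECTION}, applied verbatim to this reduced $(d-t)$-dimensional instance with $\vy$ replaced by $\vy'$, shows that such an $\vx$ is exactly the Euclidean projection of $\vy'$ onto the fixed convex set
\[ K' = \mathcal B_\infty \cap \bigcap_{j > t} \set{\vz \in \mathbb R^n | \langle \wj, \vz \rangle = c_j}, \]
that is, $\vx = \argmin_{\vz \in K'} \|\vy' - \vz\|_2$. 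The set $K'$ is closed, convex, and independent of $\lambda_1, \ldots, \lambda_t$.

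Next I would invoke Lemma~\ref{app:lem:well_def} to conclude that $\vx$ is determined by $(\lambda_1, \ldots, \lambda_t)$ alone (independently of the choice of suitable tail), so that $\Delta_t(\lambda_1, \ldots, \lambda_t) = \langle \w t, \vx \rangle$ is a genuine function of its first $t$ arguments. Continuity then follows by composition: the map $(\lambda_1, \ldots, \lambda_t) \mapsto \vy'$ is affine; the map $\vy' \mapsto \vx$ is Euclidean projection onto the nonempty closed convex set $K'$, which is a standard fact in convex analysis a $1$-Lipschitz (nonexpansive), hence continuous, operation; and the map $\vx \mapsto \langle \w t, \vx \rangle$ is linear. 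A composition of continuous maps is continuous, which gives the claim.

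The main obstacle, and the step that makes the whole argument work, is the identification of $\vx$ as projection onto a body $K'$ that does \emph{not} depend on $\lambda_1, \ldots, \lambda_t$. The natural worry is that the suitable tail $\lambda_{t+1}, \ldots, \lambda_d$ need not be unique and could a priori vary discontinuously with $(\lambda_1, \ldots, \lambda_t)$; this is precisely what Lemma~\ref{app:lem:well_def} neutralizes, since all suitable tails produce the same $\vx$, and $\Delta_t$ depends only on $\vx$. I would also note explicitly that $\Delta_t$ is considered on the domain where the feasible space is nonempty, equivalently $K' \neq \emptyset$; there the projection is everywhere defined and single-valued by strict convexity of the squared $L_2$ norm, so no continuity issues arise at the boundary of the domain of definition.
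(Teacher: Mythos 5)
Your proposal is correct and follows essentially the same route as the paper's own (much terser) proof: both reduce the claim to the continuity of Euclidean projection onto the fixed convex body $K'$ with respect to the point being projected, which itself depends affinely on $(\lambda_1,\ldots,\lambda_t)$. You merely make explicit what the paper leaves implicit, namely the identification of the suitable-tail solution with the projection onto $K'$ (via the well-definedness lemma and KKT sufficiency) and the nonexpansiveness of that projection.
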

\begin{proof}
    Follows from the fact that projection is continuous function of the original point.
    For small enough $\eps_j$ the projection of $\vy - \sum_{j \le t} \lambda_j \wj$ is close to projection of $\vy - \sum_{j \le t} (\lambda_j + \eps_j) \wj$, and so are their values of $\hj$, $j > t$.
\end{proof}

\begin{theorem}[$\Delta_t$ monotonicity]\label{thm:monotonicity}
	Consider two points $(\lambda_1,\ldots,\lambda_{t-1},\lambda_t')$ and $(\lambda_1,\ldots,\lambda_{t-1},\lambda_t'')$ such that
\[\Delta_t(\lambda_1,\ldots,\lambda_{t-1},\lambda_t') = \Delta_t(\lambda_1,\ldots,\lambda_{t-1},\lambda_t'') = C.\]
Then for any $\alpha \in [0; 1]$
\[\Delta_t(\lambda_1,\ldots,\lambda_{t-1},\ \ \alpha \lambda_t' + (1- \alpha) \lambda_t'') = C.\]
Since $\Delta_t$ is continuous, $\Delta_t$ is monotone on $\lambda_t$.
\end{theorem}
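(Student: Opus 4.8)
The plan is to prove a statement strictly stronger than the displayed identity: that $\Delta_t$ is \emph{monotone non-increasing} in its last argument $\lambda_t$ (with $\lambda_1,\dots,\lambda_{t-1}$ held fixed). The level-set claim then follows immediately, since if $\Delta_t$ equals $C$ at $\lambda_t'$ and at $\lambda_t''$, monotonicity sandwiches every intermediate value between these two equal values and forces it to equal $C$ as well; this also makes the appeal to continuity in the last sentence unnecessary.

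The engine of the argument is a monotonicity property of the map $\vlam \mapsto (\h 1(\vlam),\dots,\h d(\vlam))$. Writing $\sigma_i(\vlam) = \sum_{k=1}^d \lambda_k \wk_i$, recall that the projection coordinates are $x_i = [y_i - \sigma_i(\vlam)]$ and that $\hj(\vlam) = \sum_{i=1}^n \wj_i x_i$. First I would record the key inequality: for any two multiplier vectors $\vlam'$ and $\vlam''$, with $x_i' = [y_i - \sigma_i(\vlam')]$ and $x_i'' = [y_i - \sigma_i(\vlam'')]$,
\[
\sum_{j=1}^d (\lambda_j'' - \lambda_j')\bigl(\hj(\vlam'') - \hj(\vlam')\bigr) \le 0.
\]
To see this, substitute the definition of $\hj$ and swap the order of summation:
\[
\sum_{j=1}^d (\lambda_j'' - \lambda_j') \sum_{i=1}^n \wj_i (x_i'' - x_i')
= \sum_{i=1}^n (x_i'' - x_i') \bigl(\sigma_i(\vlam'') - \sigma_i(\vlam')\bigr).
\]
Because $z \mapsto [y_i - z]$ is non-increasing, $x_i'' - x_i'$ and $\sigma_i(\vlam'') - \sigma_i(\vlam')$ always have opposite signs, so every term of the last sum is $\le 0$ and the whole expression is $\le 0$. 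Note this holds globally, with no appeal to differentiability, so the kinks of the piecewise-linear $\hj$ cause no trouble.

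Now I would specialize to the two points in the statement. Take $\vlam'$ and $\vlam''$ to be the full multiplier vectors obtained from $(\lambda_1,\dots,\lambda_{t-1},\lambda_t')$ and $(\lambda_1,\dots,\lambda_{t-1},\lambda_t'')$ together with their suitable tails, which exist and yield a unique projection by Lemma~\ref{app:lem:well_def}. Two blocks of terms in the displayed inequality then vanish: for $j < t$ we have $\lambda_j'' - \lambda_j' = 0$ since the first $t-1$ coordinates coincide, and for $j > t$ we have $\hj(\vlam'') - \hj(\vlam') = c_j - c_j = 0$ because both tails are suitable. Only the $j=t$ term survives, giving
\[
(\lambda_t'' - \lambda_t')\bigl(\Delta_t(\lambda_1,\dots,\lambda_t'') - \Delta_t(\lambda_1,\dots,\lambda_t')\bigr) \le 0.
\]
Thus increasing $\lambda_t$ can never increase $\Delta_t$, which is exactly monotonicity, and the theorem follows by the sandwiching argument above.

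The step I expect to be the crux is recognizing the monotone-operator inequality and, in particular, arranging the specialization so that precisely the cross terms cancel: monotonicity of each $\hj$ separately in $\lambda_j$ would \emph{not} suffice, and it is the joint inner-product form together with the two structural facts (``first $t-1$ coordinates equal'' and ``both tails suitable'') that isolates the single $j=t$ term. Everything else is bookkeeping resting on the already-established uniqueness and well-definedness lemmas.
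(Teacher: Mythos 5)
Your proof is correct, but it takes a genuinely different route from the paper's. The paper proves the displayed level-set identity via convexity of the set of KKT multipliers: it reduces to the $(d-t+1)$-dimensional subproblem in the variables $(\lambda_t,\dots,\lambda_d)$, observes that both endpoints together with their suitable tails are multiplier vectors certifying the same unique projection, and invokes its Solution-convexity lemma to conclude that every convex combination is again such a multiplier vector, so the constraint values (in particular $\h t = C$) persist along the segment; monotonicity of $\Delta_t$ is then recovered only at the end, from this identity combined with the separate continuity lemma. You instead prove something outright stronger: the global monotone-operator inequality $\sum_{j}(\lambda_j''-\lambda_j')\bigl(\hj(\vlam'')-\hj(\vlam')\bigr)\le 0$, which follows from the summation swap and the fact that each map $z\mapsto[y_i-z]$ is non-increasing, and then the two structural cancellations (equal first $t-1$ coordinates, and suitable tails forcing $\hj(\vlam'')=\hj(\vlam')=c_j$ for $j>t$) isolate the $j=t$ term, yielding that $\Delta_t$ is non-increasing in $\lambda_t$. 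This buys three things the paper's argument does not: the direction of monotonicity is pinned down (so the parameter $\Delta^+$ in Algorithm~\ref{alg:2D}, which hedges between the increasing and decreasing hypotheses, is in principle unnecessary), no appeal to continuity is needed, and the level-set identity follows by a two-line sandwich; your observation that coordinatewise monotonicity of each $\hj$ in $\lambda_j$ would not suffice is also accurate and is exactly why the inner-product form matters. What the paper's route buys in exchange is that it reuses machinery it establishes anyway (uniqueness of the projection and convexity of the multiplier set) entirely within the KKT framework. One small point to make explicit in your write-up: to run the sandwich you must know $\Delta_t$ is defined at the intermediate point $\alpha\lambda_t'+(1-\alpha)\lambda_t''$, i.e., that a suitable tail exists there; this is the same appeal to Lemma~\ref{app:lem:well_def} that you already make for the two endpoints (the paper gets this for free, since its convex combination of the two tails is itself the required suitable tail).
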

\begin{proof}
	Since
	\[\Delta_t(\lambda_1,\ldots,\lambda_{t-1},\lambda_t') = \Delta_t(\lambda_1,\ldots,\lambda_{t-1},\lambda_t'') = C,\]
	there exist $\lambda_{t+1}', \ldots, \lambda_n'$ and  $\lambda_{t+1}'', \ldots, \lambda_n''$ such that
	\begin{align*}
	\h t(\vlam') &= \h t(\vlam'') = C \\
	\hj(\vlam')  &= \hj(\vlam'') = c_j \text{ for all $j > t$},
	\end{align*}
    where $\vlam' = (\lambda_1,\ldots,\lambda_{t-1},\lambda_t',\ldots,\lambda_d')$ and $\vlam'' = (\lambda_1,\ldots,\lambda_{t-1},\lambda_t'',\ldots,\lambda_d'')$.
    
    Denote $y_i' = y_i - \sum\limits_{j < t} \wj_i \lambda_j$. Consider the following problem: find $\lambda_{t+1}, \ldots, \lambda_d$, such that
	\begin{align*}
	& \vx = \vy' - \sum_{j \ge t} \lambda_j \wj \\
    & \sum_{i = 1}^n \w{\mathbf{t}}_i x_i = C \\
    & \sum_{i = 1}^n \wj_i x_i = c_j \text{ for all } j > t
	\end{align*}
    We obtained $(d-t+1)$-dimensional problem.
    Both points are solutions to this problem, and by Convexity lemma the set of its solution is convex.
\end{proof}

As follows from Theorem~\ref{thm:monotonicity}, if the projection exists then it's possible to find $\vlam^*$ with arbitrary precision using nested binary search on each coordinate.
Unfortunately, it's unclear how to estimate binary search bounds.
While it's possible to find them by expanding the bounds until they contain the solution, the resulting running time becomes unknown.

\fi

\iffull
\subsection{Projection for D = 2}
\else
\section{Projection for D = 2}
\fi
\label{app:projection2d}
In this section we introduce a randomized $O(n \log n)$-time algorithm for finding projection for $d=2$.
Recall from Section~\ref{SEC:PROJECTION} that for $\vy \in \mathbb{R}^n$ we need to find $\vlam^* = (\lambda_1^*, \lambda_2^*)$ such that $\h 1(\vlam^*) = c_1$ and $\h 2(\vlam^*) = c_2$.
For $\vlam=(\lambda_1, \lambda_2)$ we define $\hj(\vlam) = \sum\limits_{i=1}^n \hj_i(\vlam)$ for $j \in \{1,2\}$, where
\begin{align*}
\hj_i(\vlam) = \begin{cases}
\wj_i  & \text{if } \sum\limits_k \lambda_k \wk_i < y_i - 1 \\ 
-\wj_i & \text{if } \sum\limits_k \lambda_k \wk_i > y_i + 1 \\
\wj_i (y_i - \sum\limits_k \lambda_k \wk_i) & \text{otherwise.}\\
\end{cases}
\end{align*}
Once we find $(\lambda_1^*, \lambda_2^*)$ we can compute the coordinates of $\vx$ as $x_i = [y_i - \w 1_i \lambda_1^* - \w 2_i \lambda_2^*]$.
We introduce an auxiliary function $\Delta$
\iffull(corresponding to $\Delta_1$ from the previous section)\fi
which we use to solve the above problem using binary search:
\ifnotfull\footnote{In the full version of the paper we will show that the $\Delta$ is well-defined and monotone.}\fi
\begin{definition}
	Suppose that $\lambda_1$ is such that there exists $\lambda_2$ for which the constraint $\h 2(\lambda_1, \lambda_2) = c_2$ is satisfied.
    Then we define $\Delta(\lambda_1) \triangleq \h 1(\lambda_1, \lambda_2)$.
\end{definition}

We now describe an $\Oh(n \log n)$-time algorithm for finding $(\lambda_1^*, \lambda_2^*)$.
The algorithm is shown as Algorithm~\ref{alg:2D}.
It takes as a parameter a Boolean variable $\Delta^+$ indicating whether $\Delta$ is an increasing or decreasing function. We run the algorithm under both assumptions and select a solution satisfying the constraints.
\begin{algorithm}[h]
\SetKwInOut{Input}{input}
	\SetKwInOut{Output}{output}
	\SetKwRepeat{Do}{do}{while}
    \SetNoFillComment
	\Input{$\set{\w 1_i}$, $\set{\w 2_i > 0}$, $\set{y_i}$, $c_1$, $c_2$}
	\nonl \textbf{parameter:} $\Delta^+ \in \{true, false\}$ indicating whether $\Delta$ increases\\
	\Output{$(\lambda_1^*, \lambda_2^*)$}
    \SetKwProg{Fn}{Function}{}{}
	\caption{Function returning $\lambda_1^*, \lambda_2^*$ for given $2$-dimensional problem.
    }
    \label{alg:2D}
    \Fn{Project-2D}{
        \tcc{$L = $ set of lines parameterized by $(y_i, \w 1_i, \w 2_i, \pm 1)$ corresponding to lines of the form $y_i - \lambda_1 \w 1_i - \lambda_2 \w 2_i = \pm 1$} 
        $L := \set{ (y_i, \w 1_i, \w 2_i, \pm 1) | i \in [n]}$ \label{lst:line:find_slope_begin} \label{lst:line:l-def} \\
        $\lambda_1^l := -\infty$, $\lambda_1^r := +\infty$ \\
        \vspace{3mm}
        \tcc{Run binary search}
        \While {true} {
        	\texttt{$\Lambda :=$ set of intersection points $(\lambda_1, \lambda_r)$ of lines in $L$ such that $\lambda_1 \in (\lambda_1^l, \lambda_1^r)$} \\
	    	\If{$\Lambda = \emptyset$} { \texttt{break}}
            
            \texttt{Sample a uniformly random intersection point $(\lambda_1',\lambda_2')$ from $\Lambda$ \label{lst:line:sample}}\\
            \eIf{$\Delta(\lambda_1') > c_1$ \label{lst:line:if-begin}} {
            	If $\Delta^+$ set $\lambda_1^r := \lambda_1'$,  otherwise set $\lambda_1^l := \lambda_1'$
            } {
            	If $\Delta^+$ set $\lambda_1^l := \lambda_1'$,  otherwise set $\lambda_1^r := \lambda_1'$
            	\label{lst:line:if-end} \label{lst:line:find_slope_end}
            }
        }
        \vspace{3mm}
        \texttt{Let $\set{R_t}_{t=1}^T$ be a partition of $(\lambda_1^l, \lambda_1^r) \times \mathbb{R}$ by boundary lines (see Fig.~\ref{fig:partition}), sorted from bottom to top}
        \label{lst:line:find_region_begin}\\
        \texttt{Compute coefficients for the system of linear equations for $R_1$ (as in Theorem~\ref{thm:find_region})}  \\
        \For {$t = 1 \dots T$} {
        	\texttt{Let $(\lambda_1, \lambda_2)$ be a solution to the linear system for $R_t$} \\
        	\If {$(\lambda_1, \lambda_2) \in R_t$} {
            	\Return{$(\lambda_1, \lambda_2)$}
            }
            \texttt{Update the coefficients corresponding to crossing the boundary line between $R_t$ and $R_{t + 1}$ as shown in Theorem~\ref{thm:find_region}} \label{lst:line:find_region_end}
        }
    }
\end{algorithm}

We outline the main ideas behind Algorithm~\ref{alg:2D} below.
Consider the $(\lambda_1, \lambda_2)$ plane partitioned by the following lines (which we call \emph{boundary lines}):
\begin{align*}
& y_i - \lambda_1 \w 1_i - \lambda_2 \w 2_i  = 1 \\
& y_i - \lambda_1 \w 1_i - \lambda_2 \w 2_i = -1,
\end{align*}
for all $i$.
Let $L$ be the set of boundary lines (line~\ref{lst:line:l-def}). We refer to the subsets of the plane resulting from its partition by the boundary lines as \emph{regions} (see Figure~\ref{fig:regions} where the regions are referred to as $\set{\mathtt{T_i}}$).
Boundary lines separate the plane into half-planes corresponding to the different cases in the definitions of the corresponding $\hj_i$.
Therefore, inside each region all $\hj_i$ are linear and hence $\hj$ are also linear.

The intuition behind the algorithm is then as follows (in order to achieve the best performance the exact details differ slightly from this simplified presentation).
Suppose we could find a region that contains some solution $\vlam^*$.
Then since constraint functions are linear inside the region, in order to find $\vlam^*$ we could solve a system of linear equations over $\lambda_1$ and $\lambda_2$.
We identify such region, with binary search over $\lambda_1$ by using monotnicity of $\Delta$.
We consider only a finite set of values: $\lambda_1$-coordinates of intersections of boundary lines.
Since there are $\Oh(n)$ boundaries, there are $\Oh(n^2)$ intersections(e.g., in Figure~\ref{fig:regions} we consider only points $a$, $b$, $c$ and $d$). Hence $\Oh(\log n)$ iterations of binary search suffice.
The only difference between Algorithm~\ref{alg:2D} and the above approach is that after the binary search on $\lambda_1$ we still have to try $\Oh(n)$ regions to identify the exact region which contains $\vlam^*$ (see Algorithm~\ref{alg:2D} for the details).

Now consider one iteration of the binary search.
Let $\lambda_1^l$ and $\lambda_1^r$ be its current boundaries. 
Let $\Lambda$ be a set of all intersection points $(\lambda_1, \lambda_2)$ such that $\lambda_1 \in (\lambda_1^l, \lambda_1^r)$.
Since $\Delta$ is monotone, for any $\lambda_1'$ we can use binary search by checking whether $\vlam^*$ is greater or less than $\lambda_1'$ through a comparison of $\Delta(\lambda_1')$ and $c_1$ (lines~\ref{lst:line:if-begin}-\ref{lst:line:if-end}).
Computing $\Delta(\lambda_1')$ requires solving the one-dimensional problem over $\lambda_2$ discussed in Section~\ref{sec:exact-projection} and thus can be done in $\Oh(n)$ time.

In order to have binary search run in $\Oh(\log n)$ iterations it suffices to be able to find a value $\lambda_1' \in (\lambda_1^l, \lambda_1^r)$ which with constant probability splits $\Lambda$ into two subsets of points, those with $\lambda_1 > \lambda_1'$ and with $\lambda_1 < \lambda_1'$ respectively, of size at most $\frac 23 n$ each.
In particular, it suffices to sample a uniformly random point $(\lambda_1', \lambda_2')$ from $\Lambda$. The following lemma bounds the overall running time of these sampling steps.
\begin{lemma}
The overall time required for sampling random points from $\Lambda$ in line~\ref{lst:line:sample} of Algorithm~\ref{alg:2D} is $\Oh(n \log n)$.
\end{lemma}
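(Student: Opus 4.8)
The plan is to reduce sampling a uniform point of $\Lambda$ to sampling a uniform \emph{inversion}, to give a cheap unbiased sampler, and then to treat separately the few iterations in which that sampler becomes slow. First I would set up the underlying geometric reduction. Since $\w 2_i > 0$, every boundary line $y_i - \lambda_1 \w 1_i - \lambda_2 \w 2_i = \pm 1$ is non-vertical and is therefore the graph of an affine function $\lambda_2(\lambda_1)$. Two such lines meet at an abscissa lying in the open strip $(\lambda_1^l, \lambda_1^r)$ if and only if they exchange their vertical order between $\lambda_1 = \lambda_1^l$ and $\lambda_1 = \lambda_1^r$. Hence the points of $\Lambda$ are in bijection with the inversions between the two orderings of $L$ induced at the left and right boundaries, and in particular $|\Lambda| \le \binom{2n}{2} = \Oh(n^2)$. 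This bijection also yields a simple unbiased sampler: draw a uniformly random pair of lines, compute its crossing abscissa in $\Oh(1)$, and accept it when it lies in $(\lambda_1^l, \lambda_1^r)$; conditioned on acceptance the output is uniform on $\Lambda$.

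Next I would bound the cost of this rejection sampler over the course of the binary search. Let $N_t = |\Lambda|$ at iteration $t$. One accepted sample needs $\binom{2n}{2}/N_t$ trials in expectation, each costing $\Oh(1)$, while evaluating $\Delta(\lambda_1')$ to decide the branch is the one-dimensional problem of Section~\ref{sec:exact-projection}, solvable in $\Oh(n)$ time by median selection. Because a uniform point of $\Lambda$ splits it into two parts each of size at most $\tfrac23 N_t$ with constant probability, the values $N_t$ shrink by a constant factor per iteration in expectation; the search therefore needs $\Oh(\log n)$ iterations and the $\Delta$-evaluations contribute $\Oh(n\log n)$ in total. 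As long as $N_t = \Omega(n)$, the trial counts $\binom{2n}{2}/N_t$ form a geometric sequence whose largest term is $\Oh(n)$, so their sum over these iterations is $\Oh(n)$.

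The remaining difficulty, which I expect to be the main obstacle, is the sparse regime $N_t = \Oh(n)$: here naive rejection needs $\Theta(n^2/N_t)$ trials per sample and would inflate the total to $\Theta(n^2)$. I would resolve this by switching strategies as soon as $N_t$ drops to $\Oh(n)$, detecting the switch point cheaply by estimating $N_t$ from the empirical acceptance rate of a batch of $\Oh(\log n)$ trial pairs. At that moment I would enumerate all $\Oh(n)$ crossings of the current strip once, using the merge-sort inversion procedure in output-sensitive time $\Oh(n\log n + N_t) = \Oh(n\log n)$, and sort them by abscissa. From then on the binary search proceeds entirely inside this explicit sorted list: each surviving iteration takes the median crossing in $\Oh(1)$, evaluates $\Delta$ in $\Oh(n)$, discards a constant fraction of the list, and terminates once the list is empty.

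Summing the two regimes, the sampling work is $\Oh(n)$ in the dense phase plus a one-time $\Oh(n\log n)$ enumeration in the sparse phase, i.e. $\Oh(n\log n)$ overall, as claimed. The only place where genuine care is required is the dense-to-sparse transition: it is precisely the switch from rejection sampling to explicit enumeration that keeps either regime from dominating, and its correctness rests on the geometric decay of $N_t$ together with the constant-factor balance guarantee of a uniform splitter.
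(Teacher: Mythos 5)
Your overall architecture --- uniform pair-sampling with rejection while $\Lambda$ is dense, a one-time output-sensitive enumeration of the strip's crossings once $\Lambda$ becomes sparse, and a binary search driven by $\Oh(n)$-time evaluations of $\Delta$ --- is the same as the paper's: the paper likewise samples $\Oh(n)$ random pairs of lines per iteration while $|\Lambda| > n\log n$, then computes the set of all crossings by a sweep in $\Oh(n\log n)$ time, and finishes by sampling from (and, once $|\Lambda|<n$, explicitly maintaining) that set. The genuine gap is in your dense-to-sparse switch detection. The acceptance probability of one trial pair is $|\Lambda|/\binom{2n}{2}$, so a batch of $\Oh(\log n)$ trials sees zero acceptances with probability $1-o(1)$ whenever $|\Lambda| = o(n^2/\log n)$; such a batch therefore cannot distinguish $|\Lambda| = \Theta(n)$ from $|\Lambda| = \Theta(n^2/\log^2 n)$, and no estimator built from it can ``detect when $N_t$ drops to $\Oh(n)$.'' Whichever way you orient the test, the bound breaks: if zero acceptances triggers the switch, the switch can fire while $N_t = \Omega(n^2/\log n)$, and your enumeration step then costs $\Oh(n\log n + N_t) = \Omega(n^2/\log n)$, far above budget; if instead you wait for a positive empirical rate, you may never switch, and rejection sampling then needs $\Theta(n^2/N_t)$ trials per accepted sample in the sparse regime.

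The repair (and this is what the paper does) is to spend $\Theta(n)$ trials per iteration and switch when all of them miss the strip: if $|\Lambda| > Cn\log n$, a full miss has probability $n^{-\Omega(1)}$, so a miss certifies w.h.p.\ that $|\Lambda| = \Oh(n\log n)$, at which point enumeration is affordable. Note this also changes your post-switch accounting: the certifiable threshold is $\Oh(n\log n)$, not $\Oh(n)$, so sorting the enumerated crossings would cost $\Theta(n\log^2 n)$. You must either replace the sort by linear-time median selection (the list then halves deterministically each iteration, giving $\Oh(n\log n)$ total), or follow the paper: while $n \le |\Lambda| \le n\log n$, draw $\Oh(n)$ random points per iteration from the enumerated set to hit the current interval w.h.p., and only maintain $\Lambda$ explicitly once $|\Lambda| < n$. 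Your dense-phase analysis itself (geometric decay of $N_t$ from the constant-probability $2/3$-balanced splitter, hence trial counts summing to $\Oh(n)$) is sound in expectation, though for a high-probability statement you should cap the trials per iteration anyway --- which is exactly the capped test above.
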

\begin{proof}
Consider three cases:
\begin{compactenum}
\item $|\Lambda| > n \log n$. In this case we sample $\Oh(n)$ uniformly random pairs of lines from $L$ and find an intersection of each pair (assume no parallel lines which can be handled separately).
Since the number of lines is $\Oh(n)$ w.h.p. we sample at least one intersection which lies in $\Lambda$.
The last condition can be checked in $\Oh(n)$ time and if it doesn't hold then we conclude that w.h.p. $|\Lambda| \le n \log n$.
We then compute $S$, the set of all points in $\Lambda$ in $\Oh(n \log n)$ time as described below and proceed to the second case.
\iffull

To find $\Lambda$ we first find intersections of all lines from $L$ with lines $\lambda_{1} = \lambda_{1}^l$ and $\lambda_{1} = \lambda_{1}^r$.
We call $\lambda_2$-coordinates of the intersection points \emph{events}.
Each line $\ell \in L$ creates two event: $\ell_{open}$ corresponds to smaller $\lambda_{2}$ and $\ell_{close}$~-- to the larger one.

Consider two lines $a$ and $b$ such that $a_{open} \ge b_{open}$.
These lines intersect in one of two cases.
If they are opened on different sides (i.e. one on $\lambda_1^l$ and another one~-- on $\lambda_1^r$), then $b_{open}$ should be greater than $a_{close}$, as shown in Figure~\ref{fig:intersect_dif}.
If they are opened on the same side, then it should be $b_{close} \ge a_{close}$, i.e. $[a_{open}, a_{close}] \subseteq [b_{open}, b_{close}]$, as shown in Figure~\ref{fig:intersect_same}.
\begin{figure}[!tb]
	\centering
	\begin{subfigure}[b]{0.23\textwidth}
		\centering
\begin{tikzpicture}[scale=0.5]
	\draw[thick,draw=red,name path=left] (-1, -2.5) -- (-1, 2.5) node[above] {\small $\lambda_1^l$};
	\draw[thick,draw=red,name path=right] (1, -2.5) -- (1, 2.5) node[above] {\small $\lambda_1^r$};
	\draw[name path=below] (-2, -2.5) -- (3.5, 2.5);
	\draw[name path=above] (-2.5, 2.5) -- (2.5, -2.5);
	
	\draw[fill=black, name intersections={of=above and left, by=x}] (x) circle (0.7mm) node[left] {$a_{close}$};
	\draw[fill=black, name intersections={of=above and right, by=x}] (x) circle (0.7mm) node[right] {$a_{open}$};  
	\draw[fill=black, name intersections={of=below and left, by=x}] (x) circle (0.7mm) node[left] {$b_{open}$};
	\draw[fill=black, name intersections={of=below and right, by=x}] (x) circle (0.7mm) node[right] {$b_{close}$};
\end{tikzpicture}
		\caption{Intersection of lines opened on different sides}
		\label{fig:intersect_dif}
	\end{subfigure}
	\begin{subfigure}[b]{0.23\textwidth}
		\centering
\begin{tikzpicture}[scale=0.5]
	\draw[thick,draw=red,name path=left] (-1, -2.5) -- (-1, 2.5) node[above] {\small $\lambda_1^l$};
	\draw[thick,draw=red,name path=right] (1, -2.5) -- (1, 2.5) node[above] {\small $\lambda_1^r$};
	\draw[name path=below] (-2, -2.5) -- (2.5, 2.5);
	\draw[name path=above] (-3.5, -0.8) -- (3.5, 0.2);
	
	\draw[fill=black, name intersections={of=above and left, by=x}] (x) circle (0.7mm) node[above left] {$a_{open}$};
	\draw[fill=black, name intersections={of=above and right, by=x}] (x) circle (0.7mm) node[below right] {$a_{close}$};  
	\draw[fill=black, name intersections={of=below and left, by=x}] (x) circle (0.7mm) node[left] {$b_{open}$};
	\draw[fill=black, name intersections={of=below and right, by=x}] (x) circle (0.7mm) node[right] {$b_{close}$};
\end{tikzpicture}
		\caption{Intersection of lines opened on the same side}
		\label{fig:intersect_same}
	\end{subfigure}
\end{figure}

We process all events in increasing order and 
for each side we maintain the set of lines opened on this side.
We sort lines in these sets by their closing events.
When event $\ell_{open}$ arrives, we find intersections of $\ell$ with opened lines in the following way.
To handle the first case, we intersect $\ell$ with all lines opened on the other side.
To handle the second case, we intersect $\ell$ with all lines opened on the same side and closing after $\ell_{close}$.
\else
We show how to compute $\Lambda$ in the full version.
\fi
\item $n \le |\Lambda| \le n \log n$.
Note that in this case $\Lambda = \set{(\lambda_1, \lambda_2) \in S | \lambda_1 \in (\lambda_1^l; \lambda_1^r)}$, where $S$ is as defined above.
We sample $\Oh(n)$ random points from $S$ so that w.h.p. we get at least one point from $\Lambda$.
As before, if this doesn't happen, we conclude that w.h.p. $|\Lambda| < n$ and proceed to the last case.
\item $|\Lambda| < n$. In this case we maintain $\Lambda$ directly.
When we sample a random point $(\lambda_1', \lambda_2') \in \Lambda$, we remove from $\Lambda$ all points on one of the side from $\lambda_1'$ as directed by the binary search.
\end{compactenum}
In each of the cases above one iteration can be implemented in $\Oh(n)$ time and pre-/post-processing between the cases takes $\Oh(n \log n)$ time.
Since there are $\Oh(\log n)$ iterations, sampling takes $\Oh(n \log n)$ time overall.
\end{proof}
\begin{figure}[!h]
	\centering
\begin{tikzpicture}[scale=0.9]
	\draw[thin,->,color=gray] (-3.5,0) -- (3.5,0) node[right] {\small $\lambda_1$} coordinate(x axis);
	\draw[thin,->,color=gray] (0,-2.5) -- (0,2.5) node[above] {\small $\lambda_2$} coordinate(y axis);
	\draw (-1.5, 2.5) -- (3.5, -2.5);
	\draw (-3.5, 2.5) -- (1.5, -2.5);
	\draw (-3.5, 1) -- (3.5, 1);
	\draw (-3.5, -1) -- (3.5, -1);
	\node at (-1, -1.5) {$\mathtt{T_1}$};
	\node at (1.5, -1.5) {$\mathtt{T_2}$};
	\node at (3, -1.5) {$\mathtt{T_3}$};
	\node at (-2.5, 0.5) {$\mathtt{T_4}$};
	\node at (-0.5, 0.5) {$\mathtt{T_5}$};
	\node at (1.5, 0.5) {$\mathtt{T_6}$};
	\node at (-3, 1.5) {$\mathtt{T_7}$};
	\node at (-1, 1.5) {$\mathtt{T_8}$};
	\node at (1.5, 1.5) {$\mathtt{T_9}$};
	\draw[thick,draw=red] (-1.3, -2.5) -- (-1.3, 2.5) node[above] {\small $\lambda_1^l$};
	\draw[thick,draw=red] (2.5, -2.5) -- (2.5, 2.5) node[above] {\small $\lambda_1^r$};
	\draw[fill=black] (-2,1) circle (0.7mm) node[above right] {$a$};
	\draw[fill=red] (0,1) circle (0.7mm) node[above right] {$b$};
	\draw[fill=red] (0,-1) circle (0.7mm) node[above right] {$c$};
	\draw[fill=red] (2,-1) circle (0.7mm) node[above right] {$d$};
\end{tikzpicture}
	\caption{
		Example of regions for $n=2$, $\vy = (0, 0)$. 
		The boundary lines are $\lambda_1 + \lambda_2 = \pm 1$ and $\lambda_2 = \pm 1$.
		These lines partition $(\lambda_1, \lambda_2)$-plane into nine regions $T_1, T_2, \dots, T_9$.
		Intersection points are $a$, $b$, $c$ and $d$.
		Current intersection points considered by the algorithm (those between $\lambda_1^l$ and $\lambda_1^r$) are shown in red.
	}
	\label{fig:regions}
\end{figure}

Using the above algorithm we can find $\lambda_1^l$ and $\lambda_1^r$ such that there are no intersection points between them.
Since there are $\Oh(\log n)$ iterations and each of them requires $\Oh(n)$ time on average, the total running time is $\Oh(n \log n)$.
This completes a proof of the following theorem (corresponding to lines \ref{lst:line:find_slope_begin}-\ref{lst:line:find_slope_end} of the algorithm).

\begin{theorem}\label{app:th:binsearch}
	There exists an $\Oh(n \log n)$-time randomized algorithm returning $\lambda_1^l$ and $\lambda_1^r$ such that:
    \begin{compactenum}
    \item No intersections of boundary lines in $[\lambda_1^l, \lambda_1^r]$,
    \item There exists a solution $(\lambda_1^\dagger, \lambda_2^\dagger)$ such that $\lambda_1^\dagger \in [\lambda_1^l, \lambda_1^r]$. 
    \end{compactenum}
\end{theorem}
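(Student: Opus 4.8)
The plan is to analyze the binary-search portion of Algorithm~\ref{alg:2D} (lines~\ref{lst:line:find_slope_begin}--\ref{lst:line:find_slope_end}) by maintaining throughout the loop the invariant that the $\lambda_1$-coordinate of some solution lies in $[\lambda_1^l, \lambda_1^r]$. Initially $[\lambda_1^l, \lambda_1^r] = [-\infty, +\infty]$, so the invariant holds by the assumption that a solution exists. For the inductive step I would invoke the monotonicity of $\Delta$ (Theorem~\ref{thm:monotonicity}): since $\Delta$ is monotone and equals $c_1$ at the solution's $\lambda_1$-coordinate, comparing $\Delta(\lambda_1')$ with $c_1$ at the sampled point determines on which side of $\lambda_1'$ a solution lies, so the update of $\lambda_1^l$ or $\lambda_1^r$ in lines~\ref{lst:line:if-begin}--\ref{lst:line:if-end} (with the branch dictated by the Boolean $\Delta^+$) preserves the invariant. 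Each evaluation of $\Delta(\lambda_1')$ is a one-dimensional projection solvable in $\Oh(n)$ time as in Section~\ref{sec:exact-projection}, so one iteration costs $\Oh(n)$ apart from the sampling.

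For termination and condition~1, the loop exits exactly when $\Lambda = \emptyset$, i.e. when no boundary-line intersection has its $\lambda_1$-coordinate strictly between $\lambda_1^l$ and $\lambda_1^r$; this is precisely condition~1. Condition~2 is then immediate from the invariant maintained above, so correctness reduces entirely to the running-time bound.

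For the running time the only remaining point is to bound the number of iterations. The key observation is that sampling a uniformly random point from $\Lambda$ in line~\ref{lst:line:sample} acts as a random pivot on the intersection points ordered by $\lambda_1$-coordinate (in general position these coordinates are distinct). Its rank is uniform, so with probability at least $\tfrac13$ it falls in the middle third and splits $\Lambda$ into two subsets each of size at most a constant fraction (say $\tfrac23$) of $|\Lambda|$. Since there are only $\Oh(n^2)$ boundary-line intersections, $\Oh(\log n)$ such balanced splits reduce $\Lambda$ to empty, and by the constant per-iteration success probability the expected number of iterations is $\Oh(\log n)$. Combined with the preceding lemma, which bounds the total cost of all sampling steps by $\Oh(n \log n)$, and with the $\Oh(n)$ cost of each $\Delta$ evaluation, the overall running time is $\Oh(n \log n)$.

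The main obstacle is the running-time analysis rather than correctness: one must simultaneously guarantee that each iteration shrinks the remaining intersection set by a constant factor (handled by the random-pivot argument) and that one can actually sample from $\Lambda$ efficiently even though it may contain $\Theta(n^2)$ points, which is exactly what the preceding sampling lemma establishes through its three-case analysis driven by $|\Lambda|$. A secondary subtlety I would address is degeneracy---parallel boundary lines or three lines meeting at a single point---which should be perturbed away or treated as separate cases so that the intersection $\lambda_1$-coordinates are distinct and the uniform-rank pivot argument applies cleanly.
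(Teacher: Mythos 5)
Your proposal is correct and follows essentially the same route as the paper: a binary search on $\lambda_1$ driven by the monotonicity of $\Delta$ (each evaluation being an $\Oh(n)$ one-dimensional projection), a uniformly random intersection point from $\Lambda$ as a pivot giving a constant-probability balanced split and hence $\Oh(\log n)$ iterations over the $\Oh(n^2)$ intersections, and the preceding sampling lemma to bound the total sampling cost by $\Oh(n \log n)$. Your added remarks on degeneracies (parallel lines, coincident intersections) address the same caveat the paper handles only in passing, so they are a welcome but not divergent refinement.
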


After we find $\lambda_1^l$ and $\lambda_1^r$ as in Theorem~\ref{app:th:binsearch} we show that there are only $\Oh(n)$ regions which can contain a solution and we can check them in $\Oh(n \log n)$ time.
The following theorem completes the proof of Theorem~\ref{thm:runtime} for $d = 2$:
\begin{theorem} \label{thm:find_region}
	If there exists a solution $\vlam^\dagger$ such that $\lambda_1^\dagger \in (\lambda_1^l; \lambda_1^r)$ and no intersection points are between $(\lambda_1^l; \lambda_1^r)$ then $\vlam^*$ can be found in $\Oh(n \log n)$ time.
\end{theorem}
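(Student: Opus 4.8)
The plan is to exploit the two guarantees that Theorem~\ref{app:th:binsearch} hands us: inside the vertical strip $(\lambda_1^l, \lambda_1^r) \times \mathbb R$ no two boundary lines cross, and the strip is known to contain a solution. First I would observe that, since the input enforces $\w 2_i > 0$, every boundary line $y_i - \lambda_1 \w 1_i - \lambda_2 \w 2_i = \pm 1$ is non-vertical and can be written as $\lambda_2 = (y_i \mp 1 - \lambda_1 \w 1_i)/\w 2_i$. Because none of these $2n$ lines intersect inside the strip, they keep a single fixed vertical order throughout it, and hence partition the strip into $T \le 2n + 1$ regions $R_1, \dots, R_T$ ordered from bottom to top. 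Within one region every point lies on a fixed side of every boundary line, so each $\hj_i$ is frozen into one of the three cases of its definition and is therefore affine in $(\lambda_1, \lambda_2)$; consequently $\h 1$ and $\h 2$ are affine on each $R_t$.

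The algorithm then runs as follows. Sort the boundary lines by their $\lambda_2$-value at a fixed abscissa in $(\lambda_1^l, \lambda_1^r)$; the absence of crossings in the strip makes this ordering valid throughout it and determines the stacking $R_1, \dots, R_T$, costing $\Oh(n \log n)$. Next compute, in $\Oh(n)$ time, the coefficients of the affine forms of $\h 1$ and $\h 2$ on the bottom region $R_1$ by deciding, for each $i$, which case of $\hj_i$ applies there. Then sweep $t = 1, \dots, T$: in each region solve the $2 \times 2$ linear system $\h 1(\lambda_1, \lambda_2) = c_1$, $\h 2(\lambda_1, \lambda_2) = c_2$ in $\Oh(1)$ time and test whether the solution lies in $R_t$, returning it if so. Crossing from $R_t$ to $R_{t+1}$ means stepping over a single boundary line, which flips exactly one index $i$ between two adjacent cases of $\hj_i$; this alters only index $i$'s contribution to the four coefficients and two constants of the system, an $\Oh(1)$ update. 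Hence the sweep is $\Oh(n)$ after the sort, for a total of $\Oh(n \log n)$.

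For correctness, guarantee 2 of Theorem~\ref{app:th:binsearch} gives a solution $\vlam^\dagger$ with $\lambda_1^\dagger \in (\lambda_1^l, \lambda_1^r)$; this point lies in some region $R_{t^*}$, where the affine equations that hold are exactly $\h 1 = c_1$ and $\h 2 = c_2$, so solving that region's system recovers a feasible $\vlam$. By the Uniqueness Lemma the induced point $\vx = [\vy - \lambda_1 \w 1 - \lambda_2 \w 2]$ is the unique projection, so any region whose in-region system is satisfiable yields the correct answer. Combining the linear sweep with Theorem~\ref{app:th:binsearch} gives the $\Oh(n \log n)$ bound and completes the proof of Theorem~\ref{thm:runtime} for $d = 2$.

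The main obstacle I expect is making the incremental step rigorous: one must verify that moving to the next region flips precisely the one index whose boundary line is being crossed, and that this yields an $\Oh(1)$ change to the system's coefficients (in the interior case index $i$ contributes $-\w 1_i \w 1_i$, $-\w 1_i \w 2_i$ and $\w 1_i y_i$, while in the saturated cases it contributes only $\pm \w 1_i$ to the constant). A secondary subtlety is degenerate regions where $\h 1$ or $\h 2$ is constant (all relevant coordinates saturated), so the $2 \times 2$ system may be rank-deficient; there one tests feasibility of the affine constraints against $R_t$ rather than solving for a unique point. Finally, one must confirm that the sorting order is well-defined, which again rests squarely on the no-crossings guarantee of Theorem~\ref{app:th:binsearch}.
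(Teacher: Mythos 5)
Your proposal is correct and follows essentially the same route as the paper's proof: partition the strip $(\lambda_1^l,\lambda_1^r)\times\mathbb{R}$ into the $\Oh(n)$ vertically ordered regions cut out by the (crossing-free) boundary lines, exploit that $\h 1$ and $\h 2$ are affine inside each region, solve the resulting $2\times 2$ linear system per region with an $\Oh(1)$ coefficient update when stepping over a single boundary line, and charge the $\Oh(n\log n)$ cost to the initial sort. Your added remarks on non-verticality (via $\w 2_i > 0$), rank-deficient systems, and correctness via the Uniqueness Lemma are sound refinements of details the paper leaves implicit.
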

\begin{proof}
	\begin{figure}[!h]
		\centering
		\begin{tikzpicture}[scale=1]
	\pgfmathsetmacro{\left}{-0.5}
	\pgfmathsetmacro{\right}{4}
	\pgfmathsetmacro{\bottom}{-0.5}
	\pgfmathsetmacro{\top}{4}
	\pgfmathsetmacro{\lb}{1}
	\pgfmathsetmacro{\rb}{3}
	\draw[thin,->,color=gray] (\left,0) -- (\right,0) node[right] {\small $\lambda_1$} coordinate(x axis);
	\draw[thin,->,color=gray] (0, \bottom) -- (0, \top) node[above] {\small $\lambda_2$} coordinate(y axis);
	
	\draw[thick,draw=red] (\lb, \bottom) -- (\lb, \top) node[above] {\small $\lambda_1^l$};
	\draw[thick,draw=red] (\rb, \bottom) -- (\rb, \top) node[above] {\small $\lambda_1^r$};
	
	\draw[name path = L3] (\left, 4) -- (\right, 2);
	\draw[name path = L2] (\left, 1) -- (\right, 2.5);
	\draw[name path = L1] (\left, 0) -- (\right, 1.5);
	
	\node at (2, 0.3) {$\mathtt{R_1}$};
	\node at (2, 1.3) {$\mathtt{R_2}$};
	\node at (2, 2.3) {$\mathtt{R_3}$};
	\node at (2, 3.3) {$\mathtt{R_4}$};
\end{tikzpicture}
		\caption{
			Final stage of Algorithm~\ref{alg:2D} (lines~\ref{lst:line:find_region_begin}-\ref{lst:line:find_region_end}), when there are no intersections of boundary lines between $\lambda_1^l$ and $\lambda_1^r$.
			Solution $\vlam^*$ belongs to one of $R_1, \ldots, R_4$, the sets resulting from partitioning of $(\lambda_1, \lambda_2) \times \mathbb{R}$ by boundary lines.
		}
		\label{fig:partition}
	\end{figure}
    We show how to find $\vlam^*$ in lines~\ref{lst:line:find_region_begin}-\ref{lst:line:find_region_end} of the algorithm.
	Consider set $S = (\lambda_1^l, \lambda_1^r) \times \mathbb R$.
	Let $\set{R_t}_{t=1}^T$ be the partition a of $S$ into parts lying between the boundary lines.
	Since $S$ doesn't contain boundary intersections and there are $\Oh(n)$ boundaries, the number of parts in the partition is $\Oh(n)$.
	For each $R_t$ we solve the following system of equations over $\lambda_1$ and $\lambda_2$:
	\[\begin{cases}
	\sum\limits_{i=1}^n \h 1_i(\lambda_1, \lambda_2) = c_1, \\
	\sum\limits_{i=1}^n \h 2_i(\lambda_1, \lambda_2) = c_2
	\end{cases}\]
	Since no boundary line crosses $R_t$, it is a subset of some region.
    Therefore, $\h 1_i$ and $\h 2_i$ are linear inside $R_t$, meaning that the above system becomes a system of linear equations.
	If the solution to the system belongs to $R_t$, then we can take it as $\vlam^*$.
	Thus it only remains to show how to find coefficients for the system in $\Oh(n \log n)$ total time.
	
	Recall that in Algorithm~\ref{alg:2D} we assume that $\set{R_t}$ are sorted from bottom to top.
	For $R_1$ we find the linear system coefficients in $\Oh(n)$ time.
	Assume that the $R_t$ are already computed.
	To find the coefficients for next set $R_{t+1}$, notice that $R_t$ and $R_{t+1}$ are separated by some boundary line.
	This line corresponds to some $\hj_i$ and therefore crossing it will change the coefficient of only this $\hj_i$, and the coefficients can be recomputed in $\Oh(1)$ time.
	Since there are $\Oh(n)$ boundary lines, the overall time for recomputation is also $\Oh(n)$.
	Taking sorting of $\set{R_t}$ into account, the total running time is $\Oh(n \log n)$.
\end{proof}
\section{Missing proofs from Section~\ref{SEC:PROJECTION}}
\label{app:proofs}
\begin{proof}[of Proposition~\ref{prop:approx-to-exact-reduction}]
The constraints corresponding to $j \in S_0$ are not tight for the correct guess, otherwise consider a guess which has appropriate signs corresponding to the tight constraints in the optimum solution. Let $\vx^*$ be the optimum without constraints for $j \in S_0$ and let $\vx^*_0$ be the optimum with these constraints. If these two optima are different then we can improve the optimum $\vx^*_0$ with the inequality constraints as follows. Consider vector $\vz =(1 - \alpha)\vx^*_0 + \alpha \vx^*$ for some small $\alpha > 0$. Because the constraints corresponding to $j \in S_0$ are not tight none of these constraints will be violated by this vector for small enough $\alpha$. All other constraints will be satisfied by convexity. However, we have $\|\vz - \vy\| < \|\vx^*_0 - \vy\|$, a contradiction with the optimality of $\vx^*_0$.

Uniqueness of the optimum follows from the uniqueness of projection on a convex body\dtodo{give reference}.
\end{proof}

\iffull
	
\fi

\bibliographystyle{abbrv}
\bibliography{vldb19}

\begin{thebibliography}{10}

\bibitem{Dykstra}
Dykstra's projection algorithm.
\newblock
  \url{https://en.wikipedia.org/wiki/Dykstra\%27s_projection_algorithm}.
\newblock Accessed: 2019-02-13.

\bibitem{giraph}
{A}pache {G}iraph.
\newblock \url{http://giraph.apache.org/}.

\bibitem{AKCV18}
Z.~Abbas, V.~Kalavri, P.~Carbone, and V.~Vlassov.
\newblock Streaming graph partitioning: An experimental study.
\newblock {\em Proceedings of the VLDB Endowment}, 11(11):1590--1603, 2018.

\bibitem{AFKRS14}
A.~Amir, J.~Ficler, R.~Krauthgamer, L.~Roditty, and O.~S. Shalom.
\newblock Multiply balanced k -partitioning.
\newblock In {\em {LATIN} 2014: Theoretical Informatics - 11th Latin American
  Symposium, Montevideo, Uruguay, March 31 - April 4, 2014. Proceedings}, pages
  586--597, 2014.

\bibitem{AG16}
A.~Anandkumar and R.~Ge.
\newblock Efficient approaches for escaping higher order saddle points in
  non-convex optimization.
\newblock In {\em Proceedings of the 29th Conference on Learning Theory, {COLT}
  2016, New York, USA, June 23-26, 2016}, pages 81--102, 2016.

\bibitem{Chi11}
C.~Avery.
\newblock Giraph: Large-scale graph processing infrastructure on {H}adoop.
\newblock {\em Proceedings of the Hadoop Summit. Santa Clara}, 11(3):5--9,
  2011.

\bibitem{ABM16}
K.~Aydin, M.~Bateni, and V.~S. Mirrokni.
\newblock Distributed balanced partitioning via linear embedding.
\newblock In {\em Proceedings of the Ninth {ACM} International Conference on
  Web Search and Data Mining, San Francisco, CA, USA, February 22-25, 2016},
  pages 387--396, 2016.

\bibitem{B99}
D.~P. Bertsekas.
\newblock {\em Nonlinear programming}.
\newblock Athena scientific Belmont, 1999.

\bibitem{BS13}
C.-E. Bichot and P.~Siarry.
\newblock {\em Graph partitioning}.
\newblock John Wiley \& Sons, 2013.

\bibitem{BD03}
S.~Boyd and J.~Dattorro.
\newblock Alternating projections.
\newblock 2003.

\bibitem{BV04}
S.~Boyd and L.~Vandenberghe.
\newblock {\em Convex optimization}.
\newblock Cambridge university press, 2004.

\bibitem{BMSSS16}
A.~Bulu{\c{c}}, H.~Meyerhenke, I.~Safro, P.~Sanders, and C.~Schulz.
\newblock Recent advances in graph partitioning.
\newblock In {\em Algorithm Engineering - Selected Results and Surveys}, pages
  117--158. 2016.

\bibitem{DGRW12}
D.~Delling, A.~V. Goldberg, I.~P. Razenshteyn, and R.~F.~F. Werneck.
\newblock Exact combinatorial branch-and-bound for graph bisection.
\newblock In {\em Proceedings of the 14th Meeting on Algorithm Engineering {\&}
  Experiments, {ALENEX} 2012, The Westin Miyako, Kyoto, Japan, January 16,
  2012}, pages 30--44, 2012.

\bibitem{DKKOPS16}
L.~Dhulipala, I.~Kabiljo, B.~Karrer, G.~Ottaviano, S.~Pupyrev, and A.~Shalita.
\newblock Compressing graphs and indexes with recursive graph bisection.
\newblock In {\em Proceedings of the 22Nd ACM SIGKDD International Conference
  on Knowledge Discovery and Data Mining}, KDD '16, pages 1535--1544, New York,
  NY, USA, 2016. ACM.

\bibitem{D83}
R.~L. Dykstra.
\newblock An algorithm for restricted least squares regression.
\newblock {\em Journal of the American Statistical Association},
  78(384):837--842, 1983.

\bibitem{GHJY15}
R.~Ge, F.~Huang, C.~Jin, and Y.~Yuan.
\newblock Escaping from saddle points - online stochastic gradient for tensor
  decomposition.
\newblock In {\em Proceedings of The 28th Conference on Learning Theory, {COLT}
  2015, Paris, France, July 3-6, 2015}, pages 797--842, 2015.

\bibitem{GLM16}
R.~Ge, J.~D. Lee, and T.~Ma.
\newblock Matrix completion has no spurious local minimum.
\newblock In {\em Advances in Neural Information Processing Systems 29: Annual
  Conference on Neural Information Processing Systems 2016, December 5-10,
  2016, Barcelona, Spain}, pages 2973--2981, 2016.

\bibitem{GLGBG12}
J.~E. Gonzalez, Y.~Low, H.~Gu, D.~Bickson, and C.~Guestrin.
\newblock Powergraph: distributed graph-parallel computation on natural graphs.
\newblock In {\em OSDI}, volume~12, page~2, 2012.

\bibitem{GXDCFS14}
J.~E. Gonzalez, R.~S. Xin, A.~Dave, D.~Crankshaw, M.~J. Franklin, and
  I.~Stoica.
\newblock Graph{X}: graph processing in a distributed dataflow framework.
\newblock In {\em Proceedings of the 11th USENIX conference on Operating
  Systems Design and Implementation}, pages 599--613. USENIX Association, 2014.

\bibitem{GHCIE17}
Y.~Guo, S.~Hong, H.~Chafi, A.~Iosup, and D.~Epema.
\newblock Modeling, analysis, and experimental comparison of streaming
  graph-partitioning policies.
\newblock {\em Journal of Parallel and Distributed Computing}, 108:106--121,
  2017.

\bibitem{JK17}
P.~{Jain} and P.~{Kar}.
\newblock {Non-convex Optimization for Machine Learning}.
\newblock {\em ArXiv e-prints}, Dec. 2017.

\bibitem{KKPPSAP17}
I.~Kabiljo, B.~Karrer, M.~Pundir, S.~Pupyrev, A.~Shalita, Y.~Akhremtsev, and
  A.~Presta.
\newblock Social hash partitioner: {A} scalable distributed hypergraph
  partitioner.
\newblock {\em {PVLDB}}, 10(11):1418--1429, 2017.

\bibitem{KK95}
G.~Karypis and V.~Kumar.
\newblock Metis -- unstructured graph partitioning and sparse matrix ordering
  system, version 2.0.
\newblock Technical report, 1995.

\bibitem{KK98}
G.~Karypis and V.~Kumar.
\newblock Multilevel algorithms for multi-constraint graph partitioning.
\newblock In {\em Proceedings of the 1998 ACM/IEEE Conference on
  Supercomputing}, SC '98, pages 1--13, Washington, DC, USA, 1998. IEEE
  Computer Society.

\bibitem{KL70}
B.~W. Kernighan and S.~Lin.
\newblock An efficient heuristic procedure for partitioning graphs.
\newblock {\em Bell system technical journal}, 49(2):291--307, 1970.

\bibitem{KNS09}
R.~Krauthgamer, J.~Naor, and R.~Schwartz.
\newblock Partitioning graphs into balanced components.
\newblock In {\em Proceedings of the Twentieth Annual {ACM-SIAM} Symposium on
  Discrete Algorithms, {SODA} 2009, New York, NY, USA, January 4-6, 2009},
  pages 942--949, 2009.

\bibitem{LRSST10}
J.~D. Lee, B.~Recht, R.~Salakhutdinov, N.~Srebro, and J.~A. Tropp.
\newblock Practical large-scale optimization for max-norm regularization.
\newblock In {\em Advances in Neural Information Processing Systems 23: 24th
  Annual Conference on Neural Information Processing Systems 2010. Proceedings
  of a meeting held 6-9 December 2010, Vancouver, British Columbia, Canada.},
  pages 1297--1305, 2010.

\bibitem{snapnets}
J.~Leskovec and A.~Krevl.
\newblock {SNAP Datasets}: {Stanford} large network dataset collection.
\newblock \url{http://snap.stanford.edu/data}, June 2014.

\bibitem{LGHC17}
L.~Li, R.~Geda, A.~B. Hayes, Y.~Chen, P.~Chaudhari, E.~Z. Zhang, and
  M.~Szegedy.
\newblock A simple yet effective balanced edge partition model for parallel
  computing.
\newblock {\em Proceedings of the ACM on Measurement and Analysis of Computing
  Systems}, 1(1):14, 2017.

\bibitem{LBGGKH12}
Y.~Low, D.~Bickson, J.~Gonzalez, C.~Guestrin, A.~Kyrola, and J.~M. Hellerstein.
\newblock Distributed graphlab: a framework for machine learning and data
  mining in the cloud.
\newblock {\em Proceedings of the VLDB Endowment}, 5(8):716--727, 2012.

\bibitem{MSMJ03}
N.~Maculan, C.~P. Santiago, E.~Macambira, and M.~Jardim.
\newblock An {O}(n) algorithm for projecting a vector on the intersection of a
  hyperplane and a box in r n.
\newblock {\em Journal of optimization theory and applications},
  117(3):553--574, 2003.

\bibitem{MM14}
K.~Makarychev and Y.~Makarychev.
\newblock Nonuniform graph partitioning with unrelated weights.
\newblock In {\em Automata, Languages, and Programming - 41st International
  Colloquium, {ICALP} 2014, Copenhagen, Denmark, July 8-11, 2014, Proceedings,
  Part {I}}, pages 812--822, 2014.

\bibitem{MLLS17}
C.~Martella, D.~Logothetis, A.~Loukas, and G.~Siganos.
\newblock Spinner: Scalable graph partitioning in the cloud.
\newblock In {\em 33rd {IEEE} International Conference on Data Engineering,
  {ICDE} 2017, San Diego, CA, USA, April 19-22, 2017}, pages 1083--1094, 2017.

\bibitem{MSS14}
H.~Meyerhenke, P.~Sanders, and C.~Schulz.
\newblock Partitioning complex networks via size-constrained clustering.
\newblock In {\em International Symposium on Experimental Algorithms}, pages
  351--363. Springer-Verlag New York, Inc., 2014.

\bibitem{NU13}
J.~Nishimura and J.~Ugander.
\newblock Restreaming graph partitioning: Simple versatile algorithms for
  advanced balancing.
\newblock In {\em Proceedings of the 19th ACM SIGKDD International Conference
  on Knowledge Discovery and Data Mining}, pages 1106--1114, New York, NY, USA,
  2013. ACM.

\bibitem{OYL06}
S.~Ou, K.~Yang, and A.~Liotta.
\newblock An adaptive multi-constraint partitioning algorithm for offloading in
  pervasive systems.
\newblock In {\em 4th {IEEE} International Conference on Pervasive Computing
  and Communications (PerCom 2006), 13-17 March 2006, Pisa, Italy}, pages
  116--125, 2006.

\bibitem{SKK99}
K.~Schloegel, G.~Karypis, and V.~Kumar.
\newblock A new algorithm for multi-objective graph partitioning.
\newblock In {\em Euro-Par '99 Parallel Processing, 5th International Euro-Par
  Conference, Toulouse, France, August 31 - September 3, 1999, Proceedings},
  pages 322--331, 1999.

\bibitem{SH16}
A.~Shalita, B.~Karrer, I.~Kabiljo, A.~Sharma, A.~Presta, A.~Adcock, H.~Kllapi,
  and M.~Stumm.
\newblock Social {H}ash: an assignment framework for optimizing distributed
  systems operations on social networks.
\newblock In {\em Proceedings of the 13th Usenix Conference on Networked
  Systems Design and Implementation}, pages 455--468. USENIX Association, 2016.

\bibitem{SQW15}
J.~Sun, Q.~Qu, and J.~Wright.
\newblock When are nonconvex problems not scary?
\newblock {\em CoRR}, abs/1510.06096, 2015.

\bibitem{Sun18}
J.~Sun, H.~Vandierendonck, and D.~S. Nikolopoulos.
\newblock {VEBO}: A vertex-and edge-balanced ordering heuristic to load balance
  parallel graph processing.
\newblock {\em arXiv preprint arXiv:1806.06576}, 2018.

\bibitem{TGRV14}
C.~E. Tsourakakis, C.~Gkantsidis, B.~Radunovic, and M.~Vojnovic.
\newblock {FENNEL:} streaming graph partitioning for massive scale graphs.
\newblock In {\em Seventh {ACM} International Conference on Web Search and Data
  Mining, {WSDM} 2014, New York, NY, USA, February 24-28, 2014}, pages
  333--342, 2014.

\bibitem{UB13}
J.~Ugander and L.~Backstrom.
\newblock Balanced label propagation for partitioning massive graphs.
\newblock In {\em Sixth {ACM} International Conference on Web Search and Data
  Mining, {WSDM} 2013, Rome, Italy, February 4-8, 2013}, pages 507--516, 2013.

\bibitem{VLSG17}
S.~Verma, L.~M. Leslie, Y.~Shin, and I.~Gupta.
\newblock An experimental comparison of partitioning strategies in distributed
  graph processing.
\newblock {\em Proceedings of the VLDB Endowment}, 10(5):493--504, 2017.

\bibitem{WN99}
S.~Wright and J.~Nocedal.
\newblock Numerical optimization.
\newblock {\em Springer Science}, 35(67-68):7, 1999.

\end{thebibliography}

\end{document}